\documentclass[a4paper,UKenglish,thm-restate]{lipics-v2021}

\usepackage{bussproofs}
\usepackage{logicproof}[2014/03/20]
\usepackage{amssymb}
\usepackage{amsmath}
\usepackage{amsfonts}
\usepackage{wrapfig}
\usepackage{hyperref}
\usepackage{pdfpages}
\usepackage{cleveref}
\usepackage{enumitem}
\PassOptionsToPackage{table,dvipsnames}{xcolor}
\usepackage{xcolor}
\usepackage{colortbl}  
\usepackage{tabularx}

\hypersetup{
    colorlinks=true,
    linkcolor=blue,   
    citecolor=blue,   
    urlcolor=blue     
}

\ifpdf
  \usepackage{tikz}
  \usetikzlibrary{decorations.pathreplacing,decorations.markings,calligraphy}
  \usetikzlibrary{arrows,automata}
  \usetikzlibrary{positioning}
 \else
  \AtBeginDocument{%
    \RemoveFromHook{shipout/firstpage}[hyperref]%
    \RemoveFromHook{shipout/before}[hyperref]%
  }
\fi

\bibliographystyle{plain}
\usepackage{float}
\usepackage{microtype}
\usepackage{graphicx}
\usepackage{algorithm}
\usepackage{algpseudocode} 
\algblockdefx[MATCH]{Match}{EndMatch}%
  [1]{\textbf{match} #1 \textbf{with}}%
  {\textbf{end match}}
\algdef{SxN}[CASE]{Case}{EndCase}%
  [1]{{#1}:}
\algdef{SxN}[IFTHEN]{IfThen}{EndIfThen}%
  [2]{\textbf{if}~{#1}~\textbf{then} {#2}}
\algdef{SxN}[FORONELINE]{ForOne}{EndForOne}%
  [2]{\textbf{for}~{#1}~\textbf{do} {#2~\textbf{end for}}}
\algdef{SxN}[CHOOSE]{Choose}{EndChoose}%
  {\textbf{choose non-deterministically}}
\algdef{SxN}[WHEN]{When}{EndWhen}%
  [2]{\textbf{when}~{#1}~\textbf{do}~{#2}}
\newcommand{\Output}{\textsf{output} }
\newcommand{\goto}{\textsf{goto} }
\newcommand{\true}{\textsf{true}}
\newcommand{\false}{\textsf{false}}

\hideLIPIcs  
\nolinenumbers

\renewcommand{\frac}[1]{\{ #1 \}}

\newcommand{\Gg}{\mathcal{G}}

\newcommand{\cA}{\mathcal{A}}
\newcommand{\cB}{\mathcal{B}}
\newcommand{\cN}{\mathcal{N}}
\newcommand{\cM}{\mathcal{M}}

\newcommand{\Nat}{\mathbb{N}}

\newcommand{\xra}[1]{\xrightarrow{#1}}

\newcommand{\release}[1]{\mathsf{release}(#1)}
\newcommand{\A}{\mathcal{A}}
\newcommand{\leqlt}{\mathrel{\triangleleft}}
\newcommand{\nleqlt}{\mathrel{\not\triangleleft}}



\newcommand{\init}{\mathsf{init}}








\newcommand{\prog}{\mathsf{prog}}

\newcommand{\TA}{\textrm{TA}}
\newcommand{\GTA}{\textrm{GTA}}
\newcommand{\GTAs}{\textrm{GTAs}}

\newcommand{\GTAfull}{generalized timed automata}
\newcommand{\GTAp}{\textrm{LiveRGTA}}


\newcommand{\tool}{\textsc{Tempora}}

\newcommand{\MTL}{\sf{MTL}}
\newcommand{\LTL}{\sf{LTL}}
\newcommand{\CTL}{\sf{CTL}}
\newcommand{\TPTL}{\sf{TPTL}}

\newcommand{\MITL}{\sf{MITL}}

\newcommand{\mitl}{\MITL}   

\newcommand{\Prop}{\mathsf{Prop}}

\newcommand{\MTLfp}{\MITL^{+p}}
\newcommand{\detMTL}{\mathsf{det}\MTLfp}

\newcommand{\Xh}{X^H}
\newcommand{\Xf}{X^F}

\newcommand{\X}{\mathop{\mathsf{X}\vphantom{a}}\nolimits}
\newcommand{\Y}{\mathop{\mathsf{Y}\vphantom{a}}\nolimits}
\newcommand{\YP}{\mathop{\mathsf{YP}\vphantom{a}}\nolimits}
\newcommand{\U}{\mathbin{\mathsf{U}}}
\renewcommand{\S}{\mathbin{\mathsf{S}}}

\newcommand{\Release}{\mathbin{\mathsf{R}}}
\newcommand{\F}{\mathop{\mathsf{F}\vphantom{a}}\nolimits}
\newcommand{\G}{\mathop{\mathsf{G}\vphantom{a}}\nolimits}

\newcommand{\Next}{\X}
\newcommand{\Since}{\S}
\newcommand{\Eventually}{\F}
\newcommand{\Always}{\G}
\newcommand{\Past}{\mathop{\mathsf{P}\vphantom{a}}\nolimits}
\newcommand{\History}{\mathop{\mathsf{H}\vphantom{a}}\nolimits}

\newcommand{\Red}[1]{\textcolor{red}{#1}}

\newcommand{\AY}{\cA_{\Y}}
\newcommand{\AS}{\cA_{\S}}
\newcommand{\ASfirst}{\AS^{\mathsf{first}}}
\newcommand{\ASlast}{\AS^{\mathsf{last}}}

\newcommand{\ASgenI}{\cA_{\S_I}^{\mathsf{gen}}}
\newcommand{\AX}{\cA_{\X}}
\newcommand{\AU}{\cA_{\U}}
\newcommand{\AUfirst}{\AU^{\mathsf{first}}}
\newcommand{\AUlast}{\AU^{\mathsf{last}}}
\newcommand{\AUgenI}{\cA_{\U_I}^{\mathsf{gen}}}
\newcommand{\AisatX}{\cA^{isat}_{\X_I}}
\newcommand{\AisatU}{\cA^{isat}_{\U_I}}

\newcommand{\xinit}{x_{\sf{init}}}
\newcommand{\Ainit}{\cA_{\sf{init}}}
\newcommand{\xlast}{x_{\sf{last}}}
\newcommand{\Alast}{\cA_{\sf{last}}}
\newcommand{\xnext}{x_{\sf{next}}}
\newcommand{\Anext}{\cA_{\sf{next}}}
\newcommand{\Aisatp}[1]{\cA^{isat}_{#1}}

\newcommand{\bad}{\sf{Bad}}

\newcommand{\releasedclocks}{\textsc{releasedclocks}}
\newcommand{\CurrentReleasedClocks}{\textsc{CurrentReleasedClocks}}
\newcommand{\labels}{{\textsc{labels}}}
\newcommand{\countint}{{\textsc{count}}}
\newcommand{\Roots}{{\textsc{Roots}}}
\newcommand{\Active}{{\textsc{Active}}}
\newcommand{\Todo}{{\textsc{Todo}}}

\newcommand{\released}{{\mathsf{released}}}
\newcommand{\suc}{{\textsc{succ}}}
\newcommand{\dfsnum}{{\mathsf{dfsnum}}}
\newcommand{\curr}{{\mathsf{current}}}
\newcommand{\nxt}{{\mathsf{next}}}

\newcommand{\Tchecker}{\textsc{Tchecker}}
\newcommand{\MightyL}{\textsc{MightyL}}
\newcommand{\Opaal}{\textsc{OPAAL}}
\newcommand{\UPPAAL}{\textsc{UPPAAL}}
\newcommand{\LTSmin}{\textsc{LTSmin}}


\newcommand{\Acacia}{\mathsf{Acacia}}
\newcommand{\Reqgrant}{\mathsf{Req\text{-}grant}}
\newcommand{\Spotex}{\mathsf{Spot\text{-}ex}}

\renewcommand{\true}{\mathsf{true}}
\renewcommand{\false}{\mathsf{false}}
\newcommand{\argone}{b_1}
\newcommand{\argtwo}{b_2}
\newcommand{\outv}{\mathsf{out}}
\newcommand{\state}{\mathsf{st}}
\newcommand{\pre}[2]{{#1}.{#2}}
\newcommand{\post}[2]{{#1}.#2^{\bullet}}
\newcommand{\now}[2]{{#1}.{#2}}
\newcommand{\postM}[1]{#1^{\bullet}}
\newcommand{\Absolut}[1]{|#1|}

\newcommand{\req}{\mathsf{req}}
\newcommand{\wait}{\mathsf{wait}}
\newcommand{\cs}{\mathsf{cs}}
\newcommand{\eat}{\mathsf{eating}}

\newcommand{\gtasystem}{\mathcal{A}}
\newcommand{\finiteabs}{\mathcal{G}}

\title{ \textsc{TEMPORA}: Efficient Verification of Metric Temporal Properties with Past in Pointwise Semantics}

\titlerunning{Efficient Verification of $\MTL$ with Past in Pointwise Semantics}

\author{S Akshay}{Department of CSE, Indian Institute of Technology Bombay, Mumbai, India}
{akshayss@cse.iitb.ac.in}{https://orcid.org/0000-0002-2471-5997}{}

\author{Prerak Contractor}{Department of CSE, Indian Institute of Technology Bombay, Mumbai, India} 
{prerak@cse.iitb.ac.in}{https://orcid.org/0009-0004-7452-9940}{}

\author{Paul Gastin}{Université Paris-Saclay, ENS Paris-Saclay, CNRS, LMF, 91190,
	Gif-sur-Yvette, France \and CNRS, ReLaX, IRL 2000, Siruseri, 
  India}{paul.gastin@ens-paris-saclay.fr}{https://orcid.org/0000-0002-1313-7722}{}

\author{R Govind}{The Institute of Mathematical Sciences, Chennai, India
  \and Homi Bhabha National Institute, Anushaktinagar, Mumbai, India
  \and CNRS, ReLaX, IRL 2000, Siruseri, India} 
{govind@imsc.res.in}{https://orcid.org/0000-0002-1634-5893}{}

\author{B Srivathsan}{Chennai Mathematical Institute, India
  \and CNRS, ReLaX, IRL 2000, Siruseri, 
  India} {sri@cmi.ac.in}{https://orcid.org/0000-0003-2666-0691}{}

\authorrunning{S. Akshay, P. Contractor, P. Gastin, R. Govind and B. Srivathsan}
\Copyright{S. Akshay, P. Contractor, P. Gastin, R. Govind and  B. Srivathsan}

\keywords{Real-time systems, Timed automata, Liveness, Event-clock automata, Clocks, Timers, Verification, Zones, Simulations}

\category{}
 \ccsdesc[500]{Theory of computation~Timed and hybrid models}
 \ccsdesc[300]{Theory of computation~Quantitative automata}
 \ccsdesc[100]{Theory of computation~Logic and verification}

\relatedversion{} 

\supplement{}

\begin{document}
\maketitle

\begin{abstract}
  Model checking for real-timed systems is a rich and diverse topic.  Among the different
  logics considered, Metric Interval Temporal Logic ($\mitl$) is a powerful and commonly
  used logic, which can succinctly encode many interesting timed properties especially
  when past and future modalities are used together.  In this work, we develop a new
  approach for $\mitl$ model checking in the pointwise semantics, where our focus is on
  integrating past and maximizing determinism in the translated automata.

  Towards this goal, we define synchronous networks of timed automata with shared
  variables and show that the past fragment of $\mitl$ can be translated in linear time to
  synchronous networks of deterministic timed automata.  Moreover determinism can be
  preserved even when the logic is extended with future modalities at the top-level of the
  formula.  We further extend this approach to the full $\mitl$ with past, translating it
  into networks of generalized timed automata (\GTA) with future clocks (which extend
  timed automata and event clock automata). 
  We present an SCC-based liveness algorithm to analyse \GTA.
  We implement our translation in a prototype tool which handles both finite and infinite
  timed words and supports past modalities.  Our experimental evaluation demonstrates that
  our approach significantly outperforms the state-of-the-art in $\mitl$ satisfiability
  checking in pointwise semantics on a benchmark suite of 72 formulas.  Finally, we
  implement an end-to-end model checking algorithm for pointwise semantics and demonstrate
  its effectiveness on two well-known benchmarks.
\end{abstract}

\section{Introduction}\label{sec:intro}

Linear Temporal Logic ($\LTL$) is a standard specification formalism.  A central
ingredient in the model-checking approach is a translation from $\LTL$ to automata.  The
study of algorithmic techniques to optimize $\LTL$-to-automata translations spans almost
four decades -- see~\cite{TsayVardi2021} for a recent survey of this topic.  Many of the
theoretical techniques have also been incorporated in practical model-checking
tools~\cite{Spin,NuSMV,Spot,PAT,OPAAL}.  For model-checking systems with timing
constraints, the de-facto automaton model would be \emph{timed automata}
(TA)~\cite{AD94,AlurD90}.  State-of-the-art timed automata
tools~\cite{UPPAAL-1,TChecker,PAT,OPAAL} implement analysis methods for models represented
as (networks of) timed automata (see Section 7 of~\cite{handbook-mc} for the practical
applications of these tools).  Properties checked on timed models are typically
reachability, liveness, fragments of timed $\CTL$~\cite{UPPAAL-1} or (untimed) $\LTL$.  On
the other hand, there is a very large body of foundational work on \emph{timed
logics}~\cite{timed-logics-Raskin-thesis,HirshfeldR04,SchobbensRH02}.  The two main timed
logics are Metric Temporal Logic~\cite{Koymans90} ($\MTL$) and Timed Propositional
Temporal Logic~\cite{tptl-AlurH94} ($\TPTL$).  In order to use these logics for real-time
model-checking, we need an efficient translation from timed logics to timed automata.  The
foundations for $\MTL$-to-TA (resp.  $\TPTL$-to-TA) have been laid out
in~\cite{AFH-MITL-J} (resp.~\cite{tptl-AlurH94}).

Broadly, there are two ways to interpret $\MTL$ formulae: over continuous
(dense/super-dense) timed signals~\cite{AFH-MITL-J,MalerNP06} or pointwise timed
words~\cite{AlurH90,Wilke94,MightyL}.  In \emph{continuous (state-based) semantics}, the
formula is interpreted over \emph{signals}, which are functions that associate the
valuation of the propositions to \emph{all} timestamps.  Over the \emph{pointwise
semantics}, the formula is interpreted over sequences of events (timed words), each
consisting of a timestamp and a set of propositions.  Thus, in the continuous semantics,
the system is under observation at all times, while in the pointwise semantics, it is
observed only when it executes a discrete transition.  However, as noted by Ferr\'ere et
al.~\cite{jacm/FerrereMNP19}, in the continuous semantics, arbitrary $\MTL$ formulae can
be simplified to the $[0, \infty]$ fragment, while as pointed out in~\cite{AGGS-CONCUR24},
this simplification is not possible in the pointwise semantics.
In this work, we only consider pointwise semantics as our focus is on models represented
in tools like UPPAAL, PAT, TChecker whose behaviours are given as timed words, as in the
classical timed automata model of Alur and Dill~\cite{AD94}.

In the pointwise semantics, it is known that $\MTL$ model checking is undecidable over
infinite words, and over finite words it is decidable but the complexity is
Ackermanian~\cite{Survey-AlurH91,MTL-OW07J,MTL-OW07C,OuaknineW06}.
Thus, the literature often considers the more efficient $\MITL$ fragment, where
$\mathsf{Until}$ and $\mathsf{Since}$ operators must use non-punctual intervals (either
$[0,0]$ or non-singleton intervals).  The classical approach for model checking for
$\MITL$ involves translating $\MITL$ formula into timed automata (TA) and then using
state-of-the-art TA tools such as UPPAAL~\cite{UPPAAL-1,UPPAAL-2} or
TChecker~\cite{TChecker}.  The best known tool that accomplishes this translation
in the pointwise semantics, called \MightyL~\cite{MightyL}, goes via alternating 1-clock
automata and compiles to timed automata.  Recently, an extension of timed automata called
\emph{generalized timed automata} (GTA) was introduced in~\cite{AGGJS-CAV23}.  GTA are
equipped with richer features than TA,
and enable a different translation from $\MITL$ to GTA, which can
then be analyzed using a GTA analysis tool from~\cite{AGGJS-CAV23}.  The theoretical
foundations for such an $\MITL$-to-GTA translation which goes via transducers was presented
in~\cite{AGGS-CONCUR24}, where it was argued that the $\MITL$-to-GTA translation has clear
theoretical benefits over the $\MITL$ to 1-clock Alternating Timed Automata
translation~\cite{BrihayeEG13,BrihayeEG14} on which \MightyL\ is based on.
However, this latter approach was not implemented.  Moreover, neither of these approaches
considered Past modalities, and were restricted only to Future.

On the other hand, from the classical work of Maler et al~\cite{MalerNP05}, in the signal
semantics, it is known that having only Past modalities leads to efficient algorithms and
in particular it gives a translation to {\em deterministic} (timed) automata albeit over
signals.  Our first question is whether the same holds under the pointwise semantics, and
whether determinism can be extended beyond just the past fragment.  Next, if we cannot
guarantee determinism, can we reduce non-deterministic branching?  Indeed, even if it does
not change the expressiveness, several works have highlighted the usefulness and
succinctness of combining Past and Future operators, both in the untimed
setting~\cite{LPZ85,LTL-past-Markey,ALP24} and in the timed
setting~\cite{KKP11,jacm/FerrereMNP19}.

In this work, we address these questions both for past and future in the pointwise
semantics and to build the foundations for a new efficient model checking algorithm for
$\MITL$ with past that exploits determinism as much as possible.  We show that for the
past fragment of $\MITL$, we can obtain a linear-time translation to networks of
deterministic timed automata.  Further, even when we have to handle future operators,
where non-determinism cannot be avoided, we propose techniques to reduce non-deterministic
branching in the resulting automata, using various ideas including sharing components,
automata, clocks and predictions.  We start by defining a formal semantics for a model of
synchronous network of timed automata with shared variables, where each component of the
network owns some clocks which only it can reset but can be tested by all components.
Moreover it allows for {\em instantaneous programs on transitions} which have sequences of
tests and updates, like in~\cite{AGGS-CONCUR24}.  Our main contributions are the
following:
\begin{enumerate}
  \item We identify a fragment of metric temporal logic with past and future for which we
  provide a linear time translation to networks of deterministic timed automata.  In
  this fragment, that we call $\detMTL$, the outermost
  temporal modalities are future (next, until) and can even have punctual intervals, while
  inner temporal modalities should be past: Since with non-punctual interval or Yesterday.
  This fragment already subsumes many safety properties, extending e.g., the
  $\mathsf{G(pLTL)}$ fragment defined in~\cite{AGGMM23}.
  
  \item We present a new linear-time algorithm to translate $\detMTL$ formula to networks
  of deterministic timed automata.  This algorithm is in two steps.  First we show how the
  outermost temporal modalities can be handled with respect to \emph{initial} satisfiability.
  Next we show how the past temporal modalities are handled for general satisfiability.
  We obtain a synchronous network of deterministic automata with shared variables that
  captures the semantics.  We start with Yesterday allowing an arbitrary interval
  constraint.  Then, we deal with Since operators with one-sided intervals of the form
  $[0,a]$ or $[a,\infty)$.  For these simpler cases, we construct networks where each
  automaton has at most 1 clock and 2 states, but critically uses the sharing feature
  across components.  Finally, the automaton for Since with a general two-sided interval
  [a,b] is the most complicated part of the construction and results in a network whose
  one component uses $\mathcal{O}(a/(b\!-\!a))$ clocks.  Our construction works for both
  finite and infinite words.
  
  \item Next, we extend this algorithm to handle future modalities.  To achieve this we
  move from networks of timed automata to networks of generalized timed automata (\GTA)
  ~\cite{AGGJS-CAV23}, which have future clocks in addition to usual (history) clocks.  We
  obtain a translation from general $\MTLfp$ (with Past and allowing punctual intervals at
  the outermost level) to networks of $\GTA$ with shared variable.  While non-determinism
  cannot be avoided here, we show that we can reduce non-deterministic branching in two
  ways: first by using shared variables to share predictions, and second, by using
  non-Zenoness assumption and the semantics of future clocks to obtain a 2-state automaton
  for Until (which results in an exponential improvement in running time for the model
  checking algorithm).  Again our model checking algorithm works for infinite words and is
  easily adaptable for finite timed words.

  \item Since our model checking approach ultimately reduces to checking reachability and
  liveness for \GTA s, we propose an improved liveness-checking algorithm that extends
  Couvreur's SCC algorithm~\cite{Couvreur} (which is the state-of-the-art for timed
  automata) to the setting of \GTA s, marking the first adaptation of Couvreur's procedure
  to automata with event clocks or diagonal constraints.

  \item Finally, we implement our algorithms in a prototype tool, called \tool, built on
  top of the open-source timed automata library \Tchecker.  \tool\ supports satisfiability
  and model checking for $\detMTL$ and general $\MTLfp$, for both finite and infinite
  words, in the pointwise semantics.
  We compare against $\MightyL$ which translates $\MITL$ formula to timed automata (and
  uses $\UPPAAL$ to check satisfiability with finite words and $\Opaal$ for infinite
  words).  Our comparison on a suite of 72 benchmarks (22 new and 50 taken from the
  literature) shows that our pipeline (\tool+\Tchecker) significantly outperforms
  $\MightyL$ pipeline; it is faster in all benchmarks for finite words and all but 2 for
  infinite words, and most often explores a smaller state space (in zones).  We also
  implement the theoretical translation to $\GTA$ presented in~\cite{AGGS-CONCUR24} and
  demonstrate an order of magnitude improvement.  Finally, we also implement an end-to-end
  model checking algorithm for $\MTLfp$ over the pointwise semantics and evaluate our
  model-checking pipeline on two models: Fischer's mutual exclusion algorithm and the
  Dining Philosophers problem.
\end{enumerate}

\textbf{Related works.} 
Model checking for real-time systems has been extensively studied,
with Metric Interval Temporal Logic (MITL) emerging as a widely used specification
language due to its expressiveness for capturing timed properties.  A central challenge
lies in efficiently translating MITL formulas into automata-based representations,
enabling algorithmic verification.  For a comprehensive overview of the topic, we refer
the reader to the survey by Bouyer~\cite{Bouyer09}.

Over continuous/signal semantics, we already discussed several works that focus on
obtaining networks of timed automata for various fragments of MITL with and without past
including~\cite{jacm/FerrereMNP19,MalerNP05,MalerNP06}.  Most of these employ construction
of compositional testers or transducers for timed modalities, but do not focus on sharing,
e.g., of networks and subformula. 
Despite the difference in the semantics and hence in the reasoning, the automata
constructed, e.g., the event-recorder automaton in~\cite{MalerNP05} for the general Since
operator, are useful even for our setting.  In this sense, our work can be seen as an
adaptation of these classical works to the pointwise setting. A different purely logic
based approach was considered in~\cite{MBRP20}, where the timed automata and $\MITL$
formula, still with signal semantics, are translated to an intermediate logical language,
which is then encoded into an SMT solver.  This work also included synchronization
primitives and was implemented in a Java tool.

Coming back to pointwise semantics, in~\cite{AFH-MITL-J} that introduced $\MITL$, the
authors propose the first such translation, which is known to be notoriously complicated,
spanning several pages, and acknowledged to be difficult in the follow-up works by
different authors~\cite{MalerNP06}.  The current state-of-the-art for $\MITL$ model
checking is the procedure proposed by Brihaye et al.\ in~\cite{MightyL}, where they
translate $\mitl$ formulae to a network of timed automata, which can be plugged in a
standard timed automata tool implementing non-emptiness.  Their procedure and the
associated tool, called \MightyL~\cite{MightyL} is based on a series of works that study a
translation of $\MITL$ to timed automata~\cite{BrihayeEG13,BrihayeEG14}, via alternating
timed automata~\cite{OuaknineW06}.  \cite{DBLP:conf/fossacs/BouyerSV25} studies a
zone-based algorithm directly for 1-clock alternating timed automata, and focusses on a
fragment of $\MTL$.

Several tools support $\MITL$ model checking under continuous semantics,
including~\cite{mitlbmc,nuxmv0,nuxmv1,Zot,stl,MBRP20}.  Under pointwise semantics, to the
best of our knowledge, MightyL~\cite{MightyL} is the only tool that can handle the full
future fragment of MITL (even if it only solves satisfiability and does not handle past
operators).  The other works (to our knowledge) that offer tool support are for restricted
fragments such as $\MITL_{(0.\infty)}$~\cite{BulychevDLL14}, or $\MITL$ over untimed
words~\cite{ZhouMB16}.

\textbf{Structure of the paper.} In \Cref{sec:sync-network-TA}, we formalize
the notion of a synchronous network of timed automata with shared variables, whose
transitions are instantaneous programs, and define the model checking problem in terms of
such networks.  In \Cref{sec:detMTL} we describe the $\detMTL$ logic followed by
the translation to synchronous networks of deterministic timed automata in
\Cref{sec:fastmtl-to-ta}.  In \Cref{sec:future}, we explain how to integrate
future modalities to get a translation from $\MTLfp$ to generalized timed automata with
future clocks.  
\Cref{sec:gta-liveness} describes our adaptation of Couvreur's SCC algorithm to 
the setting of GTAs.
\Cref{sec:experiments} contains our experimental evaluation and
comparisons and we end with a brief conclusion in \Cref{sec:conclusion}.  

To improve the flow and readability of the paper, we have moved some proofs and technical
details to the Appendix.  Moreover, some benchmark details and extended results are
presented in the Appendix.

\section{Synchronous network of timed automata with shared variables}
\label{sec:sync-network-TA}

We define the model of network of timed systems that we use in this paper.  The
presentation is different from previous
works~\cite{DBLP:conf/sfm/BehrmannDL04,DBLP:conf/atva/BouyerHR06,DBLP:conf/concur/GovindHSW19,AGGS-CONCUR24},
and is geared towards providing a clean and concise explanation of our logic-to-automata
translation.  Before giving the formal syntax, we highlight the main aspects in our
definition.

Each component in the network is an automaton with a finite set of locations, real valued variables
(called \emph{clocks}) and finitely valued variables (mainly boolean variables and the
\emph{state} variable ranging over the finite set of locations).
The semantics of the network is \emph{synchronous}: either time elapses in all components
of the network (all clocks are increased by the same amount), or all components execute
simultaneously an instantaneous transition.
The components of the system communicate via \emph{shared variables} (clocks, states, boolean 
variables, \ldots). The paradigm is concurrent-read owner-write (CROW). Hence, each 
variable is \emph{owned} by a unique component of the system. A component may read all 
variables of the system, but it may only modify its owned variables.

A \emph{transition} of a component $\cA$ is of the form $(\ell,\prog,\ell')$ where
$\ell,\ell'$ are locations of $\cA$ and $\prog$ is an \emph{instantaneous} program, which
is a sequence of \emph{tests} and \emph{updates}. 
For instance, if $x,y$ are clocks
owned by $\cA$ and $\outv$ is a boolean variable owned by $\cA$, then a purely local
program could be $\now\cA{x}\in(3,5];\now\cA{y}:=2;\now\cA\outv:=0$ where
$\now\cA{x}\in(3,5]$ is a test which succeeds if the current value of clock $x$ is within
the interval, and $\now\cA{y}:=2$ and $\now\cA\outv:=0$ are updates.  Tests may also use
variables owned by other components.  For instance, if component $\cB$ owns clock $z$,
then a program of $\cA$ could be of the form
$$
\now\cA{x}\in(3,5] \wedge \post\cB\outv=1 \wedge \pre\cB{z}>2 
\wedge \post\cB{\state}=\ell_{1}; \now\cA{y}:=2; \now\cA\outv:=0
$$
where $\now\cB\state$ means the state of $\cB$.
Since we use a synchronous semantics, all components of the network simultaneously
take a transition and execute the associated programs. By default, a variable such as 
$\pre\cA{x}$ or $\pre\cB{z}$ used in a test refers to the value of the variable 
\emph{before} the synchronous transition is taken.
It is sometimes convenient to refer to the value a variable will have \emph{after} the 
synchronous transition is taken. We do so by writing $\post\cB\outv$ or $\post\cB\state$.

We give now the formal syntax and semantics for a network of timed systems. 
A concrete example follows in \Cref{fig:example1} and \Cref{ex:network-formula}.

\begin{definition}[Timed automaton with shared variables]\label{def:ta}
  A timed automaton with shared variables is a tuple
  $\cA=(Q,X,V,\Delta,\init)$ where $\cA$ is the name,
  $Q$ is a finite set of locations over which ranges the state variable $\now\cA\state$,
  $X$ is a finite set of clock variables \emph{owned} by $\cA$, 
  $V$ is a finite set of boolean variables owned by $\A$, 
  $\Delta$ is the finite set of transitions, 
  and $\init$ is an initial condition for the variables owned by $\cA$.
  
  A transition is a triple $(\ell,\prog,\ell')$ where $\ell,\ell'\in Q$ are locations and 
  $\prog$ is an instantaneous program. 
  A program is a sequence of tests and updates.
  An atomic update for a boolean variable $v\in V$ owned by $\cA$ has the form
  $\now\cA{v}:=c$ (or simply $v:=c$ when $\cA$ is clear from the context) with
  $c\in\{0,1\}$.
  An atomic update for a clock variable $x\in X$ owned by $\cA$ has the form $\now\cA{x}:=c$
  (or simply $x:=c$) with $c\in\overline{\mathbb{N}}=\mathbb{N}\cup\{+\infty\}$.
  We assume that each variable is updated at most once in a program.
  An atomic test for a clock variable $x\in X$ owned by $\cA$ has the form $\now\cA{x}\in I$
  (or simply $x\in I$) where $I$ is an integer bounded interval whose end points come from
  $\overline{\mathbb{N}}$ (e.g., $[0,3)$ or
  $[2,+\infty)$ or $(2,5]$, etc).
  $\cA$ may also test a clock variable owned by another component with $\pre\cN{x}\in I$
  or $\post\cN{x}\in I$ where $\cN$ is a name and $I$ is an integer bounded interval.
  Similarly, an atomic test for a boolean variable $b$ owned by $\cA$ has the form $b=0$ or
  $b=1$ (also written $b$ and $\neg b$) and if the variable is not owned by $\cA$ we use
  $\pre\cN{b}$ or $\post\cN{b}$.
  We write $\pre\cN\state=\ell$, $\pre\cN\state\neq\ell$, $\post\cN\state=\ell$ and
  $\post\cN\state\neq\ell$ for testing whether component $\cN$ is in location $\ell$ or not.
  Finally, a test in a program is a boolean combination of atomic tests.
  
  The initialisation $\init$ of $\cA$ is a conjunction of atomic tests, one for each
  variable of $\cA$, including the state variable.  For instance, $\init$ could be
  $\state=\ell_{0} \wedge y=+\infty \wedge \outv=0$.
\end{definition}

\begin{definition}[Synchronous network of timed automata]\label{def:networks-ta}
  A network is a tuple $\overline{\cA}=(\cA_{1},\ldots,\cA_n)$ where each
  $\cA_i=(Q_i,X_i,V_i,\Delta_i,\init_i)$ is a timed automaton with shared variables.
  The network is \emph{closed} if the references to external variables in programs are all of the 
  form $\pre{\cA_i}{\state}$ or $\post{\cA_i}{\state}$, 
  $\pre{\cA_i}{x}$ or $\post{\cA_i}{x}$ with $x\in X_{i}$, 
  $\pre{\cA_i}{v}$ or $\post{\cA_i}{v}$ with $v\in V_{i}$,
  for some $1\leq i\leq n$.
  The network is \emph{partially ordered} if there is a partial order $\preceq$ on its 
  components such that whenever component $\cA$ refers to a variable owned by component 
  $\cB$, we have $\cB\preceq\cA$. In this paper, we will only use partially ordered 
  networks.
\end{definition}

\begin{figure}
  \centering
  \raisebox{2mm}{\includegraphics[page=1,scale=0.9]{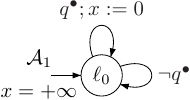}}\hfil
  \includegraphics[page=2,scale=0.9]{gastex-figures-pics.pdf}
  
  \medskip
  \includegraphics[page=3,scale=0.9]{gastex-figures-pics.pdf}
  \hfil
  \raisebox{8mm}{\includegraphics[page=4,scale=0.9]{gastex-figures-pics.pdf}}
  \caption{Synchronous Network of timed automata with shared variables.  The model $\cM$
  is not drawn and owns the boolean variables $\Prop=\{p,q,r\}$.  To lighten the figures
  above, we simply write $\postM{p}$, $\postM{q}$ and $\postM{r}$ instead of
  $\post\cM{p}$, $\post\cM{q}$ and $\post\cM{r}$.
  The automata $\cA_1$, $\cA_2$, $\cA_3$ own respectively the clock variables $x$, $y$ and
  $z$.  The names and initial conditions are depicted with an incoming arrow to a
  location.  For instance, the initial condition of $\cA_2$ is $\state=\ell_{0}\wedge
  y=+\infty$.  Notice that $\cA_3$ refers to the post values of the state
  $\post{\cA_2}\state$ and clock $\post{\cA_2}{y}$ of $\cA_2$.  }
  \label{fig:example1}
\end{figure}

We describe now the semantics of a network.  A \emph{configuration} $C$ of
$\overline{\cA}$ is a valuation of all variables of the network: for all $1\leq i\leq n$,
we have $C(\now{\cA_i}\state)\in Q_{i}$,
$C(\now{\cA_i}x)\in\overline{\mathbb{R}}_+=\mathbb{R}_+\cup\{+\infty\}$ for $x\in X_{i}$, and
$C(\now{\cA_i}v)\in\{0,1\}$ for $v\in V_{i}$.
A configuration is \emph{initial} if it satisfies all initial conditions of its 
components. For instance, if $\init_2$ is 
$\state=\ell_{0} \wedge y=+\infty \wedge \outv=0$, then it is satisfied at $C$ if
$C(\now{\cA_2}\state)=\ell_{0}$, $C(\now{\cA_2}y)=+\infty$ and $C(\now{\cA_2}\outv)=0$.

The network is \emph{synchronous}, i.e., all components move simultaneously, and has two
types of transitions: time elapse and
discrete.  A \emph{time elapse} by some non-negative real number $\delta\geq0$ is written
$C\xra{\delta}C'$ where $C'$ coincides with $C$ on all variables other than clocks, and the
value of each clock is advanced by the same quantity $\delta$:
$C'(\now{\cA_i}x)=C(\now{\cA_i}x)+\delta$ for all $1\leq i\leq n$ and $x\in X_i$ (with
$+\infty+\delta=+\infty$).

A \emph{discrete} transition occurs in the network when each component executes
simultaneously a transition.  Let $\overline{t}=(t_{1},\ldots,t_{n})$ with
$t_{i}=(\ell_i,\prog_i,\ell'_i)\in\Delta_i$ for all $1\leq i\leq n$.  Let $C,C'$ be two
configurations.  There is a discrete transition $C\xra{\overline{t}}C'$ if
the following holds:
\begin{itemize}[left=0.1em]
  \item  $C(\now{\cA_i}\state)=\ell_{i}$ and $C'(\now{\cA_i}\state)=\ell'_{i}$
   for all $1\leq i\leq n$,

  \item  for all $i$ and $x\in X_{i}\cup V_{i}$, we have 
  $C'(\now{\cA_i}{x})=c$ if $\prog_i$ contains an update $\now{\cA_i}{x}:=c$ and
  $C'(\now{\cA_i}{x})=C(\now{\cA_i}{x})$ otherwise (recall that each variable is updated 
  at most once),

  \item  all tests occurring in the programs evaluate to true:
  \begin{itemize}[leftmargin=0pt, labelindent=0pt, itemindent=0pt]
    \item for all $i$ and clock $x\in X_{i}$ and integer bounded interval $I$, an atomic
    test $\pre{\cA_i}{x}\in I$ (resp.\ $\post{\cA_i}{x}\in I$) evaluates to true if
    $C(\now{\cA_i}{x})\in I$ (resp.\ $C'(\now{\cA_i}{x})\in I$),
  
    \item for all $i$ and boolean variables $v\in V_{i}$ and $c\in\{0,1\}$, an atomic test
    $\pre{\cA_i}{v}=c$ (resp.\ $\post{\cA_i}{v}=c$) evaluates to true if
    $C(\now{\cA_i}{v})=c$ (resp.\ $C'(\now{\cA_i}{v})=c$),
  
    \item for all $i$, an atomic test $\pre{\cA_i}\state=\ell$ (resp.\
    $\post{\cA_i}\state=\ell$) evaluates to true if $C(\now{\cA_i}\state)=\ell$ (resp.\
    $C'(\now{\cA_i}\state)=\ell$).
  \end{itemize}
\end{itemize}

\begin{example}\label{ex:network-formula}
  We consider the specification given by the sentence
  $$
  \Phi = \G_{\leq 100} \big( r \longrightarrow (\YP_{<2} q \wedge p\S_{\geq 5}q) \big) \,.
  $$
  We assume that the model $\cM$ owns the boolean variables $\Prop=\{p,q,r\}$, also called
  atomic propositions.  The sentence $\Phi$ says that during the first 100 units of time,
  whenever an event satisfies $r$ (i.e., $\cM$ sets $r$ to true, or $r$ was true and is 
  not updated by $\cM$) then
  some event less than 2 time units in the strict past satisfies $q$ (written as $\YP_{<2} q$), 
  and some other event at least 5 time units in the past satisfies $q$ and since then all 
  events satisfy $p$ (written as $p\S_{\geq 5}q$).
  
  We give in \Cref{fig:example1} a network of timed automata which implements the
  specification $\Phi$.  Component $\cA_1$ records with clock $x$ the time elapsed
  since the latest $q$ event occurred.  Clock $x$ is initialized to $+\infty$ and remains
  so until the first $q$ event is seen.  Then it is reset to $0$ whenever a $q$ event
  occurs.
  
  Component $\cA_2$ is in location $\ell_1$ iff $p$ since $q$ holds (there is a $q$ event
  in the past such that since then all events satisfy $p$).  When $\cA_2$ is in location
  $\ell_{1}$, clock $y$ records the time elapsed since the earliest \emph{witness} of $p\S
  q$, i.e., the earliest $q$ event such that since then all events satisfy $p$.
  
  Finally, $\cA_3$ is in location $\ell_{2}$ if the sequence of events seen so far 
  violates $\Phi$: within the first 100 time units, there is an $r$ event such that the 
  closest $q$ event is more than 2 time units in the strict past ($\pre{\cA_1}{x}\geq 2$) 
  or $p\S q$ does not hold ($\post{\cA_2}\state=\ell_{0}$) or the earliest witness of 
  $p\S q$ is less than 5 time units in the past.
  Component $\cA_3$ is in location $\ell_{1}$ if $\Phi$ is definitely satisfied, more 
  than 100 time units has elapsed without violating the property. Finally, $\cA_3$ 
  stays in location $\ell_{0}$ if the sequence of events seen so far satisfies $\Phi$ and 
  at most 100 time units has elapsed.
  \qed
\end{example}

\subsection{The model checking problem} 
In this paper, we are interested in the \emph{model checking} problem: given a model $\cM$
owning a set $\Prop$ of atomic propositions, and a specification $\Phi$ on $\Prop$, do all
\emph{behaviours} of $\cM$ satisfy $\Phi$.  Behaviours of a model are formalized as
\emph{timed words} over sets of atomic propositions.  The alphabet $\Sigma$ that we will
consider is $2^\Prop$, the set of subsets of $\Prop$.  A timed word over $\Sigma$, denoted
by $w\in(\Sigma\times \mathbb{R})^{\infty}$, is an \emph{infinite} sequence of
pairs of letter and timestamp of the form $(a_0,\tau_0)(a_1,\tau_1)(a_2,\tau_2)\dots$,
where each $a_i$ is a subset of $\Prop$ and $\tau_i$ is the timestamp of the $i$-th
letter; timestamps are assumed to be non-decreasing, i.e., $0\leq \tau_{0}\leq
\tau_{1}\leq \tau_{2} \cdots$.

A run $\rho$ of the model $\cM$ is an alternating sequence of delay and discrete
transitions, starting from an initial configuration.  Formally, a run is an infinite
sequence
\[
\rho:= C_0 \xrightarrow{\delta_0} C'_0 \xrightarrow{\overline{t}_0} C_1 
\xrightarrow{\delta_1} C'_1 \xrightarrow{\overline{t}_1} C_2 \cdots
\]
where $C_0$ is an initial configuration, each $\delta_i \geq 0$ is a time delay, and each
$\overline{t}_i$ is a tuple of transitions (one from each component of the model) as
defined above, such that each step respects the semantics of time elapse and discrete
transitions.  The timed word $w_\rho = (a_0, \tau_0) (a_1, \tau_1) \cdots$ associated to
the above run $\rho$ is given as follows:
for all $i \ge 0$
\begin{itemize}
  \item $\tau_i = \delta_0 + \delta_1 + \cdots + \delta_i$ (the timestamp of occurrence of
  $\overline{t}_i$) and

  \item $a_i$ is the set of atomic propositions $p$ such that $C_{i+1}(p) = 1$ (the
  propositions that are true \emph{after} executing $\overline{t}_i$).
\end{itemize}
The timed word $w_\rho$ is a behaviour of the model $\cM$. 

The \emph{model checking problem} asks whether all (infinite, non-zeno) behaviours 
generated by all possible runs of the model $\cM$ satisfy a given specification $\Phi$,
written as $\cM \models \Phi$.  In this work, we are interested in specifications given by
fragments of $\MTL$, which we will define in the later sections.  To solve the model
checking problem, we will construct from the specification $\Phi$ a network of timed
automata $\cA=(\cA_1,\ldots,\cA_n)$.
Then, we will consider the \emph{closed} network $(\cM,\cA)$ and check whether 
some (bad) cycle may be iterated infinitely often.

For instance, to check whether a given model $\cM$ satisfies the specification $\Phi$ from
\Cref{ex:network-formula}, we consider the \emph{closed} network
$\cN=(\cM,\cA_1,\cA_2,\cA_3)$ where the automata $\cA_1,\cA_2,\cA_3$ are given in
\Cref{fig:example1}.  Then, $\cM\not\models\Phi$ iff a configuration $C$ of $\cN$ where
$\cA_3$ is in location $\ell_2$ is visited infinitely often.

We may also check whether a specification is \emph{satisfiable} by considering a 
\emph{universal} model allowing all sequences of events. This is achieved by considering, 
for each atomic proposition $p\in\Prop$, an automaton $\cM_p$ as shown in \Cref{fig:example1}.
Then, the model is the network $\cM=(\cM_p)_{p\in\Prop}$.

\section{The $\detMTL$ fragment}
\label{sec:detMTL}
We present a fragment of $\MTL$ for which we can construct a network $(\cA_1, \cA_2,
\dots, \cA_n)$ such that each $\cA_i$ is deterministic.  For the formula in
\Cref{ex:network-formula}, the automata $\cA_1, \cA_2$ and $\cA_3$ shown in
\Cref{fig:example1} are deterministic.  When $(\cA_1, \dots, \cA_n)$ are deterministic,
the product with the model $(\cM, \cA_1, \dots, \cA_n)$ does not further ``multiply'' the
runs of $\cM$: for every run $\sigma$ of $\cM$, there is at most one run $\rho$ of $(\cM,
\cA_1, \dots, \cA_n)$ whose projection leads to $\sigma$.  On the other hand, if some
$\cA_i$ is non-deterministic, the run $\sigma$ can branch out into multiple runs in the
product, leading to an explosion in the state-space (of the zone graph that is computed).

The set of $\MTL$ formulae over the atomic propositions $\Prop$ is defined as
$$
\varphi := p \mid \varphi \wedge \varphi  \mid  \neg \varphi  \mid \Y_{I} \varphi \mid 
\Next_{I} \varphi  \mid \varphi \U_{I} \varphi \mid \varphi \S_{I} \varphi
$$ 
where $p \in \Prop$, and $I$ is an interval with end-points from
$\overline{\Nat}=\Nat\cup\{\infty\}$.  Note that, our definition of $\MTL$ is with both
past and future modalities.  The \emph{pointwise semantics} of $\MTL$ formulae is defined
inductively as follows.  A timed word $w = (a_0,\tau_0)(a_1,\tau_1)(a_2,\tau_2) \cdots $
is said to satisfy the $\MTL$ formula $\varphi$ at position $i\geq0$, denoted as $(w,i)
\models \varphi$ if (omitting the classical boolean connectives)
\begin{itemize}[left=0.1em]
  \item $(w,i) \models p$ if $p \in a_i$
  \item $(w,i) \models \Y_{I} \varphi$ if $i>0$, $(w,i-1) \models \varphi$ and $\tau_{i} - \tau_{i-1} \in I$.  
  \item $(w,i) \models \Next_{I} \varphi$ if $(w,i+1) \models \varphi$ and $\tau_{i+1} - \tau_{i} \in I$.  

  \item $(w,i) \models \varphi_1 \U_{I} \varphi_2$ if there exists $j\geq i$  s.t.\ 
  $(w,j) \models \varphi_2$, $\tau_{j} - \tau_{i} \in I$
  and $(w,k) \models \varphi_1$ for all $i \leq k < j$.
  
  \item $(w,i) \models \varphi_1 \S_{I} \varphi_2$ if there exists $0\leq j \leq i$
  s.t.\ $(w,j)\!\models\!\varphi_2$, $\tau_{i} - \tau_{j} \in I$ and
  $(w,k)\!\models\!\varphi_1$ for all $j< k\leq i$.
\end{itemize}  
A word $w$ \emph{initially} satisfies $\varphi$, written as $w \models \varphi$ if $(w, 0)
\models \varphi$.  Finally, a model $\cM$ satisfies $\MTL$ formula $\varphi$, denoted as
$\cM \models \varphi$ if $w_\rho \models \varphi$ for every run $\rho$ of $\cM$.  We
rewrite the derived operators $\Always, \Eventually, \Past, \History$ in terms of $\U$ or
$\S$, in the standard manner:
$\F_{I}\varphi=\true\U_{I}\varphi$, 
$\G_{I}\varphi=\neg\F_{I}\neg\varphi$,
$\Past_{I}\varphi=\true\S_{I}\varphi$ and
$\History_{I}\varphi=\neg\Past_{I}\neg\varphi$.

\begin{definition}[$\detMTL$]\label{def:fast-mtl}
  The fragment $\detMTL$ consists of formulas where the outermost temporal operators
  are future operators $\X_{I}$, $\U_{I}$
  and may use arbitrary intervals $I$, while all inner temporal
  operators are restricted to past operators.  Inner $\mathsf{Since}$ operators must use
  non-punctual intervals (either $[0,0]$ or non-singleton intervals).  Formally, formulas
  are built as follows:
  \begin{align*}
    (\text{Sentences})\quad 
    \Phi & := p \mid \Phi \wedge \Phi \mid \neg \Phi \mid \Next_{I} \varphi \mid \varphi \U_{I} \varphi 
    \\
    (\text{Point-formulas}) \quad 
    \varphi & := p \mid \varphi \wedge \varphi \mid \neg \varphi \mid \Y_I \varphi \mid \varphi \S_{I'} \varphi
  \end{align*}
  where $I,I'$ are integer-bounded intervals, and $I'$ is either $[0,0]$ or a non-singleton
  interval.
\end{definition}

Sentences are evaluated at the first position $(w, 0)$ of a word $w$, whereas
point-formulas can be checked anywhere in the word.  The formula in
\Cref{ex:network-formula} is a sentence in the $\detMTL$ fragment: the derived
$\G_I$ operator induces an outer $\U_I$ formula, and all the inner operators are past.
The formula $\YP_{<2} q$ can be rewritten as
$\Past_{(0,2)}q \vee \Y_{[0,0]}\Past_{[0,0]}q$.

\section{From $\detMTL$ to network of deterministic timed automata}
\label{sec:fastmtl-to-ta}

For a $\detMTL$ sentence $\Phi$ we build a deterministic network $\cN_\Phi$ and specify a
set $\bad_{\Phi}$ of bad configurations.  Our objective is to achieve a construction that
entails the following theorem for $\detMTL$ sentences.

\begin{theorem}\label{thm:fast-mtl-theorem}
  Let $\Phi$ be a $\detMTL$ sentence and $\cM$ a timed model. Then:
  $\cM \not \models \Phi$ iff there exists an infinite run of the closed network $(\cM,
  \cN_\Phi)$ which eventually remains in bad configurations.
\end{theorem}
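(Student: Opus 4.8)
The plan is to reduce the global statement to a per-run correctness claim using determinism, and then to establish that claim by structural induction, separating the inner (past) point-formulas from the outermost (future) sentence-level modalities. Since each component $\cA_i$ of $\cN_\Phi$ is deterministic and total on the inputs produced by $\cM$, the closed network does not multiply the runs of $\cM$: every run $\rho$ of $\cM$, with associated timed word $w=w_\rho$, lifts to a \emph{unique} run $\hat\rho$ of $(\cM,\cN_\Phi)$, and every run of the closed network projects back to one of $\cM$. By the definition of $\cM\models\Phi$, it therefore suffices to prove, for each such $\rho$, the local equivalence
\[
(w,0)\not\models\Phi \iff \hat\rho \text{ eventually remains in } \bad_\Phi,
\]
after which the existential quantifier over runs transfers across the bijection.

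The core is a correctness invariant for the components, proved by structural induction on subformulas. For each inner point-formula $\varphi$, the dedicated automaton $\cA_\varphi$ maintains that, in the configuration reached after the discrete transition at position $i$, its location and boolean output encode exactly whether $(w,i)\models\varphi$. The atomic case ($p\in a_i$) and the boolean cases ($\wedge,\neg$) are immediate from the semantics; the $\Y_I$ case follows because the required data — the truth of the subformula at position $i-1$ together with the elapsed time $\tau_i-\tau_{i-1}$ — is accessible through the shared clock recording the time since the last event and the post-value of the subautomaton's output. The $\S_{I'}$ cases are the technical heart: for one-sided intervals $[0,a]$ and $[a,\infty)$ the witnessing time is tracked by a single clock reset on each new witness, whereas the two-sided case $[a,b]$ needs the $\mathcal{O}(a/(b\!-\!a))$ clocks to remember a bounded family of candidate witnesses, and one verifies that at each position some stored witness lies in the window iff the $\mathsf{Since}$ holds; the non-punctuality restriction on inner $\mathsf{Since}$ is precisely what makes this finite bookkeeping sufficient. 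As all point-formulas are past, this invariant concerns only finite prefixes and the induction closes cleanly.

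With every point-formula correctly evaluated at every position, the sentence $\Phi$ becomes a timed boolean combination of the outermost modalities $\X_I$ and $\U_I$ applied to formulas whose truth value at each position is now a known boolean signal. The monitoring automaton and the set $\bad_\Phi$ are designed so that $\hat\rho$ eventually remains in $\bad_\Phi$ exactly when $(w,0)\not\models\Phi$: a definitive violation — $\X_I\varphi$ failing at the second event, the $\varphi_1$-obligation of an until breaking before a witness, or the window of $I$ closing without a witness — drives the monitor into an absorbing bad location, while a never-fulfilled obligation of an unbounded until keeps the monitor in a pending location that is itself bad and is left only upon fulfilment. I would verify this correspondence case by case and then propagate it through the boolean structure of the sentence, complementing the bad condition for $\neg$ and taking the appropriate product for $\wedge$.

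The step I expect to be the main obstacle is exactly this last correspondence: relating the infinitary condition ``eventually remains in $\bad_\Phi$'' to the single position-$0$ semantics of the future operators. Unlike the past point-formulas, an until with an unbounded interval is decided by no finite prefix, so the equivalence must treat a forever-unfulfilled pending obligation as a genuine (liveness-style) violation while guaranteeing that a late fulfilment pulls $\hat\rho$ out of $\bad_\Phi$; arranging the bad set to capture this, and to interact correctly with the negations and conjunctions of the sentence, is where the argument is most delicate. The finite-word adaptation then follows from the same invariant restricted to the finite prefix, reading the terminal configuration in place of the infinite tail.
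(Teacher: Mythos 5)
Your proposal is correct and follows essentially the same route as the paper's proof: determinism and completeness of $\cN_\Phi$ give a unique run per behaviour of $\cM$, the point-formula invariant is exactly the paper's correctness lemma for $\cN_\Phi$ (proved by induction over subformulas using the dedicated lemmas for $\Y_I$ and the three kinds of $\S_I$), and the outermost $\X_I$/$\U_I$ are handled by precisely the deterministic initial-satisfiability monitors you describe, with the pending state of the until monitor counted as false when evaluating $\bad_\Phi$ through the boolean structure of the sentence. The delicate step you flag is resolved in the paper by the observation that every cycle in these monitor automata is a self-loop, so each monitor eventually stabilizes, after which ``eventually remains in bad configurations'' coincides with the stabilized position-$0$ truth value of $\Phi$ being false.
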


Given a sentence $\Phi$, we construct the following deterministic automata:
\begin{align*}
  \Aisatp{p} \qquad \AisatX \qquad \AisatU 
\end{align*}
for every atomic proposition $p \in \Prop$ that is not under the scope of a temporal
operator, and every subformula $\X_I$ and $\U_I$ appearing in the sentence.  The automaton
$\Aisatp{p}$ takes the atomic proposition $p$ as input.  The automaton $\AisatX$ takes as
input the boolean variable $\argone$ corresponding to the argument of $\X_I$.  Automaton
$\AisatU$ takes two boolean variables $\argone,\argtwo$ as inputs corresponding to the
first and second arguments of $\U_I$.  Notice that these operators appear only at the
outermost level of $\Phi$ and will be evaluated at the first position in a word.  The
automata for these outermost operations will be called \emph{initial satisfiability
automata}, and are superscripted with an $isat$.  Apart from these, we build an automaton
$\Ainit$ which resets a clock $\xinit$ at the first action.  All the $isat$ automata make
use of this single clock $\xinit$ in their tests.  The $isat$ automata do not own any
other clocks.  Finally, for every temporal point-formula $\varphi$ appearing as
a subformula of $\Phi$, we build a deterministic automaton $\cA_\varphi$.  For these
point-formula automata we also allow for sharing of clocks, which we explain later.

\paragraph*{Bad configurations.} The collection of all the $isat$ automata and the
point-formula automata for $\Phi$ form the network $\cN_\Phi$.  States of each of the
$isat$ automata will be marked either $\true$ or $\false$.  A configuration $C$ of
$\cN_\Phi$ is bad if $\Phi$ evaluates to $\false$ when we replace every outermost $p$, and
operators $\X_I$, $\U_I$ of $\Phi$ by the truth value of the state the corresponding
$isat$ automata are in.  For example if $\Phi = p \land \X_I q$, a configuration $C_1$
where $\Aisatp{p} = \true$ and $\mathcal{A}^{isat}_{\X_I q} =\true$ is \emph{not bad},
whereas $C_2$ with $\Aisatp{p} = \false$ will be bad.  The set of bad configurations of
$\Phi$ will be called $\bad_\Phi$.  This notion naturally extends to the product $(\cM,
\cN_\Phi)$: a configuration of the product is bad if its projection to $\cN_\Phi$
evaluates to $\false$.

\begin{wrapfigure}{r}{0.4\textwidth}
  \centering
  \includegraphics[page=5]{gastex-figures-pics.pdf}
  \caption{Automaton $\Aisatp{p}$.}
  \label{fig:isat-prop}
\end{wrapfigure}
We describe the initial satisfiability automata in \Cref{sec:init-sat} and show
some local correctness properties.  The automata for past operators are presented in
\Cref{sec:past-automata,sec:since-one-sided,sec:since-two-sided}.  Finally, in
\Cref{sec:proof-of-fast-mtl}, we present the proof of
\Cref{thm:fast-mtl-theorem} which shows the global correctness of the construction,
by making use of the intermediate local correctness lemmas shown in the previous sections.

\subsection{Initial satisfiability of sentences} \label{sec:init-sat}

\Cref{fig:isat-prop} depicts the automaton $\Aisatp{p}$ which takes as input the atomic
proposition $p$.  The property that it satisfies is stated in the following lemma, whose
proof is straightforward.

\begin{restatable}{lemma}{lemisatp}
    \label{lem:isat-p}
  For every timed word $w = (a_0, \tau_0) (a_1, \tau_1) \cdots$ over $\Prop$ with
  $p\in\Prop$, there is a unique run $C_0 \xrightarrow{\delta_0, \overline{t}_0} C_1
  \xra{\delta_1, \overline{t}_1} C_2 \cdots$ of $\Aisatp{p}$ over $w$. Moreover,
    $w \models p$ iff for all $i \ge 1$, configuration $C_i$ satisfies
    $\Aisatp{p}.\state = 2$.
\end{restatable}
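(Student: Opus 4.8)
The plan is to read the behaviour of $\Aisatp{p}$ off its (three-location) structure in \Cref{fig:isat-prop} and to settle both claims by a short case analysis on the very first discrete transition, the only one that carries information here. I would first dispose of existence and uniqueness of the run. Since $\Aisatp{p}$ is deterministic, it suffices to check that on the input word $w$ exactly one outgoing transition is enabled at each location. From the initial location the automaton leaves on the first discrete step, inspecting the value of $p$ immediately after the step, namely $\post{\cM}{p}$; the two outgoing edges are guarded by $\post{\cM}{p}=1$ and $\post{\cM}{p}=0$, which are mutually exclusive and jointly exhaustive, so exactly one fires. The two remaining locations -- the sink marked $\true$ (location $2$) and the sink marked $\false$ -- carry only unguarded self-loops, so exactly one transition is enabled there at every later step. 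By induction on the length of the prefix, $w$ therefore induces a unique run $C_0 \xra{\delta_0, \overline{t}_0} C_1 \xra{\delta_1, \overline{t}_1} C_2 \cdots$.

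Next I would establish the characterization. By the definition of the timed word associated with a run, $a_0=\{p : C_1(p)=1\}$, so the bit tested on the first transition is exactly the truth value of $p\in a_0$, which is precisely $(w,0)\models p$, i.e.\ $w\models p$ by the definition of initial satisfaction of an atomic proposition. The transition structure then gives that $C_1$ is in location $2$ when this bit is $1$ and in the $\false$-marked sink otherwise; and since both reachable locations are sinks, the location stays constant afterwards, so $C_i.\state=C_1.\state$ for every $i\ge1$. Chaining these equivalences yields that $C_i.\state=2$ for all $i\ge1$ iff $p\in a_0$ iff $w\models p$, as claimed.

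The only point that needs care -- and the reason the statement quantifies over all $i\ge1$ rather than merely asserting $C_1.\state=2$ -- is the semantic bookkeeping of when the letter is read: $a_0$ records the propositions that hold \emph{after} $\overline{t}_0$, so the guard on the first transition must reference the post-value $\post{\cM}{p}$ and not the value of $p$ in the initial configuration $C_0$. Once this convention is pinned down, the two sinks guarantee the state is persistent and the remaining steps are a direct unwinding of the automaton, matching the paper's remark that the proof is straightforward.
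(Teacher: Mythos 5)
Your proposal is correct and follows essentially the same route as the paper's proof: a direct case analysis on the first discrete transition (which tests whether $p\in a_0$, i.e.\ the post-value of $p$ after $\overline{t}_0$), followed by the observation that both target locations are sinks, so the state persists for all $i\ge 1$. Your version is slightly more explicit than the paper's --- spelling out run uniqueness via determinism/completeness and pinning down the post-value convention --- but these are refinements of the same argument, not a different one.
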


\begin{figure}
  \centering
  \raisebox{3mm}{\includegraphics[page=6,scale=1]{gastex-figures-pics.pdf}}\hfil
  \includegraphics[page=7,scale=1]{gastex-figures-pics.pdf}
  \caption{Left: Clock $\xinit$ is reset on the first event and measures time elapsed
  for all the initial satisfiability automata ($\AisatX$ and $\AisatU$).  \\
  Right: Automaton $\AisatX$ for initial satisfiability for $\X_{I}\argone$.
  The boolean variable $\argone$ is not owned by $\AisatX$ and stands for the argument of 
  $\X_I$. In $\AisatX$, we simply write $\xinit$ instead of $\pre\Ainit\xinit$.
  The automaton is deterministic and complete.  If a run reaches state 3 (resp.\ 4) then
  $\X_{I}\argone$ was initially true (resp.\ false).
  }
  \label{fig:init-sat}
\end{figure}

\Cref{fig:init-sat} describes $\Ainit$ and $\AisatX$.  The invariant satisfied by
$\AisatX$ is explained in the caption of \Cref{fig:init-sat} and formalized below.

\begin{restatable}{lemma}{lemisatX}
    \label{lem:isat-X}
  Let $\Prop' = \Prop \uplus \{\argone\}$ be an augmented set of atomic propositions.  For
  every word $w = (a_0, \tau_0) (a_1, \tau_1) \cdots$ over $\Prop'$, there is a unique run
  $C_0 \xrightarrow{\delta_0, \overline{t}_0} C_1 \xra{\delta_1, \overline{t}_1} C_2
  \cdots$ of $(\Ainit, \AisatX)$ over $w$.  Moreover,
    $w \models \X_I \argone$ iff for all $i \ge 2$, configuration $C_i$ satisfies
    $\AisatX.\state = 3$.
\end{restatable}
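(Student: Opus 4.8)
The plan is to mirror the two-part structure used for \Cref{lem:isat-p}: first establish that the run exists and is unique, and then verify the stated equivalence by tracking the relevant variables along this unique run.

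For existence and uniqueness, I would invoke the fact, recorded in the caption of \Cref{fig:init-sat}, that both $\Ainit$ and $\AisatX$ are deterministic and complete: from each location, and for every combination of the truth value of $\argone$ and the outcome of the clock test $\xinit \in I$, exactly one outgoing transition is enabled. Since $(\Ainit, \AisatX)$ is partially ordered with $\Ainit \preceq \AisatX$ (as $\AisatX$ only reads the external variable $\pre\Ainit\xinit$), there is no circular dependency among the tested values, and a routine induction on the position $i$ shows that, given $C_0,\ldots,C_i$ together with the next delay $\delta_i$ and letter $a_i$, there is a unique discrete transition $\overline{t}_i$ that can fire, determining a unique $C_{i+1}$. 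This yields the unique run over $w$.

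The core of the argument is the correctness equivalence, which rests on one timing observation. The automaton $\Ainit$ resets $\xinit$ to $0$ exactly on the first discrete transition $\overline{t}_0$ (at timestamp $\tau_0$), so in the configuration reached just after the elapse $\delta_1$ but before $\overline{t}_1$, the clock satisfies $\xinit = \tau_1 - \tau_0$. Hence the test $\pre\Ainit\xinit \in I$ evaluated during $\overline{t}_1$ holds iff $\tau_1 - \tau_0 \in I$. Moreover, the truth value of $\argone$ read by $\AisatX$ at $\overline{t}_1$ (via its post-value, recorded in $C_2$) coincides with $\argone \in a_1$, that is, with $(w,1) \models \argone$. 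By the design of $\AisatX$ — the edge into state $3$ fires precisely when $\argone$ holds and $\xinit \in I$, and the edge into state $4$ fires otherwise — I conclude that $C_2$ is in state $3$ iff $\argone \in a_1$ and $\tau_1 - \tau_0 \in I$. By the pointwise semantics this is exactly the condition $(w,0)\models \X_I\argone$, i.e.\ $w \models \X_I \argone$.

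Finally, a short induction upgrades this single-position verdict to the claimed form: states $3$ and $4$ are absorbing, carrying self-loops that fire regardless of the input letter and clock value, so once $\AisatX$ enters state $3$ (resp.\ $4$) at $C_2$ it remains there for all $i \ge 2$. I expect the only real difficulty to be bookkeeping: one must carefully align the pre/post-value conventions of \Cref{def:ta} with the timestamp indexing $\tau_i = \delta_0 + \cdots + \delta_i$, verifying that the clock value tested at $\overline{t}_1$ is exactly $\tau_1 - \tau_0$ and that the proposition $\argone$ at position $1$ is the one recorded in $C_2$, so as to avoid an off-by-one error in either the clock reading or the position of $\argone$.
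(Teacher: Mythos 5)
Your proposal is correct and follows essentially the same route as the paper's proof: trace the unique run of $(\Ainit,\AisatX)$, observe that $\xinit$ is reset at $\overline{t}_0$ so its value at $\overline{t}_1$ is $\tau_1-\tau_0$, conclude that $C_2$ reaches state $3$ iff $\argone\in a_1$ and $\tau_1-\tau_0\in I$, and use the fact that states $3$ and $4$ are absorbing to extend the verdict to all $i\ge 2$. Your packaging of the two directions as a single biconditional at $C_2$ (justified by determinism and completeness) plus absorption, rather than the paper's explicit forward/backward argument, is only a stylistic difference.
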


\begin{figure}
  \centering
  \includegraphics[page=8,scale=0.8]{gastex-figures-pics.pdf}
  \caption{
  Automaton $\AisatU$ for the initial satisfiability for $\argone \U_{I} \argtwo$ 
  (with $I\neq\emptyset$).
  The boolean variables $\argone,\argtwo$ are not owned by $\AisatU$ and stand for the 
  left and right arguments of $\U_I$. 
  The automaton is deterministic and complete. 
  We write $\xinit<I$ (resp.\ $\xinit>I$) to state that $\xinit$ is smaller (resp.\ larger) than all
  values in $I$.  For instance, if $I=[b,c)$ with $b<c$ then $\xinit<I$ means $\xinit<b$ and
  $\xinit>I$ means $\xinit\geq c$.  
  If a run reaches state 2 (resp.\ 3) then we know for sure that $\argone \U_{I} \argtwo$ was initially 
  true (resp.\ false). 
  State 1 means we still do not know whether $\argone \U_{I} \argtwo$ was initially true. 
  If the run stays forever in state 1, $\argone \U_{I} \argtwo$ was initially false.
  }
  \label{fig:initsat-until}
\end{figure}

Finally, \Cref{fig:initsat-until} gives the deterministic automaton for the initial
satisfiability of $\U_I$.  Correctness of the automaton is formalized below.

\begin{lemma}\label{lem:isat-U}
  Let $\Prop' = \Prop \uplus \{\argone, \argtwo\}$ be an augmented set of
  propositions.  For every word $w = (a_0, \tau_0) (a_1, \tau_1) \cdots$ over $\Prop'$,
  there is a unique run $C_0 \xrightarrow{\delta_0, \overline{t}_0} C_1 \xra{\delta_1,
  \overline{t}_1} C_2 \cdots$  of $(\Ainit, \AisatU)$ over $w$. Moreover,
    $w \models \argone \U_I \argtwo$ iff there exists an $i \ge 1$ s.t.\ for all 
    $k\geq i$, configuration $C_{k}$ satisfies $\AisatU.\state = 2$,
\end{lemma}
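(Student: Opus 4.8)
The plan is to prove the two assertions separately: the existence of a unique run, and the characterization of initial satisfaction of $\argone\U_I\argtwo$ in terms of eventually reaching state~$2$. For the unique run, I would inspect \Cref{fig:initsat-until} and check that $\AisatU$ is deterministic and complete: at each of its three locations the guards on the outgoing transitions are pairwise disjoint and jointly exhaustive, because the only data they branch on are the booleans $\argone,\argtwo$ and the position of $\xinit$ relative to $I$, and the latter falls into exactly one of the mutually exclusive, exhaustive cases $\xinit<I$, $\xinit\in I$, $\xinit>I$. Since $\Ainit$ is itself deterministic and complete, the product $(\Ainit,\AisatU)$ has exactly one run over any $w$. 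Along this run I would also record the clock invariant that, when the event at position $i$ is processed, the value of $\xinit$ seen by the guards equals $\tau_i-\tau_0$ (for $i=0$ this is $0$, just after the reset in $\Ainit$; for $i\ge1$ it is the pre-transition value, equal to $\tau_{i-1}-\tau_0+\delta_i=\tau_i-\tau_0$); this is a routine induction using that $\xinit$ is reset only on the first event.

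The core of the proof is a single invariant, proved by induction on $k\ge1$, characterizing the location of $C_k$. Writing $\mathsf{Sat}(k)$ for the statement that some position $j$ with $0\le j\le k-1$ satisfies $\tau_j-\tau_0\in I$ and $(w,j)\models\argtwo$ while $(w,m)\models\argone$ for all $0\le m<j$, I would show that $C_k.\state=2$ iff $\mathsf{Sat}(k)$; that $C_k.\state=1$ iff $\mathsf{Sat}(k)$ fails, $(w,m)\models\argone$ for all $0\le m\le k-1$, and $\tau_{k-1}-\tau_0$ does not satisfy the test $\xinit>I$; and that $C_k.\state=3$ otherwise, meaning the formula is already irrevocably false. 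The inductive step follows the transition logic of the figure: from state~$1$, an event seeing $\argtwo$ with $\xinit\in I$ is a valid witness (the state-$1$ invariant supplies $\argone$ at all strictly earlier positions, and no $\argone$ is required at the witness itself), moving to the absorbing state~$2$; an event with $\xinit<I$ or $\xinit\in I$ that is not a witness but has $\argone$ true keeps us in state~$1$; and an event where $\argone$ fails without being a witness, or where $\xinit>I$, moves to the absorbing state~$3$, since no later position can then complete the until.

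The remaining step is bookkeeping: $\mathsf{Sat}(k)$ is monotone in $k$ and state~$2$ is absorbing, so $C_k.\state=2$ holds for some $k$ iff it holds for all sufficiently large $k$, i.e.\ iff there is an $i\ge1$ with $C_k.\state=2$ for all $k\ge i$. By the invariant this is equivalent to $\mathsf{Sat}(k)$ for some $k$, which is exactly the existence of a witness $j$ for $(w,0)\models\argone\U_I\argtwo$, i.e.\ $w\models\argone\U_I\argtwo$. The part I expect to be delicate is pinning down the state-$1$ versus state-$3$ invariant at the interval boundaries: that the witness position is exempt from the $\argone$ requirement, that $0\in I$ lets position~$0$ already be a witness, and that open/closed endpoints are absorbed correctly into the three-way test $\xinit<I$, $\xinit\in I$, $\xinit>I$ (so that reaching the right end of a right-open interval already forces state~$3$). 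Once this invariant is correct, both directions of the equivalence follow immediately.
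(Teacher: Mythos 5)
Your proposal is correct and follows essentially the same route as the paper's proof: both analyze the unique run of the deterministic, complete product $(\Ainit,\AisatU)$, use the fact that the value of $\xinit$ at event $i$ is $\tau_i-\tau_0$, and identify entry into the absorbing state $2$ with the (minimal) witness of $\argone\U_I\argtwo$. The only difference is presentational: you package the argument as a three-way inductive invariant characterizing all locations, whereas the paper argues directly in two steps (minimal witness forces self-loops on state $1$ followed by the move to state $2$; conversely, the first entry into state $2$ yields the witness), and your invariant subsumes both of these steps.
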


\begin{proof}
  Suppose $w \models \argone \U_I \argtwo$.  Then, (1) there exists an $i \ge 0$ such that
  $\argtwo \in a_i$, (2) for all $0 \le j < i$, we have $\argone \in a_j$, and (3) $\tau_i
  - \tau_0 \in I$.

  We assume that $i\geq0$ is minimal with the above property.
  Hence, in the run of $w$, 
  \begin{itemize}
    \item  transitions $\overline{t}_j$ for $0 \le j < i$ are self-loops on state $1$.
    Indeed, $\postM{\argone}$ is true at $C_{j+1}$ and either $\tau_j-\tau_0<I$ or
    $\tau_j-\tau_0\in I$ and $\postM{\argtwo}$ is false at $C_{j+1}$ due to the
    minimality of $i$.
  
    \item  transition $\overline{t}_i$ goes to state 2 since $\postM{\argtwo}$
    is true at $C_{i+1}$ and $\tau_i - \tau_0 \in I$.
  \end{itemize}  
  Hence configuration $C_{i+1}$ satisfies $\AisatU.\state = 2$.

  Conversely, suppose the automaton moves to state $2$ for the first time after reading
  the prefix $(a_0, \tau_0) (a_1, \tau_1) \cdots (a_i, \tau_i)$.  Since $\AisatU$
  starts at state $1$, the run loops around $1$ on all events $a_0$ up to $a_{i-1}$ and
  then a transition is made to state $2$.  In the loop $\argone$ is true, and in the
  transition that moves to $2$, we have $\argtwo$ to be true.  The transition to $2$ also
  checks for $\xinit \in I$.  Since $\xinit$ is reset at $a_0$, we get $\tau_i - \tau_0
  \in I$.  Hence, the run shows the three properties listed above and $w\models \argone
  \U_I \argtwo$.
\end{proof}

\subsection{Past operators}\label{sec:past-automata}

\subsubsection{Yesterday operator.}

\begin{figure}[tb]
  \centering
  \includegraphics[page=10,scale=1]{gastex-figures-pics.pdf}
  \hfil
  \raisebox{2mm}{\includegraphics[page=9,scale=1]{gastex-figures-pics.pdf}}  
  \caption{Left: 
  Automaton $\AY$ for the untimed Yesterday operator $\Y\argone$.
  The boolean variable $\argone$ is not owned by $\AY$ and stands for the argument of $\Y$.  
  \\
  Right:
  Sharer automaton $\Alast$ with clock $\xlast$ used by all Yesterday operators.  
  }
  \label{fig:Yesterday-automaton}
\end{figure}

The automaton for the Yesterday operator $\Y$, depicted in
\Cref{fig:Yesterday-automaton}~(left), has two states $\ell_{0}$ and $\ell_{1}$.  The
states remember whether the argument $\argone$ of $\Y$ was true or not at the previous
letter: all transitions reading a $\argone$ enter $\ell_{1}$, and the ones that read
$\neg\argone$ go to $\ell_{0}$.  To incorporate the timing constraint, we use a single
state automaton $\Alast$, depicted in \Cref{fig:Yesterday-automaton}~(right), which resets
a clock $\xlast$ in every step.  Now, for an arbitrary interval constraint $I$, the
current position satisfies $\Y_{I}\argone$ if the previous letter had a $\argone$ and the
value of clock $\xlast$ (which was reset at the previous position) when the current
position is read falls in the interval $I$.

\begin{restatable}{lemma}{lemYesterday}
    \label{lem:Yesterday}
  Let $\Prop' = \Prop \uplus \{\argone\}$.  For every word $w = (a_0, \tau_0) (a_1,
  \tau_1) \cdots$ over $\Prop'$, we have a unique run 
  $C_0 \xra{\delta_0} C'_0 \xra{\overline{t}_0} C_1 \xra{\delta_1} C'_1
  \xra{\overline{t}_1} C_2 \cdots$ of $(\Alast,\AY)$.
  Moreover, 
  \begin{itemize}
    \item for all $i\ge 0$, $(w,i) \models \Y\argone$ iff
    $C_{i}(\AY.\state)=C'_{i}(\AY.\state)=\ell_1$,
    
    \item for all $i\ge 1$, $\tau_{i}-\tau_{i-1}\in I$ iff $C'_{i}(\Alast.\xlast)\in I$.
  \end{itemize}
  Let $\outv_{\Y_{I}}=(\AY.\state=\ell_1 \wedge \Alast.\xlast\in I)$.  Then, for all
  $i\geq0$, we may check if $(w,i)\models\Y_{I}\argone$ with the test $\outv_{\Y_{I}}$
  during transition $\overline{t}_{i}$ from $C'_{i}$ to $C_{i+1}$.
\end{restatable}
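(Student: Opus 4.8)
The plan is to check the three assertions one after another, each being a direct consequence of the deterministic transition structure of $\AY$ and $\Alast$ together with the synchronous run semantics. First I would settle existence and uniqueness of the run: $\Alast$ has a single location and resets $\xlast$ on every transition, while from each location of $\AY$ exactly one transition is enabled, determined by whether the letter being read contains $\argone$. Thus both components are deterministic and complete, so for the fixed input $w$ there is, at every step, a unique admissible joint transition; a routine induction on the prefix length then produces the unique run $C_0 \xra{\delta_0} C'_0 \xra{\overline{t}_0} C_1 \cdots$.

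For the first bullet I would argue directly from the transitions. Since a time elapse never changes a location, $C_i(\AY.\state) = C'_i(\AY.\state)$ for every $i$, so it suffices to analyse $C_i(\AY.\state)$. For $i \ge 1$, the configuration $C_i$ is reached by firing $\overline{t}_{i-1}$, which reads the letter $a_{i-1}$; by construction this transition lands in $\ell_1$ exactly when $\argone \in a_{i-1}$, that is, when $(w,i-1)\models\argone$, which is precisely the untimed condition $(w,i)\models\Y\argone$. For the base case $i=0$ the initial location is $\ell_0$ and $(w,0)\models\Y\argone$ is vacuously false, so both sides agree. For the second bullet I would track $\xlast$: as $\Alast$ resets it at every discrete step, $C_i(\Alast.\xlast)=0$ for all $i\ge1$, and the following delay $\delta_i$ yields $C'_i(\Alast.\xlast)=\delta_i=\tau_i-\tau_{i-1}$; membership in $I$ then transfers verbatim.

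Finally, for the combined test I would observe that $\outv_{\Y_{I}}$ is evaluated during $\overline{t}_i$ on \emph{pre}-values, i.e.\ on the configuration $C'_i$. By the first bullet, $C'_i(\AY.\state)=\ell_1$ iff $(w,i)\models\Y\argone$ (encoding both $i>0$ and $\argone\in a_{i-1}$), and by the second bullet $C'_i(\Alast.\xlast)\in I$ iff $\tau_i-\tau_{i-1}\in I$; conjoining the two recovers exactly the three-part semantic condition for $(w,i)\models\Y_{I}\argone$, with the position $i=0$ correctly excluded because $C'_0(\AY.\state)=\ell_0$ falsifies the test. I expect no real difficulty here beyond careful index bookkeeping and respecting the pre/post-value convention: the crucial point is that the test reads pre-values, so that $\AY.\state$ still reflects position $i-1$ and $\xlast$ still measures the gap $\tau_i-\tau_{i-1}$, rather than being overwritten by the reset and state update performed by $\overline{t}_i$ itself.
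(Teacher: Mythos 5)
Your proposal is correct and follows essentially the same route as the paper's proof: determinism and completeness give the unique run, the state of $\AY$ after $\overline{t}_{i-1}$ records whether $\argone\in a_{i-1}$, the reset-every-step behaviour of $\Alast$ gives $C'_i(\Alast.\xlast)=\tau_i-\tau_{i-1}$, and conjoining the two pre-value tests yields the semantics of $\Y_I\argone$. Your explicit treatment of the base case $i=0$ and of the pre/post-value convention is slightly more careful than the paper's write-up, but it is the same argument.
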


\subsubsection{Since operator with a one-sided interval.}
\label{sec:since-one-sided}

\begin{figure}[b]
  \centering
    \raisebox{6.5mm}{\includegraphics[page=11]{gastex-figures-pics.pdf}}
  \hfill
  \raisebox{9mm}{\includegraphics[page=12]{gastex-figures-pics.pdf}}
  \hfill
  \includegraphics[page=13]{gastex-figures-pics.pdf}
  \caption{Left: Automaton for the untimed Since operator $\argone \S \argtwo$. \\ 
  Middle: Automaton recording with clock $x$ the time since the last (most recent) occurrence of $\argtwo$. \\
  Right: Automaton recording with clock $y$ time since the first (earliest) $\argtwo$ which 
  may serve as a witness of $\argone \S_{I} \argtwo$.
  }
  \label{fig:Since-1-sided-automata}
\end{figure}

The formula $\argone \S_{I} \argtwo$ essentially asks that there exists a position
$j$ in the past, such that $\argtwo$ is true at that position, and at all positions
between $j$ and the current position, $\argone$ is true, and the time elapsed between
the position $j$ and the current position lies in the interval $I$.  
Note that there could be multiple candidates for the position $j$.
Let us first consider the untimed version of the Since operator given in
\Cref{fig:Since-1-sided-automata}~(left).  The transition from $\ell_{0}$ to $\ell_{1}$ on
$\argtwo$ is taken with the first $\argtwo$ that could be a witness.  When the automaton
remains in $\ell_{1}$, each time a $\argtwo$ is seen, the automaton encounters a fresh
witness.

When we move to the timed setting, 
some of these witnesses may satisfy the timing constraint 
while others may not, and so this information must be tracked carefully.
We will first consider the simpler case where the intervals are
one-sided, i.e., of the form $[0,c)$, $[0,c]$, $[b,+\infty)$ or $(b,+\infty)$.

When the interval is one-sided, it is sufficient to track the two \emph{extreme} $\argtwo$
witnesses: the earliest witness and the last possible witness at each point.
The idea is simply that if the earliest witness does not satisfy a lower-bound constraint,
then no later witness could satisfy it.  Analogously, if the last witness does not satisfy
an upper-bound, no earlier witness could have satisfied it.

In the automaton $\ASlast$ of \Cref{fig:Since-1-sided-automata}~(middle),
clock $x$ maintains the time since the latest $\argtwo$ (which may serve as a witness
for $\argone\S\argtwo$ when $\AS$ is in state $\ell_{1}$). 
For an \emph{upper-bound} interval $\S_{I}$ where $I=[0,c)$ or $I=[0,c]$, we check the
value of $x$, i.e., the time since the last $\argtwo$.  The output is $1$ if the target
state of $\AS$ is $\ell_{1}$ and $x\in I$.

\begin{restatable}{lemma}{lemsinceub}
    \label{lem:since-upper-bound}
  Let $\Prop' = \Prop \uplus \{\argone, \argtwo\}$.  For every word $w = (a_0, \tau_0)
  (a_1, \tau_1) \cdots$ over $\Prop'$ there is a unique run 
  $C_0 \xra{\delta_0} C'_0 \xra{\overline{t}_0} C_1 \xra{\delta_1} C'_1
  \xra{\overline{t}_1} C_2 \cdots$ of $(\AS,\ASlast)$.
  Moreover, for all $i \ge 0$:
  \begin{enumerate}[left=0.1em]
    \item $(w,i) \models \argone\S\argtwo$ iff $C_{i+1}(\AS.\state)=\ell_1$,
    
    \item When $C_{i+1}(\AS.\state)=\ell_1$ we have $C_{i+1}(\ASlast.x)=\tau_{i}-\tau_{j}$
    where $j\leq i$, $\argtwo\in a_{j}$ and $\argtwo\notin a_{k}$ for all $j<k\leq i$.

    \item Let $I$ be an upper-bound interval and $\outv_{\S_{I}}=(\post{\AS}\state=\ell_1
    \wedge \post\ASlast{x}\in I)$.  Then, for all $i\geq0$, we may check if
    $(w,i)\models\argone\S_{I}\argtwo$ with the test $\outv_{\S_{I}}$ during transition
    $\overline{t}_{i}$ from $C'_{i}$ to $C_{i+1}$.
  \end{enumerate}
\end{restatable}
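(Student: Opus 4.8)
The plan is to dispatch the two structural claims (uniqueness and claim~(1)) by routine induction and concentrate the effort on claim~(3), where the upper-bound hypothesis on $I$ does the real work. Uniqueness of the run is immediate: both $\AS$ and $\ASlast$ are deterministic and complete, so for each letter $a_i$ exactly one transition is enabled in each component, and the delays are forced by the timestamps ($\delta_0=\tau_0$ and $\delta_i=\tau_i-\tau_{i-1}$); hence $w$ determines the run. For claim~(1) I would prove $C_{i+1}(\AS.\state)=\ell_1 \iff (w,i)\models\argone\S\argtwo$ (untimed) by induction on $i$. The base case $i=0$ reduces to $\argtwo\in a_0$. For the inductive step I would use the one-step unfolding of $\mathsf{Since}$: $(w,i)\models\argone\S\argtwo$ iff either $\argtwo\in a_i$, or $\argtwo\notin a_i$ together with $\argone\in a_i$ and $(w,i-1)\models\argone\S\argtwo$. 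This matches exactly the transition structure of $\AS$, which enters $\ell_1$ on reading $\argtwo$, retains its state on $\neg\argtwo\wedge\argone$, and drops to $\ell_0$ on $\neg\argtwo\wedge\neg\argone$; applying the induction hypothesis to $C_i(\AS.\state)$ closes the step.

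For claim~(2), the key fact is that $\ASlast$ resets $x$ exactly on the transitions reading $\argtwo$ and leaves it untouched otherwise. Whenever $C_{i+1}(\AS.\state)=\ell_1$, there is a $\argtwo$ in $[0,i]$ (the automaton can only be in $\ell_1$ after having read some $\argtwo$); letting $j\le i$ be the most recent such position, $x$ was last reset at $\overline{t}_j$, giving $C_{j+1}(\ASlast.x)=0$, and not since, so accumulating the delays $\delta_{j+1},\dots,\delta_i$ yields $C_{i+1}(\ASlast.x)=\tau_i-\tau_j$ (the case $j=i$ giving $0$). This is precisely the $j$ described in the claim.

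The crux is claim~(3). I would first record the observation that, whenever the untimed $\argone\S\argtwo$ holds at $i$, the most recent $\argtwo$ position $j$ from claim~(2) is itself a valid witness: any untimed witness $j_0\le j$ satisfies $\argone$ throughout $(j_0,i]$, hence throughout the smaller window $(j,i]$, while $\argtwo\in a_j$. With this in hand, both directions follow. If $\outv_{\S_I}$ holds during $\overline{t}_i$, then $C_{i+1}(\AS.\state)=\ell_1$ and $C_{i+1}(\ASlast.x)=\tau_i-\tau_j\in I$, so $j$ witnesses $\argone\S_I\argtwo$. Conversely, if $(w,i)\models\argone\S_I\argtwo$ via some witness $j_0$ with $\tau_i-\tau_{j_0}\in I$, then the untimed $\mathsf{Since}$ holds, so $C_{i+1}(\AS.\state)=\ell_1$ by claim~(1); moreover the most recent $\argtwo$ is at $j\ge j_0$, whence $0\le \tau_i-\tau_j\le \tau_i-\tau_{j_0}$, and since $I$ is downward-closed (it is $[0,c)$ or $[0,c]$) we get $\tau_i-\tau_j=C_{i+1}(\ASlast.x)\in I$, so $\outv_{\S_I}$ holds.

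The main obstacle is exactly this monotonicity step: it is what makes tracking only the single most recent witness sound for an upper-bound interval, since the most recent $\argtwo$ minimizes the elapsed time $\tau_i-\tau_j$ and thus gives the best chance of meeting the upper bound while remaining a valid witness. I would flag that the same argument fails for two-sided intervals, where the most recent witness may undershoot the lower bound even though an earlier one lies inside $I$; this is precisely the difficulty deferred to the more involved construction of \Cref{sec:since-two-sided}.
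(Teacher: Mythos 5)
Your proof is correct and follows essentially the same route as the paper's: untimed correctness of $\AS$, the clock $x$ of $\ASlast$ tracking the time since the most recent $\argtwo$, and then the observation that for an upper-bound (downward-closed) interval the most recent $\argtwo$ is always a valid witness whenever any witness exists. In fact, your treatment of claim~(3) makes explicit the monotonicity step ($j\ge j_0$ implies $\tau_i-\tau_j\le\tau_i-\tau_{j_0}$, hence membership in $I$ by downward closure) that the paper's proof only asserts with ``this holds because $I$ is an upper-bounded interval.''
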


In the automaton $\ASfirst$ of \Cref{fig:Since-1-sided-automata}~(right),
clock $y$ maintains the time since the earliest $\argtwo$ which may serve as a witness
for $\argone \S \argtwo$:
clock $y$ is reset when reading $\argtwo$ (the current position is a witness of 
$\argone\S\argtwo$) and there are no earlier witnesses (we do not have 
$\argone\wedge\Y(\argone\S\argtwo)$).
For a \emph{lower-bound} interval $\S_{I}$ where $I=[b,+\infty)$ or $I=(b,+\infty)$ with
$b\in\mathbb{N}$, we check the value of $y$, i.e., the time since the earliest $q$.  The
output is $1$ if the target state of $\AS$ is $\ell_{1}$ and $y\in I$.  

\begin{restatable}{lemma}{lemsincelb}
    \label{lem:since-lower-bound}
  Let $\Prop' = \Prop \uplus \{\argone, \argtwo\}$.  For every word $w = (a_0, \tau_0)
  (a_1, \tau_1) \cdots$ over $\Prop'$ there is a unique run 
  $C_0 \xra{\delta_0} C'_0 \xra{\overline{t}_0} C_1 \xra{\delta_1} C'_1
  \xra{\overline{t}_1} C_2 \cdots$ of $(\AS,\ASfirst)$.
  Moreover, for all $i \ge 0$:
  \begin{enumerate}
    \item $(w,i) \models \argone\S\argtwo$ iff $C_{i+1}(\AS.\state)=\ell_1$,

    \item When $C_{i+1}(\AS.\state)=\ell_1$ we have 
    $C_{i+1}(\ASfirst.y)=\tau_{i}-\tau_{j}$ where $j$ is the earliest witness at $i$ of 
    $\argone\S\argtwo$: $j\leq i$, $\argtwo\in a_{j}$ and 
    $\argone\in a_{k}$ for all $j<k\leq i$, and for all $0\leq j'<j$ either 
    $\argtwo\notin a_{j'}$ or $\argone\notin a_{k'}$ for some $j'<k'\leq i$.

    \item Let $I$ be a lower-bound interval and $\outv_{\S_{I}}=(\post{\AS}\state=\ell_1
    \wedge \post\ASfirst{y}\in I)$.  Then, for all $i\geq0$, we may check if
    $(w,i)\models\argone\S_{I}\argtwo$ with the test $\outv_{\S_{I}}$ during transition
    $\overline{t}_{i}$ from $C'_{i}$ to $C_{i+1}$.
  \end{enumerate}
\end{restatable}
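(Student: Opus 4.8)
The plan is to establish the three claims in the order stated, treating the existence/uniqueness of the run and Part~1 as essentially immediate, and concentrating the real work on the clock invariant in Part~2. Existence and uniqueness follow at once: both $\AS$ and $\ASfirst$ read only the boolean inputs $\argone,\argtwo$ that $w$ fixes at each position, both are deterministic and complete, so exactly one transition is enabled at every step, while the delays are forced by $\delta_i=\tau_i-\tau_{i-1}$. For Part~1, the component $\AS$ is literally the same automaton used in \Cref{lem:since-upper-bound}, so $(w,i)\models\argone\S\argtwo \Leftrightarrow C_{i+1}(\AS.\state)=\ell_1$ can be cited verbatim rather than re-proved.

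The heart of the argument is Part~2, which I would prove by induction on $i$, tracking the state of $\AS$ and the value of $y$ simultaneously. The induction hypothesis is: whenever $C_{i+1}(\AS.\state)=\ell_1$, one has $C_{i+1}(\ASfirst.y)=\tau_i-\tau_j$ for the earliest witness $j$ at position $i$. The case split is driven by whether $\ASfirst$ resets $y$ on transition $\overline{t}_i$. If it does \emph{not} reset, then ending in $\ell_1$ forces the step to be $\ell_1\to\ell_1$ reading $\argone$ (i.e.\ $\argone\in a_i$), and I would argue that the earliest witness $j$ at $i-1$ remains the earliest witness at $i$: extending $\argone$ by one step keeps $j$ valid, and no smaller $j'$ can become valid, since a blocking $\neg\argone$ in $(j',i-1]$ still blocks it in $(j',i]$. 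The clock value then propagates as $C_{i+1}(y)=C_i(y)+\delta_i=(\tau_{i-1}-\tau_j)+(\tau_i-\tau_{i-1})=\tau_i-\tau_j$. If $\ASfirst$ \emph{does} reset $y$, I must show the current position $i$ is genuinely the earliest witness, so that $y=0=\tau_i-\tau_i$ is correct: resetting from $\ell_0$ on $\argtwo$ means $(w,i-1)\not\models\argone\S\argtwo$, whence (by Part~1) no $j'<i$ can witness at $i$, as such a $j'$ would already witness at $i-1$; resetting from $\ell_1$ on $\neg\argone\wedge\argtwo$ means $\argone\notin a_i$, which rules out every $j'<i$ since validity at $i$ requires $\argone\in a_i$. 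The base case $i=0$ is immediate, the only way to reach $\ell_1$ being $\argtwo\in a_0$ with $j=0$ and $y=0$.

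Finally, Part~3 follows from Parts~1--2 together with a monotonicity observation specific to \emph{lower-bound} intervals: among all witnesses of $\argone\S\argtwo$ at position $i$, the earliest one $j$ maximizes the elapsed time $\tau_i-\tau_j$. Hence for $I=[b,+\infty)$ or $(b,+\infty)$, some witness lies in $I$ iff the earliest one does. I would therefore conclude that $(w,i)\models\argone\S_I\argtwo$ iff a witness exists (i.e.\ $\AS$ is in $\ell_1$ after $\overline{t}_i$, by Part~1) and $\tau_i-\tau_j\in I$, which by Part~2 is exactly $\post{\ASfirst}{y}\in I$; this is precisely the test $\outv_{\S_I}=(\post{\AS}\state=\ell_1 \wedge \post{\ASfirst}{y}\in I)$ evaluated during $\overline{t}_i$.

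The main obstacle I anticipate is the reset case of Part~2: proving that each reset of $y$ coincides with the creation of a \emph{new earliest} witness, and in particular that no strictly earlier witness survives. This requires carefully converting the state-based triggers (``$\AS$ in $\ell_0$'', resp.\ ``reading $\neg\argone$ in $\ell_1$'') into the position-based non-existence of an earlier valid witness, using Part~1 to bridge states and the semantics of $\S$. Everything else is routine bookkeeping of the clock value across the time elapses $\delta_i$.
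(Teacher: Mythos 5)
Your proposal is correct, and for Parts 1 and 3 it coincides with the paper's proof: Part 1 is cited verbatim from \Cref{lem:since-upper-bound}, and Part 3 rests on the same monotonicity observation that the earliest witness maximizes the elapsed time, so for an upward-closed lower-bound interval some witness lies in $I$ iff the earliest one does. Where you genuinely diverge is Part 2, the heart of the lemma. You prove the clock invariant by induction on $i$, splitting on whether $\ASfirst$ resets $y$ at $\overline{t}_i$: in the no-reset case the step must be the $\ell_1\to\ell_1$ loop reading $\argone$, and the earliest witness at $i-1$ is preserved at $i$; in the reset case you show position $i$ is itself a fresh earliest witness. The paper argues non-inductively: it takes $j$ to be the position of the \emph{last} reset of $y$ before $i$ (such a reset exists because entering $\ell_1$ from $\ell_0$ forces one), and then establishes both halves of the earliest-witness property by contradiction with the maximality of $j$ --- a missing $\argone$ at some $k\in(j,i]$ would force either a reset at $k$ or a later $\ell_0\to\ell_1$ entry, hence a later reset; and an earlier witness $j'<j$ surviving up to $i$ would force $\AS$ to be in $\ell_1$ when $y$ is reset at $j$, so that reset could only fire on $\neg\argone\wedge\argtwo$, contradicting $\argone\in a_j$. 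Both proofs encode the same insight, namely that resets of $y$ coincide exactly with the creation of new earliest witnesses; your induction makes the preservation argument local and mechanical (leaning on the induction hypothesis and Part 1 at the previous step), while the paper's extremal choice of the last reset avoids induction at the cost of a slightly more global pair of contradiction arguments. Your anticipated ``main obstacle'' --- converting the state-based reset triggers into non-existence of earlier witnesses --- is precisely the bridging step the paper also performs, and your treatment of it is sound.
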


\subsubsection{Since with a two-sided interval.}
\label{sec:since-two-sided}

The general case with a nonempty non-singleton interval $I$ is much harder.  We may have a
position satisfying $\argone \S_{I} \argtwo$ but neither the earliest nor the last witness
of $\argtwo$ satisfy both the upper and lower constraints of $I$.  In such a case, there
could still be another witness in between which does satisfy the constraint.  We need to
change our perspective and look at each $\argtwo$ event as a potential witness for
$\argone \S_I \argtwo$.  However, we cannot simply start a new clock at each $\argtwo$
event, as we only have finitely many clocks at our disposal.  We discuss below our
solution for this general case.

Let $I$ be a non-singleton interval with lower bound $b$ and upper bound $c$.  We assume
$0\notin I$ and $c\neq+\infty$, otherwise it is a simple upper-bound or lower-bound
constraint which was already solved in \Cref{sec:since-one-sided}.  Since both $b,c$ are
integers and $I$ is non-singleton, we have $c-b\geq 1$.

\begin{wrapfigure}{r}{0.5\textwidth}
  \centering
  \resizebox{0.5\textwidth}{!}{
  \begin{tikzpicture}
    \draw [thin, gray] (0,0) to (8,0); \fill[red] (0,0) circle
    (1.5pt); \draw (0,0) to (0,0.2); \node at (0, 0.4) {\scriptsize
      $t$}; \draw (1.5, 0) to (1.5,0.2); \node at (1.5, 0.4)
    {\scriptsize $t + (c - b)$}; \draw (1.2, 0) to (1.2, -0.2); \node
    at (1.2, -0.4) {\scriptsize $t'$}; \fill[green] (1.2, 0) circle
    (1.5pt);

    \begin{scope}[xshift=4cm]
      \draw[red] (0,0.1) to (1.5, 0.1); \draw (0,0) to (0,0.2); \node
      at (0, 0.4) {\scriptsize $t+b$}; \draw (1.5, 0) to (1.5,0.2);
      \node at (1.5, 0.4) {\scriptsize $t + c$}; \draw (1.2, 0) to
      (1.2, -0.2); \node at (1.2, -0.4) {\scriptsize $t'+ b$};
      \draw[green] (1.2,-0.1) to (2.7,-0.1); \draw (2.7, 0) to (2.7,
      -0.2); \node at (2.7, -0.4) {\scriptsize $t'+ c$};
    \end{scope}
    
  \end{tikzpicture}
  }  
  \caption{Illustrating the idea for $S_I$}
  \label{fig:app:since-idea}
\end{wrapfigure}
The idea of our approach is illustrated in \Cref{fig:app:since-idea}.  
Assume there is a $\argtwo$ at $t$, represented by the red dot.
Assuming $I=[b,c]$, this $\argtwo$ is a witness for the events happening in the interval $[t+b,
t+c]$ (marked by the red line), assuming all the intermediate points contain $\argone$.
Now, consider the last $\argtwo$-event in the interval $[t, t + (c - b)]$.
This is marked by the green dot in the figure, at time $t'$. This
$\argtwo$ event acts as a witness for the events in $[t'+b, t'+c]$, marked
by the green line in the figure. For any point to the right of $t'+c$,
a valid $\argtwo$ witness should necessarily be to the right of $t + (c - b)$.
This observation leads to the following construction: when the first $\argtwo$ is seen (the red
dot), as long as there is a continuous sequence of $\argone$'s following it, break the timeline
into blocks of $c - b$ and store the earliest and latest $\argtwo$ within these blocks.  When
the values become sufficiently large, the earliest block becomes irrelevant, allowing to
reuse clocks.  Coupled with $c-b\geq 1$, we are able to implement this idea with a
bounded number of clocks.  

Let $I$ be a non-singleton interval with lower and upper bounds $b,c\in\mathbb{N}$ and 
$0\notin I$. Since $I$ is non-singleton, we have $c-b\geq 1$.
Let $k=2+\Big\lfloor\dfrac{b}{c-b}\Big\rfloor$.\footnote{If the interval $I$ is not of the
  form $(b,c)$ (i.e., open on both sides) then we can construct an automaton $\ASgenI$ with
  $k=1+\Big\lceil\dfrac{b}{c-b}\Big\rceil$ by changing $x_j<c-b$ on Line 10 to $x_j\leq
  c-b$ and $c-b\leq x_j$ on Line 11 to $c-b<x_j$.}
The automaton $\ASgenI$ for $\argone\S_{I}\argtwo$ has a single location $\ell_{0}$ and
uses clocks $x_{1},y_{1},\ldots,x_{k},y_{k}$.
The initial condition at $\ell_{0}$ is $x_1=y_1=\cdots=x_k=y_k=+\infty$ (all clocks are 
initially inactive). 
Automaton $\ASgenI$ is synchronized with $\AS$ (untimed since) from 
\Cref{fig:Since-1-sided-automata}~(left).
Its transitions are defined in \Cref{algo:Since-I-general}
with ${\leqlt}={\leq}$ if $b\in I$ and ${\leqlt}={<}$ otherwise.

\begin{algorithm}[htb]
  \small
  \caption{Transitions of automaton $\ASgenI$.}
  \label{algo:Since-I-general}
  \begin{algorithmic}[1]

    \If{$\post\AS\state=\ell_{0}$}
      \Comment $\argone\S\argtwo$ does not hold
      \State $x_1,y_1,\ldots,x_k,y_k:=+\infty$
      
    \ElsIf{$\neg\postM\argone \vee \pre\AS\state=\ell_{0}$}
      \Comment $\argone\S\argtwo$ holds ($\post\AS\state=\ell_{1}$), new earliest witness
      \State $x_1,y_1:=0$; $x_2,y_2,\ldots,x_k,y_k:=+\infty$
      
    \Else
      \Comment $\argone\S\argtwo$ holds ($\post\AS\state=\ell_{1}$) and there are past witnesses ($\postM\argone \wedge \pre\AS\state=\ell_{1}$)
      
      \If{$b\leqlt x_n\neq+\infty \wedge (n=k \vee \neg(b\leqlt x_{n+1} \neq+\infty))$}
      \Comment Clock shift with $1<n\leq k$
      \State $[x_1,y_1,\ldots,x_k,y_k]:=[x_n,y_n,\ldots,x_k,y_k,+\infty,\ldots,+\infty]$
      \EndIf
    
      \If{$\postM\argtwo$}
        \Comment new witness (not earliest)
        \IfThen{$x_j<c-b$}{$y_j:=0$}
        \Comment $1\leq j\leq k$
        \EndIfThen
      
        \IfThen{$c-b\leq x_j\neq+\infty \wedge x_{j+1}=+\infty$}{$x_{j+1},y_{j+1}:=0$}
        \Comment $1\leq j<k$
        \EndIfThen
      \Else
        \Comment not a new witness ($\neg\postM\argtwo$)
        \State\textsf{Skip}
      \EndIf
    \EndIf
  \end{algorithmic}
\end{algorithm}

Notice that the automaton is deterministic and complete.  Consequently, every word will
have exactly one run in the automaton.  We give some intuitions before the formal
statement in \Cref{lem:since-general}.
\begin{itemize}[left=0.1em]
  \item The test of Line 1
  is satisfied when the current position does not satisfy
  $\argone\S\argtwo$.  In this case, the new clocks are not needed and we set them to
  $+\infty$, meaning inactive.
  
  \item  When the test of Line 3 succeeds, the current position satisfies 
  $\argone\S\argtwo$ and there are no other witness in the past. So we set $x_1,y_1$ to 
  $0$ and the other clocks are inactive.
  
  \item If we have $b\leqlt x_n$ for some active clock with $n>1$ then $x_n$ satisfies the
  lower bound constraint of $I$ and it may serve as a witness instead of the earlier
  clocks $x_1,y_1,\ldots,x_{n-1},y_{n-1}$ (notice that $x_n<y_{n-1}\leq
  x_{n-1}<\cdots<y_1\leq x_1$ and if one of these clocks satisfies the upper bound
  constraint of $I$ then so does $x_n$).  In this case, clocks
  $x_1,y_1,\ldots,x_{n-1},y_{n-1}$ are not needed anymore and we reduce the set of active
  clocks with the clock shift in Lines (6--8).
  
  \item  When the test of Line 9 succeeds,
  the existing witnesses stay and the current position is a new witness.  Let
  $x_1,y_1,\ldots,x_j,y_j$ be the active clocks.
  
  If $x_j<c-b$ (Line 10) then we simply reset $y_j$ to $0$ since the 
  current position is now the last witness in the block of length $c-b$ which started 
  when $x_j$ was reset. Otherwise $x_j\geq c-b$ (Line 11) and the current position starts 
  a new block by setting $x_{j+1},y_{j+1}$ to $0$ (new pair of active clocks). 
  When this occurs, we have $j<k$ thanks to the clock shift of Lines (6--8).

  \item The last possibility (Lines 12,13), the current position satisfies
  $\argone\S\argtwo$ but is not a new witness since $\argtwo$ doest not hold.  In that
  case, we do not update the clocks since the set of witnesses does not change.
\end{itemize}

\begin{restatable}
  {lemma}{lemsincegeneral}
    \label{lem:since-general}
  Let $I$ be a non-singleton interval with lower and upper bounds $b,c\in\mathbb{N}$ and 
  $0\notin I$.  Let ${\leqlt}={\leq}$ if $b\in I$ and ${\leqlt}={<}$ otherwise.
  Let $\Prop' = \Prop \uplus \{\argone, \argtwo\}$.  For every word $w =
  (a_0, \tau_0) (a_1, \tau_1) \cdots$ over $\Prop'$ there is a unique run $C_0
  \xra{\delta_0} C'_0 \xra{\overline{t}_0} C_1 \xra{\delta_1} C'_1 \xra{\overline{t}_1}
  C_2 \cdots$ of $(\AS,\ASgenI)$.
  
  Moreover, let $\outv_{\S_{I}}=(\post{\AS}\state=\ell_1 \wedge (\post\ASgenI{x_{1}}\in I
  \vee \post\ASgenI{y_{1}}\in I))$.  Then, for all $i \ge 0$, we may check if
  $(w,i)\models\argone\S_{I}\argtwo$ with the test $\outv_{\S_{I}}$ during transition
  $\overline{t}_{i}$ from $C'_{i}$ to $C_{i+1}$.
\end{restatable}

\subsection{A full example}\label{sec:full-example}

We explain the construction of a network $\cN_{\Phi}$ associated with a sentence $\Phi$ 
on an example. We use the template automata defined in the previous sections. We 
instantiate the formal arguments $\argone,\argtwo$ of these template automata with actual 
test formulas. For instance, $\AY[\postM\argone/(\neg\postM{p}\wedge\postM{q})]$ denotes 
a copy of the automaton $\AY$ from \Cref{fig:Yesterday-automaton}~(left) in which we 
substitute the test $\neg\postM{p}\wedge\postM{q}$ for the formal argument $\postM\argone$.

  We consider the sentence $\Phi=(\Phi_{1}\wedge\Phi_{2})\rightarrow\Phi_{3}$ where 
  $\Phi_{1}=r$ and
  \begin{align*}
    \Phi_{2} & = \F_{I_{6}}(p\wedge\varphi_{4}) = \true\U_{I_{6}}(p\wedge\varphi_{4}) 
    &
    \Phi_{3} & = \F_{I_{7}}(\varphi_{3}\wedge\varphi_{5}) = 
    \true\U_{I_{7}}(\varphi_{3}\wedge\varphi_{5}) 
    \\
    \varphi_{4} &= (p\vee q)\S_{I_{4}} \varphi_{3} 
    &
    \varphi_{5} &= (\varphi_{1}\vee\varphi_{2}) \S_{I_{5}} (q\vee\varphi_{3})
    \\
    \varphi_{1} &= \Y_{I_{1}} (\neg p\wedge q) 
    \qquad
    \varphi_{2} = \Y_{I_{2}} (\neg p\wedge q) 
    &
    \varphi_{3} &= \Y_{I_{3}} r 
  \end{align*}
  with $I_{4}=[0,c_{4})$ an upper bound constraint
  and $I_{5}=[c_{5},+\infty)$ a lower bound constraint.
  
  The network $\cN_{\Phi}$ does not contain automata for the boolean connectives.
  Instead, we use the power of tests in automata, which allow arbitrary boolean
  combinations of atomic tests. 
  A \emph{temporal} subformula is either an atomic proposition or a formula where the
  outermost connective is a temporal modality (Yesterday, Since, neXt or Until).  The
  temporal subformulas of the sentence $\Phi$ are the atomic propositions $p,q,r$, the
  point formulas $\varphi_{1},\ldots,\varphi_{5}$ and the sentences $\Phi_{2},\Phi_{3}$.
  
  The network $\cN_{\Phi}$ does not need automata for the atomic propositions since they
  are owned by the model $\cM$.  For each temporal subformula, we add to the network the
  automata required to determine the truth value of the subformula.  These automata are
  described in the previous sections.  For instance, the temporal subformula $\varphi_{4}$
  requires a copy of the automata $\AS$ and $\ASlast$ from
  \Cref{fig:Since-1-sided-automata} (with suitable substitutions of the formal arguments).
  \Cref{lem:since-upper-bound} provides the test expression which allows to
  determine the truth value of $\varphi_{4}$.
  
  For the sentence $\Phi=(\Phi_{1}\wedge\Phi_{2})\rightarrow\Phi_{3}$, we construct the
  network
  $$
  \cN_{\Phi}=(\cB_{0},\cB_{1},\cB_{2},\cB_{3},\cB_{4},\cB_{5},
  \cB_{6},\cB_{7},\cA_{1},\cA_{2},\cA_{3})
  $$
  where
  \begin{align*}
    \cB_{1} &:= \Alast &
    \cB_{2} &:= \AY[\postM\argone/(\neg\postM{p}\wedge\postM{q})]
    \\
    \outv_{1} &= (\pre{\cB_{2}}\state=\ell_{1} \wedge \pre{\cB_{1}}\xlast\in I_{1}) &
    \outv_{2} &= (\pre{\cB_{2}}\state=\ell_{1} \wedge \pre{\cB_{1}}\xlast\in I_{2})
  \end{align*}
  \begin{align*}
    \cB_{3} &:= \AY[\postM\argone/\postM{r}] &
    \outv_{3} &= (\pre{\cB_{3}}\state=\ell_{1} \wedge \pre{\cB_{1}}\xlast\in I_{3})
    \\
    \cB_{4} &:= \AS[\postM\argone/(\postM{p}\vee\postM{q}),\postM\argtwo/\outv_{3}] &
    \cB_{5} &:= \ASlast[\postM\argtwo/\outv_{3}] 
    \\
    \cB_{6} &:= \AS[\postM\argone/(\outv_{1}\vee\outv_{2}),
    \postM\argtwo/(\postM{q}\vee\outv_{3})] &
    \cB_{7} &:= \ASfirst[\postM\argone/(\outv_{1}\vee\outv_{2}),
    \postM\argtwo/(\postM{q}\vee\outv_{3})]
    \\
    \outv_{4} &= (\post{\cB_{4}}\state=\ell_{1} \wedge \post{\cB_{5}}{x}\in I_{4}) &
    \outv_{5} &= (\post{\cB_{6}}\state=\ell_{1} \wedge \post{\cB_{7}}{y}\in I_{5})
    \\
    \cB_{0} &:= \Ainit &
    \cA_{1} &:= \Aisatp{r}
    \\
    \cA_{2} &:= \cA^{isat}_{\U_{I_{6}}}[\postM\argone/\true,
    \postM\argtwo/(\postM{p}\wedge\outv_{4})] &
    \cA_{3} &:= \cA^{isat}_{\U_{I_{7}}}[\postM\argone/\true,
    \postM\argtwo/(\outv_{3}\wedge\outv_{5})] 
  \end{align*}
  In the automata above, we use ghost variables $(\outv_{i})_{1\leq i\leq 5}$ to make the 
  definitions easier to read.  When we expand these definitions, we get for instance
  \begin{align*}
    \cB_{4} &= \AS[\postM\argone/(\postM{p}\vee\postM{q}),
    \postM\argtwo/(\pre{\cB_{3}}\state=\ell_{1} \wedge \pre{\cB_{1}}\xlast\in I_{3})] \\
    \cA_{2} &:= \cA^{isat}_{\U_{I_{6}}}[\postM\argone/\true,
    \postM\argtwo/(\postM{p}\wedge(\post{\cB_{4}}\state=\ell_{1} \wedge \post{\cB_{5}}{x}\in I_{4}))] 
  \end{align*}
  Notice that the automaton $\cB_{2}$ is shared: it is used in 
  $\outv_{1},\outv_{2}$. Similarly, $\cB_0$ and $\cB_1$ are shared.
  
  It remains to define the set $\bad_{\Phi}$ of \emph{bad} configurations for the 
  sentence $\Phi$. A configuration $C$ is bad if $\Phi$ evaluates to false at $C$. Hence, 
  $\bad_{\Phi}$ is the set of configurations $C$ which satisfy
  $\pre{\cA_{1}}\state=2 \wedge \pre{\cA_{2}}\state=2 \wedge \neg(\pre{\cA_{3}}\state=2)$ 
  (i.e., $\Phi_{1}\wedge\Phi_{2}\wedge\neg\Phi_{3}$).

\subsection{Proof of \Cref{thm:fast-mtl-theorem}}\label{sec:proof-of-fast-mtl}

We give now the general construction for an arbitrary sentence $\Phi$.
Let $\varphi_{1},\ldots,\varphi_{n}$ be the temporal point subformulas of $\Phi$ other 
than the atomic propositions. We assume that if $\varphi_{j}$ is a subformula of 
$\varphi_{i}$ then $j\leq i$. Let $\Phi_{1},\ldots,\Phi_{m}$ be the temporal subsentences 
of $\Phi$. We define the network
$$
\cN_{\Phi}=(\cB_{0},\cB_{1},\cB_{2},\ldots,\cB_{2n-1},\cB_{2n},\cA_{1},\ldots,\cA_{m})
$$
together with the ghost variables $\outv_{1},\ldots,\outv_{n}$.
For each $1\leq i\leq n$, we define $\cB_{2i-1},\cB_{2i}$ and $\outv_{i}$ as follows:
\begin{itemize}[left=0.1em]
  \item  If $\varphi_{i}=\Y_{I_{i}}\varphi'_{i}$ then we let $\cB_{2i}=\Alast$. 
  The automaton $\cB_{2i-1}$ is a copy of the untimed $\AY$ with a suitable substitution 
  of its formal argument $\argone$ that we define now. The formula $\varphi'_{i}$ is a 
  boolean combination of temporal subformulas. Let $\varphi''_{i}$ be the same boolean 
  combination in which an \emph{outermost} temporal subformula $\varphi_{j}$ is replaced with 
  $\outv_{j}$ and an \emph{outermost} atomic proposition $p$ is replaced with $\postM{p}$.
  For instance, if $\varphi'_{i}=(q\wedge\varphi_{2})\vee\varphi_{4}$ then 
  $\varphi''_{i}=(\postM{q}\wedge\outv_{2})\vee\outv_{4}$. Then, 
  $\cB_{2i-1}=\AY[\postM\argone/\varphi''_{i}]$. Finally,
  $\outv_{i}=(\pre{\cB_{2i-1}}\state=\ell_{1} \wedge \pre{\cB_{2i}}\xlast\in I_{i})$.

  \item  If $\varphi_{i}=\varphi'_{i}\S_{I_{i}}\psi'_{i}$ and $I$ is an upper-bound 
  interval. Then we let
  $\cB_{2i-1}=\AS[\postM\argone/\varphi''_{i},\postM\argtwo/\psi''_{i}]$,
  $\cB_{2i}=\ASlast[\postM\argtwo/\psi''_{i}]$ and
  $\outv_{i}=(\post{\cB_{2i-1}}\state=\ell_{1} \wedge \post{\cB_{2i}}{x}\in I_{i})$.

  \item If $\varphi_{i}=\varphi'_{i}\S_{I_{i}}\psi'_{i}$ and $I$ is a lower-bound
  interval.  Then we let
  $\cB_{2i-1}=\AS[\postM\argone/\varphi''_{i},\postM\argtwo/\psi''_{i}]$,
  $\cB_{2i}=\ASfirst[\postM\argone/\varphi''_{i},\postM\argtwo/\psi''_{i}]$ and
  $\outv_{i}=(\post{\cB_{2i-1}}\state=\ell_{1} \wedge \post{\cB_{2i}}{y}\in I_{i})$.

  \item If $\varphi_{i}=\varphi'_{i}\S_{I_{i}}\psi'_{i}$ and $I$ is a non-singleton 
  interval which is neither an upper-bound nor a lower-bound.  Then we let
  $\cB_{2i-1}=\AS[\postM\argone/\varphi''_{i},\postM\argtwo/\psi''_{i}]$,
  $\cB_{2i}=\cA_{\S_{I_i}}^{\mathsf{gen}}[\postM\argone/\varphi''_{i}, 
  \postM\argtwo/\psi''_{i}, \pre\AS\state/\pre{\cB_{2i-1}}\state]$ and
  $\outv_{i}=(\post{\cB_{2i-1}}\state=\ell_{1} \wedge (\post{\cB_{2i}}{x_{1}}\in I_{i} 
  \vee \post{\cB_{2i}}{y_{1}}\in I_{i}))$.
\end{itemize}
Next, we let $\cB_{0}=\Ainit$ and for $1\leq i\leq m$ we define
\begin{itemize}
  \item  If $\Phi_{i}=p$ is an atomic proposition, then we let $\cA_{i}=\Aisatp{p}$.

  \item  If $\Phi_{i}=\X_{J_{i}}\varphi'_{i}$ then we let 
  $\cA_{i}=\cA^{isat}_{\X_{J_{i}}}[\postM\argone/\varphi''_{i}]$.

  \item  If $\Phi_{i}=\varphi'_{i}\U_{J_{i}}\psi'_{i}$ then we let 
  $\cA_{i}=\cA^{isat}_{\U_{J_{i}}}[\postM\argone/\varphi''_{i},\postM\argtwo/\psi''_{i}]$.
\end{itemize}
Finally, as explained in \Cref{sec:fastmtl-to-ta}, the set $\bad_\Phi$ of bad 
configurations consists of all configurations $C$ such that $\Phi$ evaluates to false 
when we replace each temporal subsentence $\Phi_{i}$ of $\Phi$ by 
$\pre{\cA_{i}}\state=\ell_{i}$ where $\ell_{i}$ is the state labelled $\true$ in 
$\cA_{i}$ ($\ell_{i}=3$ if $\Phi_{i}$ is a next subformula, and $\ell_{i}=2$ otherwise).

Now that we have constructed the \emph{deterministic} network $\cN_{\Phi}$, we prove its
correctness.

\begin{lemma}\label{lem:N-Phi-correctness}
  Let $\Phi$ be a sentence in $\detMTL$.
  The network $\cN_{\Phi}$ constructed above is deterministic and complete.
  Hence, for every word $w = (a_0, \tau_0) (a_1, \tau_1) \cdots$ over $\Prop$ there is a
  unique run $C_0 \xra{\delta_0} C'_0 \xra{\overline{t}_0} C_1 \xra{\delta_1} C'_1
  \xra{\overline{t}_1} C_2 \cdots$ of $\cN_\Phi$. 
  Moreover,
  \begin{enumerate}
    \item for all temporal point subformula $\varphi_{j}$ of $\Phi$ (with $1\leq j\leq n$)
    and for all position $i\geq0$ in $w$, we have $(w,i)\models\varphi_{j}$ iff evaluating
    the test $\outv_{j}$ during transition $\overline{t}_{i}$ from $C'_{i}$ to $C_{i+1}$
    yields $\true$.
    
    \item  $w\not\models\Phi$ iff there is $k\geq0$ such that $C_{i}\in\bad_{\Phi}$ for 
    all $i\geq k$.
  \end{enumerate}
\end{lemma}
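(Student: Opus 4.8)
The plan is to establish the three assertions in sequence, using the local correctness lemmas already proved for each template automaton as black boxes.

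\textbf{Determinism and completeness.} First I would note that every template automaton that $\cN_\Phi$ is built from ($\Ainit$, $\Aisatp{p}$, $\AisatX$, $\AisatU$, $\AY$, $\Alast$, $\AS$, $\ASlast$, $\ASfirst$, $\ASgenI$) is deterministic and complete, and that instantiating the formal arguments $\argone,\argtwo$ with concrete test formulas only replaces atomic guards by boolean combinations of atomic tests, which preserves both properties. Since $\cN_\Phi$ is partially ordered in the sense of \Cref{def:networks-ta}, I would fix a linear extension of $\preceq$ and argue that, given $C'_i$ together with the already-determined post-values of all lower components, each component has exactly one enabled transition; hence the joint transition $\overline{t}_i$ and the successor $C_{i+1}$ are uniquely determined. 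Iterating from the unique initial configuration gives the unique run.

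\textbf{Local correctness of point subformulas (item 1).} I would prove this by induction on the index $j$, recalling that the $\varphi_j$ are ordered so that subformulas of $\varphi_i$ carry smaller indices. The central bookkeeping fact is that, by the definition of the timed word $w_\rho$, the test $\postM{p}$ evaluated during $\overline{t}_i$ holds iff $p\in a_i$, i.e.\ iff $(w,i)\models p$. For the inductive step, the guard $\varphi''_i$ (resp.\ $\psi''_i$) is obtained from $\varphi'_i$ by replacing each outermost temporal subformula $\varphi_j$ (with $j<i$) by $\outv_j$ and each outermost atomic proposition $p$ by $\postM{p}$; by the induction hypothesis and the fact just noted, evaluating $\varphi''_i$ during $\overline{t}_k$ yields $\true$ exactly when $(w,k)\models\varphi'_i$, and similarly for $\psi''_i$. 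Thus the input fed to the template automaton for $\varphi_i$ carries, at every position $k$, the intended truth value of its argument. Since each of \Cref{lem:Yesterday,lem:since-upper-bound,lem:since-lower-bound,lem:since-general} depends only on the sequence of truth values of its arguments along the word, it transfers verbatim to the substituted automaton and yields that the output $\outv_i$ checked during $\overline{t}_i$ equals $[(w,i)\models\varphi_i]$.

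\textbf{Global correctness (item 2).} Here I would invoke the initial-satisfiability lemmas \Cref{lem:isat-p,lem:isat-X,lem:isat-U}. By item 1 (extended to boolean combinations through $\varphi''_i,\psi''_i$), the guards feeding each $\cA_i$ correctly evaluate the arguments of $\Phi_i$ at every position, so each $\cA_i$ receives the correct input sequence and the relevant lemma applies. Each such automaton is \emph{eventually constant}: $\Aisatp{p}$ and $\AisatX$ reach an absorbing true/false state after one (resp.\ two) steps, while $\AisatU$ is monotone (states $2$ and $3$ are absorbing and state $1$, once left, is never re-entered), so its state is ultimately fixed. Choosing a common stabilisation index $K$ for the finitely many $\cA_i$, I would argue that for every $i\geq K$ the marker $[\pre{\cA_i}\state=\ell_i]$ equals $[(w,0)\models\Phi_i]$, where $\ell_i$ is the true-state of $\cA_i$. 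Hence for $i\geq K$ the boolean evaluation defining $\bad_\Phi$ is the constant value $[(w,0)\not\models\Phi]=[w\not\models\Phi]$, so $C_i\in\bad_\Phi$. This gives both directions: if $w\not\models\Phi$ then $C_i\in\bad_\Phi$ for all $i\geq K$, and conversely if $C_i\in\bad_\Phi$ for all $i\geq k$ then reading off any $i\geq\max(k,K)$ forces $w\not\models\Phi$.

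\textbf{Main obstacle.} I expect the delicate part to be the timing bookkeeping in the inductive step of item 1: tracking which values are read as pre-values versus post-values inside a single synchronous transition, and justifying that substituting the output guards $\outv_j$ (themselves pre/post tests of lower automata computed in the \emph{same} step) for the formal arguments preserves the hypotheses of the local lemmas. The partial-order/closedness assumption is precisely what makes these simultaneous references well-defined and acyclic, while the observation that each local lemma depends only on the truth sequence of its arguments is what allows those lemmas to be reused without reproving them.
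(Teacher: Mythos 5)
Your proposal is correct and takes essentially the same route as the paper's proof: determinism and completeness are inherited from the component automata, item (1) is proved by induction on $j$ via the local correctness lemmas (\Cref{lem:Yesterday,lem:since-upper-bound,lem:since-lower-bound,lem:since-general}), and item (2) combines item (1) with the initial-satisfiability lemmas (\Cref{lem:isat-p,lem:isat-X,lem:isat-U}) plus the observation that each $\cA_i$ eventually stabilises because its only cycles are self-loops. The paper's proof is simply terser; your additional details (the partial-order argument for uniqueness of the joint transition, and the common stabilisation index $K$) are exactly the bookkeeping the paper leaves implicit.
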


\begin{proof}
  All automata used in the network $\cN_{\Phi}$ are deterministic and complete. Hence, so 
  is the network and each timed word has indeed a unique run.
  \begin{enumerate}[left=0.1em]
    \item  The proof is by induction on $j$. It follows easily from the correctness 
    lemmas in the previous sections: 
    \Cref{lem:Yesterday} for yesterday with an arbitrary interval, 
    \Cref{lem:since-upper-bound} for since with an upper-bounded interval, 
    \Cref{lem:since-lower-bound} for since with a lower-bounded interval and
    \Cref{lem:since-general} for since with a non-singleton interval which is 
    neither upper-bounded nor lower-bounded.
  
    \item  Using (1) and the lemmas for initial satisfiability (\Cref{lem:isat-p}, 
    \Cref{lem:isat-X} and \Cref{lem:isat-U}), we deduce that for each temporal 
    subsentences $\Phi_{j}$ of $\Phi$ (with $1\leq j\leq m$), we have $w\models\Phi_{j}$ 
    iff the automaton $\cA_{j}$ eventually stabilises in a state labelled $\true$, i.e., 
    state $2$ for an atomic proposition or for an until modality, and state $3$ for next.
    Notice that each automaton $\cA_{j}$ must eventually stabilise in some state since 
    each cycle in these automata are actually self-loops. Now, using the definition of 
    $\bad_{\Phi}$, we deduce that $w\not\models\Phi$ iff the run for $w$ eventually stays
    in bad configurations.
    \qedhere
  \end{enumerate}
\end{proof}

\begin{remark}
  The number of automata in the network $\cN_{\Phi}$ is $\mathcal{O}(|\Phi|)$.  The
  automata $\cB_{2i}$ have a single state and own clocks, the automata $\cB_{2i-1}$ have
  two states and do not own clocks, and the automata $\cA_{i}$ have 3 or 4 states and do
  not own clocks.  The number of clocks owned by $\cB_{2i}$ is 1 unless it comes from some
  $\S_{I}$ where $I$ is a two-sided interval with lower and upper bounds
  $b,c\in\mathbb{N}$ such that $c-b\geq1$.  In that case, the number of clocks is $2k$
  where $k=2+\Big\lfloor\dfrac{b}{c-b}\Big\rfloor\leq2+b$.  Hence, the total number of
  clocks in the network is $\mathcal{O}(\Phi)$ if $\Phi$ does not use since with two-sided
  intervals, or if $k=\mathcal{O}(1)$ for all since with two-sided intervals, or if the
  constants are given in unary (actually, only the lower constants of two-sided
  intervals).
\end{remark}

\begin{remark}
  We reduce further the size of the network $\cN_{\Phi}$ by removing duplicates.  For
  instance, if $\Phi$ has several occurrences of $\Y$, possibly with different arguments
  and intervals, the automaton $\Alast$ occurs as many times in the network constructed
  above.  In practice, we use a single occurrence of $\Alast$ adapting accordingly the
  atomic tests $\xlast\in I_{i}$ in the corresponding ghost variables $\outv_{i}$. 
  We also remove duplicate occurrences of $\AY$ automata and of since 
  automata ($\AS$, $\ASlast$, $\ASfirst$).
\end{remark}

\section{Integrating the Future}\label{sec:future}

Our goal in this section is to include future modalities in the logic and obtain an
efficient model checking procedure for a general fragment of $\MTL$ with past and future.
To do so, we consider the logic $\detMTL$ from \Cref{def:fast-mtl} and allow Point
formulas to also have $\Next_I$ and $\U_{I'}$ operators.  This results in a restriction of
$\MTL$ in which, except in the topmost level, the intervals in Since and Until are
restricted to be $[0,0]$ or non-singleton.  We call this fragment $\MTLfp$.  This fragment
subsumes the well-known $\MITL$ fragment and even goes beyond extending it with past.

We recall that in the pointwise semantics, $\MTL$ model checking is undecidable over
infinite words, and over finite words it is decidable but the complexity is
Ackermanian~\cite{Survey-AlurH91,MTL-OW07J,MTL-OW07C,OuaknineW06}.  Hence, the literature
often considers the $\MITL$ fragment in the pointwise
semantics, with multiple known approaches for building efficient model checking
algorithms.  Of these we highlight two recent ones, both of which only work with future
modalities.  The first from~\cite{MightyL}, works by translating $\MITL$ into a network of
timed automata (via 1-clock alternating automata) and was implemented in a
state-of-the-art tool \MightyL. The second from~\cite{AGGS-CONCUR24}, converts $\MITL$
into generalized timed transducers, which use future clocks defined in~\cite{AGGJS-CAV23}.
However, this approach has not yet been implemented.

In the rest of this section, we combine ideas from~\cite{AGGS-CONCUR24}
and from~\cite{MightyL}, with our ideas for initial satisfiability and
past modalities presented in the previous sections, to develop a new construction from
formulas in $\MTLfp$ directly to network of generalized timed automata with shared
variables, without the explicit use of transducers.  As we are dealing with future
modalities, non-determinism cannot be avoided, but we show two algorithmic improvements,
both designed to {\em reduce non-deterministic branching}.  Our experimental results later
showcase improved performance (compared to either of the two approaches above) in
number/type of benchmarks solved and time taken, thus highlighting the significant impact
of these improvements.

\subsection{Including Future clocks in the Model}

First, we lift the definitions to include future clocks.  Thus the set of clocks $X$ is
now partitioned into history clocks $\Xh$, which are just clocks as we saw till now and
$\Xf$, future clocks, that take non-positive values.  A future clock takes values in the
interval $\overline{\mathbb{R}}_{-}=(-\infty,0]\cup\{-\infty\}$, where value $-\infty$
signifies that this clock is undefined or inactive.  Thus, a {\em generalized timed
automaton with shared variables (GTA)} is a tuple $\cA=(Q,X,V,\Delta,\init)$, where
$X=\Xh\overset{\cdot}{\cup} \Xf$ is the set of clocks owned by $\cA$.

Now, an atomic update for a future clock variable $y\in \Xf$ owned by $\cA$ is
non-deterministic (unlike to normal/history clocks) and has the form $\now\cA{y}:\in I$ 
(or simply $y:\in I$), for some interval $I$ with end-points in 
$\overline{\mathbb{Z}}_{-}=\{0,-1,-2,\ldots,-\infty\}$.  Further, an
atomic test for $y\in\Xf$ owned by $\cA$ has the form $\now\cA{y}\in I$ (or simply $y\in
I$) where $I\subseteq\overline{\mathbb{R}}_{-}$ is an integer bounded interval.
For simplicity, in examples we often write $-y\in I$ with endpoints in
$\overline{\mathbb{N}}$ instead.
As before $\cA$ may also test a future clock variable owned
by another component with $\pre\cN{y}\in I$ or $\post\cN{y}\in I$ where $\cN$ is a name.
As before a {\em
network of GTA with shared variables} is just a tuple
$\overline{\cA}=(\cA_{1},\ldots,\cA_n)$ where each $\cA_i$ is a GTA with shared variables.

The semantics of a network of GTA is the same as for network of TA as before with two key
differences: First, in the definition of a \emph{configuration} $C$ for any $1\leq i\leq
n$, we have in addition $C(\now{\cA_i}y)\in\overline{\mathbb{R}}_-$ for $y\in \Xf_{i}$.
Second, a \emph{time elapse} transition $C\xra{\delta}C'$ can be taken only if (i) $C'$
coincides with $C$ on all variables other than clocks, (ii) the value of each clock
(regardless of whether it is future or history) is advanced by the same quantity $\delta$
and (iii) $C(\now{\cA_i}y)+\delta\leq 0$ for all $1\leq i\leq n$ and $y\in \Xf_i$ (with 
$-\infty+\delta=-\infty$).  The
third condition is new and means that an elapse can occur only if it does not force a
future clock to take a positive value (same requirement as
in~\cite{AGGJS-CAV23,AGGS-CONCUR24} for future clocks).

Finally, we recall that our TA (and GTA) are always strongly non-Zeno, i.e., every
accepting run is non-Zeno (a timed run is Zeno if {\small $\sum_{i\geq 0}\delta_i$} is
bounded).  As remarked in Lemma~5 of \cite{AGGS-CONCUR24}, this assumption (not a severe
restriction, since every GTA can be converted easily to a strongly non-Zeno one) has
interesting consequence for future clocks: if they are not ultimately  $-\infty$,
they should be released infinitely often.  If not, there is a last point where a future
clock is released to a finite value,
and the entire suffix of the run lies within this finite time, contradicting non-Zenoness.

\subsection{Model checking the future}

\begin{figure}[tbp]
  \centering
  \includegraphics[page=15,scale=1]{gastex-figures-pics.pdf}
  \hfil
  \raisebox{6mm}{\includegraphics[page=14,scale=1]{gastex-figures-pics.pdf}}
  \caption{\small
  Left: Automaton $\AX$ for the untimed Next operator $\X\argone$.
  The boolean variable $\argone$ is not owned by $\AX$ and stands for the argument of $\X$.  
  \\
  Right: Automaton $\Anext$ with clock $\xnext$ used (shared) by all Next operators.  
  }
  \label{fig:Next-automaton}
\end{figure}

For the model checking procedure, the idea at the high level is as before.  Given a
formula in $\MTLfp$, we translate it to a network of GTA with shared variables and then
build a closed network with a model and check existence of a {\em bad cycle} of
configurations.  As explained in~\cite{AGGS-CONCUR24}, the definition of a bad cycle now
also has to take into account future clocks.  The goal of this section is to prove the
following theorem.

\begin{theorem}\label{thm:full-mtl-theorem}
  Let $\Phi$ be a $\MTLfp$ sentence and $\cM$ a timed model with history and future
  clocks.  Then, we can construct a network of GTA with shared variables $\cN_\Phi$ of
  size linear in size of $\Phi$ (assuming unary encoding of constants in the
  formula)\footnote{In fact, we can also do the same with binary encoding of constants
    with one caveat: the lower bounds on intervals used in
    Since and Until operators should be bounded by a constant.} 
  such that $\cM \not \models \Phi$ iff there exists a reachable cycle of the closed
  network $(\cM, \cN_\Phi)$, which eventually remains in bad configurations and where
  every future clock $x$ of the network is either released during the cycle or is
  undefined $-\infty$.
\end{theorem}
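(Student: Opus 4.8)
The plan is to mirror the construction and proof of \Cref{thm:fast-mtl-theorem}, extending it to handle the two new future point-formulas $\Next_I$ and $\U_{I'}$ that $\MTLfp$ admits at inner levels. First I would build $\cN_\Phi$ by induction on the temporal subformulas exactly as in \Cref{sec:proof-of-fast-mtl}: atomic propositions and all past point-formulas ($\Y_I$ and the three interval regimes of $\S_I$) reuse the deterministic components already constructed, and the outermost future operators reuse the $isat$ automata. The genuinely new ingredients are GTA components for inner $\Next_I$ and $\U_{I'}$. For $\Next_I$ I would use the two-state automaton $\AX$ of \Cref{fig:Next-automaton} synchronized with a shared clock $\xnext$ that is reset at every step (the future analogue of the $\AY$/$\Alast$ pair), so that $\X\argone$ at position $i$ is witnessed by the argument holding at position $i+1$ together with a timing test on $\xnext$. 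For inner $\U_{I'}$ I would introduce a two-state GTA that, upon committing to the eventuality holding, \emph{releases} a future clock to a value predicting the timing of the witnessing $\argtwo$ event; the obligation is discharged exactly when that future clock reaches $0$ while $\argtwo$ holds and $\argone$ has held throughout. This is where non-determinism enters: the automaton guesses whether and when the eventuality is satisfied.

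For each future point-formula I would prove a local-correctness lemma analogous to \Cref{lem:Yesterday,lem:since-upper-bound,lem:since-lower-bound,lem:since-general}, but now phrased existentially rather than via a unique run: along \emph{some} run of the component the ghost test $\outv_j$ evaluates to $\true$ during the $i$-th discrete transition iff $(w,i)\models\varphi_j$. For the future-clock mechanism the key is the non-Zenoness observation recorded just before this subsection: a future clock that is not ultimately $-\infty$ must be released infinitely often along any accepting (non-Zeno) run, since otherwise time elapse would be blocked at value $0$ and the suffix would be confined to a bounded interval. Consequently a guessed $\U_{I'}$ witness that is never realized would freeze a future clock at a finite value forever, which no genuine accepting run allows, so the release condition on future clocks in the theorem statement is precisely what rules out spurious guesses and yields soundness of the Until component.

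With the local lemmas in hand I would assemble global correctness as in \Cref{lem:N-Phi-correctness}: induction on the subformula ordering propagates the ghost-variable tests upward through the boolean combinations, the $isat$ lemmas (\Cref{lem:isat-p,lem:isat-X,lem:isat-U}) convert truth of the outermost subsentences into stabilization of the corresponding $isat$ automata in their $\true$-labelled states, and the definition of $\bad_\Phi$ then gives $w\not\models\Phi$ iff the run eventually stays in bad configurations. Combining this with the future-clock release condition, which forces every guessed obligation along the cycle to be met, gives $\cM\not\models\Phi$ iff $(\cM,\cN_\Phi)$ has a reachable cycle that eventually remains bad and in which every future clock is released during the cycle or is $-\infty$. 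The linear size bound follows because each temporal subformula contributes $\mathcal{O}(1)$ automata; under unary encoding the only clock-heavy component, the two-sided $\ASgenI$, contributes $\mathcal{O}(b)$ clocks and the future-clock components for $\Next$ and $\U$ are bounded, so the whole network is $\mathcal{O}(|\Phi|)$.

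I expect the main obstacle to be the two directions of the correctness argument for inner $\U_{I'}$ under non-determinism. For completeness (every violating word yields a bad cycle) I must exhibit a run whose future-clock guesses exactly track the true witnesses, which is delicate when several overlapping Until obligations \emph{share} predictions through shared variables, as the construction intends. For soundness I must argue that the future-clock release condition, together with strong non-Zenoness, genuinely prevents a run from falsely claiming an eventuality on a cycle; showing that ``released during the cycle'' is both necessary \emph{and} sufficient—rather than merely necessary—is the technically hardest point, and it is exactly here that the semantics of future clocks from \cite{AGGJS-CAV23,AGGS-CONCUR24} does the real work.
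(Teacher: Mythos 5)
Your overall architecture is the paper's: reuse the deterministic past/\emph{isat} machinery, add GTA components with future clocks for inner $\Next_I$ and $\U_{I'}$, share predictions across components, use strong non-Zenoness plus the release condition to rule out unrealized predictions, and assemble exactly as in \Cref{sec:proof-of-fast-mtl}. However, your treatment of inner Until has two concrete gaps. First, a single ``commit-time'' prediction clock per Until is not enough. As with Since, one must track \emph{both} extreme witnesses: the paper's $\AUfirst$ (future clock to the \emph{earliest} $\argtwo$) serves upper-bound intervals, while $\AUlast$ (future clock to the \emph{latest} witness) is needed for lower-bound intervals $[b,\infty)$ --- with your single clock, the earliest witness may fall below $b$ while a later witness satisfies it, and your automaton would wrongly report falsity. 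Second, you never address inner Until with a genuine two-sided interval $[b,c]$, which $\MTLfp$ permits; there neither extreme witness suffices, and the paper handles it as the dual of the block construction $\ASgenI$ of \Cref{sec:since-two-sided}, costing $\mathcal{O}(b/(c-b))$ clocks. Your claim that ``the future-clock components for $\Next$ and $\U$ are bounded'' is false in that case (the linear size bound survives only because of the unary encoding assumption, exactly as for two-sided Since).

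There is also a structural difference that creates the soundness difficulty you flag at the end. The paper does \emph{not} release prediction clocks only upon commitment: $\Anext$, $\AUfirst$ and $\AUlast$ maintain their future clocks as genuine prophecy (event-style) clocks, released at \emph{every} relevant event, so that given the infinite word the run of these components is \emph{unique} (this is how \Cref{lem:until-1sided-bound} is phrased, mirroring the unique-run lemmas of the past fragment); the only residual guess is the untimed one in the two-state $\AU$, and that guess is eliminated by the invariant on $\ell_1$ together with the non-Zenoness/release argument. Your design, which releases a clock only when an obligation is created, re-introduces spurious timing guesses at every position where the Until must be evaluated, and then your existentially-phrased local lemmas are too weak to push through the global ``iff'' of \Cref{lem:N-Phi-correctness}: you would need that every run surviving the cycle condition evaluates the ghost tests correctly at \emph{all} positions, not just that some run does. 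With the paper's always-maintained predictions this problem disappears by construction. A last, minor point: the shared clock $\xnext$ cannot be ``reset at every step'' as you write; a history clock reset each step carries $\tau_i-\tau_{i-1}$, which is useless for $\X_I$ at transition $\overline{t}_i$. It must be a future clock released at each event so that $-\xnext=\tau_{i+1}-\tau_i$ is available when the test $\outv_{\X_I}$ is evaluated (\Cref{fig:Next-automaton}).
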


\subsubsection{Next with sharing.}
As for Yesterday, for any occurrence of the Next operator $\X_{I}$, we have an automaton
depicted in \Cref{fig:Next-automaton} (left), which has two states $\ell_{0}$ and
$\ell_{1}$ and captures the {\em untimed semantics} of Next. 
The states guess
whether the argument $\argone$ of $\X$ will be true or not at the next letter without
worrying about the timing constraints.
Then, we use a single state automaton $\Anext$, depicted in
\Cref{fig:Next-automaton} (right), which releases clock $\xnext$ whenever it reaches $0$.
Thus, it always predicts the time to the next event. It will be shared across all
occurrences of Next in the entire network, to check all the timing constraints.
Now, for an arbitrary interval constraint $I$, the current position satisfies
$\X_{I}\argone$ if the next letter has a $\argone$ and the value of clock $\xnext$
(released here) falls in the interval $I$.  Formally,
\begin{lemma}
  Let $\Prop' = \Prop \uplus \{\argone\}$.  For every word $w = (a_0, \tau_0) (a_1,
  \tau_1) \cdots$ over $\Prop'$, we have a unique run 
  $C_0 \xra{\delta_0} C'_0 \xra{\overline{t}_0} C_1 \xra{\delta_1} C'_1
  \xra{\overline{t}_1} C_2 \cdots$ of $(\AX,\Anext)$. Moreover, for all $i\ge 0$, 
  \begin{enumerate}
    \item $(w,i) \models \X\argone$ iff
    $C_{i+1}(\AX.\state)=\ell_1$,
    
    \item $\tau_{i+1}-\tau_{i}\in I$ iff $C_{i+1}(\Anext.\xnext)\in I$.
  \end{enumerate}
  Let $\outv_{\X_{I}}=(\post{\AX}\state=\ell_1 \wedge -\post\Anext\xnext\in I)$.
  Then, for all $i\geq0$, we may check if $(w,i)\models\X_{I}\argone$ with the test
  $\outv_{\X_{I}}$ during transition $\overline{t}_{i}$ from $C'_{i}$ to $C_{i+1}$.
\end{lemma}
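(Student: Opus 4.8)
The plan is to follow the same pattern as the Yesterday lemma (\Cref{lem:Yesterday}), the new ingredient being the prophetic behaviour of the shared future clock $\xnext$. Fix a word $w$ over $\Prop'$. Since $w$ determines both the letters $a_i$ and all delays $\delta_i=\tau_i-\tau_{i-1}$ (with $\delta_0=\tau_0$), the first step is to show that the pair $(\AX,\Anext)$, although non-deterministic in isolation, has \emph{exactly one} infinite run over $w$, and then to read off the two claims from that run.

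For $\AX$, the state at $C_{i+1}$ records a guess about whether the next letter $a_{i+1}$ satisfies $\argone$. I would argue by induction on $i$ that in any infinite run this guess is forced to be correct: being in $\ell_1$ (resp.\ $\ell_0$) at $C_{i+1}$ commits the following transition $\overline{t}_{i+1}$ to test $\postM\argone$ (resp.\ $\neg\postM\argone$), so a guess mismatching $a_{i+1}$ disables $\overline{t}_{i+1}$ and blocks the run, while the matching guess is always available. Hence the run is unique and $C_{i+1}(\AX.\state)=\ell_1$ iff $\argone\in a_{i+1}$, which together with $(w,i)\models\X\argone \Leftrightarrow \argone\in a_{i+1}$ gives the first item.

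For $\Anext$ I would use the future-clock semantics to pin the released value. When $\xnext$ is released at $C_{i+1}$ to a guessed value $-d\leq 0$, elapsing the delay $\delta_{i+1}$ up to the next event would set $\xnext=-d+\delta_{i+1}$; the time-elapse constraint $\xnext+\delta\leq 0$ rules out $d<\delta_{i+1}$ (the elapse would push $\xnext$ positive mid-way), while the guard $\pre\Anext\xnext=0$ on $\overline{t}_{i+1}$ rules out $d>\delta_{i+1}$. Thus $d=\delta_{i+1}$ is forced, and in the unique run $C_{i+1}(\xnext)=-(\tau_{i+1}-\tau_i)$; the same reasoning applies at the initial step. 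As $w$ is infinite and non-Zeno, a next event always exists, so the prediction is always defined. Reading the interval test with the sign convention used in the output (i.e.\ $\tau_{i+1}-\tau_i\in I$ iff $-C_{i+1}(\xnext)\in I$) yields the second item.

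Finally, for the output test: the post-values in $\outv_{\X_I}$ refer to $C_{i+1}$, so during $\overline{t}_i$ the conjunct $\post\AX\state=\ell_1$ is equivalent to $\argone\in a_{i+1}$ by item~1 and $-\post\Anext\xnext\in I$ is equivalent to $\tau_{i+1}-\tau_i\in I$ by item~2; their conjunction is exactly the pointwise semantics of $\X_I\argone$ at position $i$. I expect the main obstacle to be the uniqueness argument for $\Anext$: one must carefully combine the time-elapse blocking (which excludes under-prediction) with the equality guard at the next event (which excludes over-prediction) to force the released value to be precisely $-(\tau_{i+1}-\tau_i)$, crucially using that an infinite non-Zeno word always provides a next event.
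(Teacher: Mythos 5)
Your proof is correct and follows essentially the approach the paper intends: the paper states this lemma without a formal proof, and the surrounding text (together with the pattern of the proof of \Cref{lem:Yesterday} in the appendix) gives precisely your argument — the state of $\AX$ is a guess that the next transition is committed to verify (so any infinite run forces the guess to be correct and unique), and the guard $\xnext=0$ at each event combined with the future-clock elapse constraint $\xnext+\delta\leq 0$ pins the released value to exactly $-(\tau_{i+1}-\tau_i)$, excluding under-prediction, over-prediction and the $-\infty$ release. Your write-up fills in the details the paper leaves implicit, including the sign convention in the output test; the only slight imprecision is invoking non-Zenoness to guarantee a next event, where infiniteness of the word already suffices (the release to $-\infty$ is blocked simply because some later event must satisfy $\xnext=0$).
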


\subsubsection{Until with sharing (one-sided).}

The formula $\argone \U_{I} \argtwo$ asks that there exists a position $j$ in the future
(there could be many of them), such that $\argtwo$ is true at that position, and $\argone$
is true at all positions between the current position and $j$, and the time elapsed
between the current position and position $j$ lies in the interval $I$.
As for Since, we first consider the simpler case where intervals are one-sided, i.e., of
the form $[0,c)$, $[0,c]$, $[b,+\infty)$ or $(b,+\infty)$.
Again it is sufficient to track the two \emph{extreme} $\argtwo$ witnesses: the earliest
witness and the last possible witness at each point.  

Let us formalize the notion of a
witness for $\U$ formulas.  Assume the current position to be $i$.  A position
$j \ge i$ is a witness for $\argone \U \argtwo$ at position $i$ if $j$ satisfies $\argtwo$
and for all positions $i \le k < j$, position $k$ satisfies $\argone$.  The smallest $j$
among all witnesses is called the earliest witness and the largest $j$ (if it exists) is
said to be the latest witness.  If $\argone\U\argtwo$ holds but there is no latest
witness, it means that there are infinitely many witnesses for $\argone \U \argtwo$ at
position $i$.  Our goal now is to track the time to the earliest and latest witnesses.  We
make use of this observation: position $j$ is a latest witness for $\argone \U \argtwo$
at position $i$ if
\begin{itemize}[nosep]
  \item firstly, it is a witness for $\argone \U \argtwo$, and

  \item secondly, it is not the case that $j$ satisfies $\argone$ and $j+1$ satisfies
  $\argone \U \argtwo$: in other words, at position $j$, $\neg (\argone \wedge \X (\argone
  \U \argtwo) )$ holds.  Otherwise, notice that $j$ is not the latest witness.
\end{itemize}

We now introduce three automata as shown in \Cref{fig:Until-1-sided-automata} with two
future clocks $x$ and $y$.  (i) $\AUfirst$ which owns and uses future clock $x$ to track
the time to the first $\argtwo$, (ii) $\AU$ which captures the semantics of the untimed
Until operator, using clock $x$ in the invariant of state $\ell_{1}$, and (iii) $\AUlast$
which owns and uses future clock $y$ and tracks the time to a point which satisfies
$\argtwo \wedge \neg (\argone \wedge \X (\argone \U \argtwo) )$.  The next lemma
formalizes how the three automata put together help deduce time to the earliest and latest
witnesses for $\argone \U \argtwo$, using which we can design a suitable output
function.

\begin{figure}
  \centering
  \raisebox{0.5mm}{\includegraphics[page=17,scale=1]{gastex-figures-pics.pdf}}
  \hfil
  \raisebox{4.5mm}{\includegraphics[page=16,scale=1]{gastex-figures-pics.pdf}}
  \hfil
  \includegraphics[page=18,scale=1]{gastex-figures-pics.pdf}
  \caption{Left: $\AUfirst$ guessing time to the earliest $\argtwo$ with future clock $x$. 
  $\release{x}$ stands for the non-deterministic update $x:\in\overline{\mathbb{R}}_{-}$, 
  allowing $x$ to be set to any non-positive value, including $-\infty$.
  \\ 
  Middle: $\AU$ for untimed until but which uses future clock $x$ in the invariant of 
  state $\ell_{1}$.  \\
  Right: $\AUlast$ guessing with future clock $y$ time to the latest witness of
  $\argone\U\argtwo$.
  }
  \label{fig:Until-1-sided-automata}
\end{figure}

\begin{restatable}{lemma}{lemuntilsidedbound}
  \label{lem:until-1sided-bound}
  Let $\Prop' = \Prop \uplus \{\argone, \argtwo\}$.  For every word $w = (a_0, \tau_0)
  (a_1, \tau_1) \cdots$ over $\Prop'$ there is a unique run 
  $C_0 \xra{\delta_0} C'_0 \xra{\overline{t}_0} C_1 \xra{\delta_1} C'_1
  \xra{\overline{t}_1} C_2 \cdots$ of $(\AUfirst,\AU,\AUlast)$.
  Moreover, for all $i \ge 0$:
  \begin{enumerate}
    \item $(w,i) \models \argone\U\argtwo$ iff
    $C_{i}(\pre\AU\state)=C'_{i}(\pre\AU\state)=\ell_1$,
    
    \item If $I$ is an upper-bound interval, we set $\outv_{\U_{I}}=(\pre\AU\state=\ell_1
    \wedge \Absolut{\pre\AUfirst{x}}\in I)$.  Then, for all $i\geq0$, we may check if
    $(w,i)\models\argone\U_{I}\argtwo$ with the test $\outv_{\U_{I}}$ during
    transition $\overline{t}_{i}$ from $C'_{i}$ to $C_{i+1}$.

    \item If $I$ is a lower-bound interval, we set
    $\outv_{\U_{I}}=(\pre\AU\state=\ell_1 \wedge
    \Absolut{\pre\AUlast{y}}\in I\cup\{+\infty\})$.  
    Then, for all $i\geq0$, we may check if $(w,i)\models\argone\U_{I}\argtwo$ with
    the test $\outv_{\U_{I}}$ during transition $\overline{t}_{i}$ from $C'_{i}$ to
    $C_{i+1}$.
  \end{enumerate}
\end{restatable}

As we are dealing with future, all three automata are non-deterministic.  However, the
only non-determinism in $\AUfirst$ and $\AUlast$ lies in the non-deterministic updates
(release).  The zone automata\footnote{Zone automata are finite abstractions of
(generalized) timed automata, used for efficient algorithms checking
reachability~\cite{Daws,AGGJS-CAV23}}
built from them are actually deterministic since the release operation results in a unique
zone.  Unlike in the previous section, we cannot hope to get a completely deterministic
network since with future modalities we must have non-determinism. 
But we can try to limit the non-deterministic branching for better performance in practice.
Towards this, our {\em networks of GTA with shared variables} for each future modality
allow sharing of clocks and predictions (compared e.g., to using combined transducers as
done in~\cite{AGGS-CONCUR24}).  The advantage of using different automata is that we can
now share the prediction of the earliest witness without insisting to share the latest
witness at a point.  This results in increased sharing and thus further decrease in
non-deterministic branching.  Our experimental results in the next section validate these
theoretical insights.

\begin{remark}
  The most interesting automaton is $\AU$ in \Cref{fig:Until-1-sided-automata}~(middle),
  which uses clock $x$ from $\AUfirst$, but does not own any clock.  Notice that it is a
  two-state automaton and does not require any B\"uchi acceptance condition.  In fact, in
  the untimed setting (i.e., in the absence of clocks), Until requires some B\"uchi
  acceptance condition and 3 states if the acceptance condition is state-based.
  Lifting this to the timed setting gives a 3-state transducer with a B\"uchi acceptance 
  condition in~\cite{AGGS-CONCUR24}.  However, we show in this paper that, surprisingly, 
  2 states are sufficient and no acceptance condition is required, by exploiting clocks
  and non-zenoness.  The main idea is that in state $\ell_1$, we employ an
  invariant\footnote{Note that, though we did not
    introduce state invariant in the model, it can easily be simulated with guards on all
    incoming transitions --for succinctness we chose to present it this way.}
  implying that $\argtwo$ will eventually occur.
  Under the non-zenoness assumption, stating that clock $x$
  has a finite (not $-\infty$) value, forces clock $x$ to be eventually released.  Hence,
  $\AUfirst$ must eventually take the loop labeled $\argtwo$.
  We highlight that going from 3 states to 2 states represents an exponential improvement
  in non-deterministic branching and runtime as we go from $3^{\mathcal{O}(|\Phi|)}$ to
  $2^{\mathcal{O}(|\Phi|)}$ in the number of zones that we will have to explore during
  reachability checking.  Also, reducing the number of B\"uchi acceptance conditions makes
  the liveness algorithm more efficient.  Further, we also note that if we remove the
  invariant, and use of the clock, we obtain the automaton for untimed {\em Weak-Until}.
\end{remark}

\subsubsection{Until with sharing (two-sided non-singular).}

We consider now the most difficult case of the General Until operator with a two-sided
non-singular interval, i.e., of the form $[b,c)$, $[b,c]$, $(b,c)$, or $(b,c]$ with $b\neq
c$, $c\neq+\infty$ and $b\neq0$ when the interval is left-closed.

To handle this case, we define the general until automaton $\AUgenI$ which will be 
synchronized with the automata $\AUfirst$, $\AU$ and $\AUlast$.
The set of states of $\AUgenI$ is $\{0,1,\ldots,N\}\times\{1,2\}$ where
$N=2+\left\lfloor\dfrac{b}{c-b}\right\rfloor$.
The initial state is $(0,1)$.
All states are accepting. 
The transitions are defined in \Cref{algo:Until-I-general}.
State $(0,2)$ is not reachable.

The graph of the automaton $\AUgenI$ when $N=3$ is depicted in 
\Cref{fig:Until-I-gen-automaton}.

\begin{figure}[tb]
  \centering
  \includegraphics[page=23,scale=.9]{gastex-figures-pics.pdf}
  \caption{\small
  Automaton $\AUgenI$ for the general until operator $\argone\U_{I}\argtwo$ when $N=3$. \\
  The labels of the transitions are given in \Cref{tbl:labels-trans-1,tbl:labels-trans-2}.}
  \label{fig:Until-I-gen-automaton}
\end{figure}

Using the automata $(\AUfirst,\AU,\AUlast)$, there are a few easy cases (Lines 2,3,4 of
\Cref{algo:Until-I-general}) when $\AUgenI$ is in its initial state:
\begin{description}
  \item[Line 2]  Since $0\notin I$, if the current position satisfies 
  $\argone\U_{I}\argtwo$ then it must satisfy $\argone\wedge\X(\argone\U\argtwo)$. 
  When the latter is not true, we output $0$ and stay in state $(0,1)$.
  
  Note that, for all transitions in $\AUgenI$, except Lines 2,9,10,11, the current
  position satisfies $\argone\wedge\X(\argone\U\argtwo)$.

  \item[Line 3] If $\neg(\Absolut{\post\AUfirst{x}}<I)$ then\footnote{We use $z<I$ (resp.\
  $z>I$) to mean that $z$ is less than (resp.\ greater than) all values in $I$.  For
  instance, if $I=[b,c)$ then $z<I$ (resp.\ $z>I$) translates to $z<b$ (resp.\ $z\geq c$).}, 
  either $\Absolut{\post\AUfirst{x}}\in I$ and the earliest $\argtwo$ (in the strict
  future) is a witness of $\argone\U_{I}\argtwo$, or $\Absolut{\post\AUfirst{x}}>I$ and
  $\argone\U_{I}\argtwo$ does not hold at the current position (even the earliest
  $\argtwo$ is too far away to be a witness for $\argone\U_{I}\argtwo$).
  
  \item[Line 4] Similarly, if $\neg(\Absolut{\post\AUlast{y}}>I)$ then, either
  $\Absolut{\post\AUlast{y}}\in I$ and the last witness of $\argone\U\argtwo$ is a witness
  of $\argone\U_{I}\argtwo$, or $\Absolut{\post\AUlast{y}}<I$ and $\argone\U_{I}\argtwo$
  does not hold at the current position.
\end{description}

If neither of the above cases hold, then we need to guess a potential
witness within $I$ and verify it. This requires substantial
book-keeping which we will now explain. Assume we are given a timed
word $w = (a_0, \tau_0) (a_1, \tau_1) \cdots$.
Let us a call $j\geq0$ a \emph{difficult point} if 
$w,j\models\argone\wedge\X(\argone\U\argtwo)$ and
$\Absolut{\post\AUfirst{x}}<I$ (the earliest $\argtwo$ in the strict future is too close)
and $\Absolut{\post\AUlast{y}}>I$ (the last witness of $\argone\U\argtwo$, if any, is too
far away).

This leaves the possibility for a witness within $I$.  So, for difficult points,
we need to make a guess whether we have a witness within $I$, and check it.  We
cannot keep making such predictions for every difficult point, as we have only finitely
many clocks.  Therefore, we will guess some special witnesses.  First we state a useful
property.

\begin{lemma}
  Let $j$ be a difficult point.  Let $j_\mathrm{last}>j$ be the position of the last
  witness for $\argone \U \argtwo$ at position $j$ (we let $j_\mathrm{last}=+\infty$ if
  there is no last witness,
  i.e., if there are infinitely many witnesses).  Then, for all $k$ such that 
  $j\le k<j_\mathrm{last}$, we have $w,k\models\argone\wedge\X(\argone\U\argtwo)$.
\end{lemma}

Therefore, all positions between $i$ and $j_\mathrm{last}$ (including $i$ and
$j_\mathrm{last}$) satisfy $\argone \U \argtwo$, and the automaton $\AU$ stays in state
$\ell_1$, while reading $j$ up to $j_\mathrm{last}$.

\begin{figure}[tbh]
  \centering
  \begin{tikzpicture}
    \draw [thin, gray] (0,0) -- (8.5,0); \draw (0,0) to (0, 0.2);
    \node at (0, 0.4) {\small $\tau_j$};
    \draw (7,0) to (7, 0.2); 
    \node at (7, 0.4) {\small $\tau_j + c$}; 
    \draw (6,0) to (6, -0.2); 
    \fill[red] (6, 0) circle (1.5pt); 
    \node at (6, -0.4) {\small $\tau_{j'}$}; 
    \fill [blue] (8.4,0) circle (1.5pt); 
    \node at (8.4, -0.4) {\small $\tau_{j''}$};
    \draw (8.4, 0) to (8.4, -0.2);

    \draw [blue] (0, 0.1) to (3.4, 0.1); 
    \draw (3.4,0) to (3.4, 0.2);
    \node at (3.4, 0.4) {\small $\tau_{j''} - b$};
  \end{tikzpicture}
  \caption{Illustration of a difficult point $j$. 
  The point $j'$ is the last $\argtwo$-witness such that $\neg(\tau_{j'}-\tau_j>I)$. 
  The point $j''$ is the first $\argtwo$-witness such that $\tau_{j''}-\tau_j>I$.}
  \label{fig:Until-idea}
\end{figure}

We will now come back to the idea of choosing special witnesses.  This is illustrated in
\Cref{fig:Until-idea}.  For a difficult point $j$, we let $j'>j$ be the greatest position
containing $\argtwo$ such that $\neg(\tau_{j'}-\tau_j>I)$.  Let $j''>j$ be the least
position containing $\argtwo$ such that $\tau_{j''}-\tau_j>I$.  So $j'$ is the last point
either within $I$ or ``before entering'' $I$ that is a witness, and $j''$ is the first
point ``after'' the interval which is a witness. These positions exist since $j$ is a
difficult point.

\begin{description}
  \item[Lines 5--7] If the conditions of Lines 2,3,4 do not pass, it means the automaton
  is reading a difficult point.
  While reading a difficult point $j$, let us make use of fresh future
  clocks $x_1$ and $y_1$ to predict the time to the two witnesses $j'$ and $j''$:
  $x_1=\tau_j-\tau_{j'}$ and $y_1=\tau_j-\tau_{j''}$ (Line~5).
  We will have to check that positions $j',j''$ contain $\argtwo$ and that no position
  $j'<k<j''$ contains $\argtwo$.  To this end, we go to state $(1,1)$ (Line 7).
  
  Notice that, if $w,j\models\argone\U_{I}\argtwo$ then there is a position $k>j$ which 
  contains $\argtwo$ and such that $\tau_k-\tau_j\in I$. From the properties of $j',j''$ 
  we deduce that $k\leq j'$ and then that $\tau_{j'}-\tau_j\in I$. This explains the 
  output in Line 7: $w,j\models\argone\U_{I}\argtwo$ iff $\Absolut{x_{1}}\in I$.
\end{description}

We explain now the transitions starting from state $(k,\ell)$ with $k\geq 1$.
We will once again take the help of \Cref{fig:Until-idea}. 
In state $(k,1)$, we are waiting for the witness $j'$ guessed while reading the difficult
point $j$, whereas in state $(k,2)$, we have already seen the witness $j'$ and we are
waiting for the witness $j''$ guessed while reading the difficult point $j$. 
In these states $(k, 1)$ and $(k, 2)$, the automaton encounters new positions: for each of
these positions, an easy check for whether $\argone \U_I \argtwo$ is true or not can be
done similar in spirit to Lines 2,3,4.  If not, it will be a difficult point again and
hence we are left with this question of whether we need to guess the two witnesses as in
\Cref{fig:Until-idea} for this newly encountered difficult point.  However, as we explain
below, we need not make these guesses for every difficult point.  We can restrict to
difficult points that are at a certain distance apart.

The algorithm first determines the output of the transition possibly making new guesses
(Lines 13--23), then it checks whether one (or two) of the expected witnesses is seen and
makes the necessary changes (Lines 24--36).
As in Lines 3,4 there are two cases where it is easy to determine the output: this is when
the first $\argtwo$ in the strict future is not too close (Line 14) and when the last
witness of $\argone\U\argtwo$ in the strict future is not too far away (Line 16).
Otherwise, when we reach Line 17, we have a new difficult point.  We show next that for
some of these, we can determine the output without making new guesses.

Consider a point $i$ with $j<i<j''$ and $\neg(\tau_{j''}-\tau_i<I)$ (corresponding to the 
test on Line 17 of \Cref{algo:Until-I-general} and the blue line in
\Cref{fig:Until-idea}).  We claim that $w,i\models\argone\U_I\argtwo$ iff 
$\tau_{j'}-\tau_i\in I$ or $\tau_{j''}-\tau_i\in I$. The right to left direction is 
clear. Conversely, assume that $w,i\models\argone\U_I\argtwo$ and 
$\neg(\tau_{j''}-\tau_i\in I)$. Using $\neg(\tau_{j''}-\tau_i<I)$ we get 
$\tau_{j''}-\tau_i>I$. Since all points $k$ with $j'<k<j''$ do not contain $\argtwo$, we 
deduce that the witness $i'$ for $i$ must satisfy $i'\leq j'$. Therefore, 
$\tau_{j'}-\tau_i\geq\tau_{i'}-\tau_i\in I$  and
$\tau_{j'}-\tau_i\leq\tau_{j'}-\tau_j\not> I$. We deduce that
$\tau_{j'}-\tau_i\in I$ and $j'$ is a witness for $i$.
This explains the output defined on Line 18 of \Cref{algo:Until-I-general}.

Assume now that we reach Line 19 at some point $i$ with $j<i\leq j''$: $i$ is a difficult 
point which satisfies $\tau_{j''}-\tau_i<I$.
Then, any possible witness of $w,i\models\argone\U_I\argtwo$ must be after $j''$ and
we need to make new guesses using fresh clocks, say $x_2, y_2$.  The explanations for
Lines 20--22 is similar to the description of Lines 5--7 given above.

We will call the difficult points where we start new guesses as \emph{special difficult points}.  
We show that the distance between two special difficult points is at least $c-b$ 
(which is $\ge 1$, as we consider non-singular intervals with bounds in $\mathbb{N}$).
Suppose $j=j_1<j_2<\cdots<j_k<j''$ is the sequence of special points between $j$ and $j''$.
We have $\tau_{j''}-\tau_{j_{1}}>I$ and $\tau_{j''}-\tau_{j_{2}}<I$.
We deduce that $\tau_{j_{2}}-\tau_{j_{1}}\geq c-b$.
The proof is similar for all pairs of consecutive special difficult points.  
Using $\tau_{j''}-\tau_{j_{2}}<I$, we deduce that
$(k-2)(c-b)\leq\tau_{j_k}-\tau_{j_{2}}\leq b$. 
This entails $k\leq N=2+\left\lfloor\dfrac{b}{c-b}\right\rfloor$%
\footnote{When $I$ is not both left-open and right-open, the distance between two special
difficult points is $>c-b$.  In this case, we may use the bound
$N=1+\left\lceil\dfrac{b}{c-b}\right\rceil$.}.
By the time we reach $j''$, we need to have opened at most $N$ special difficult points,
and hence we can work with the extra clocks $x_1, y_1, x_2, y_2, \dots, x_N, y_N$.

So far, we have explained Lines 13--23 of \Cref{algo:Until-I-general}.
To be more concrete, the description was for $(k,\ell)$ with $k=1$ but it is similar for 
$(k,\ell)$ with $k\neq0$. Lines 13-23 determine the output of the transition and whether 
or not we see a new special difficult point which requires new guesses (Line 21).
Notice that $k$ has been incremented on Line 20 if we are at a new special difficult 
point.

We turn now to Lines 24--36 which check whether or not the current point is a witness 
guessed in the past, make the necessary updates and determine the target state of the 
transition. 
A state $(k,\ell)$ with $k\neq0$ indicates that there are $k$ active special difficult
points currently $j=j_1<j_2<\cdots<j_k<j''$ where we have predicted
$x_1, y_1, \ldots, x_k, y_k$ respectively.  

When $\ell=1$, we are waiting for the $j'$ witness of the oldest active point and we have
the invariant:
\begin{equation}
\post\AUlast{y}\leq y_k < x_k \le y_{k-1} < x_{k-1} \le \cdots \le y_2 < x_2 \le y_1 < x_1 \leq 0 \,.
  \label{eq:inv-ell1}
\end{equation}
Notice that we may have $x_i=y_{i-1}$: the ``first'' witness ($j_i'$) of the $i^{th}$
special point could coincide with the ``second'' witness ($j_{i-1}''$) of the $(i-1)^{th}$
point.  This leads to certain subtleties, which we will come to in \Cref{ex:general-until}.

When $\ell=2$, the $j'$ witness has been seen, and we
are waiting for its $j''$ witness (the space between $\tau_{j'}$ and $\tau_{j''}$ in
\Cref{fig:Until-idea}). The invariant is
\begin{equation}
  \post\AUlast{y}\leq y_k < x_k \le y_{k-1} < x_{k-1} \le \cdots \le y_2 < x_2 \le y_1 \leq 0
  \quad\text{and}\quad x_{1}=-\infty \,.
  \label{eq:inv-ell2}
\end{equation}
\begin{description}
  \item[Line 26]  If the current point does not contain $\argtwo$, then it is not a 
  witness and the target state is simply $(k,\ell)$.

  \item[Lines 27--32] $\ell=2$ and the current point contains $\argtwo$.  Then it must be
  the witness $j''_1$ of the oldest active special point (recall that no positions between
  $j'_1$ and $j''_1$ contain $\argtwo$).  We check with $y_{1}=0$ that the time from $j_1$
  to $j''_1$ was predicted correctly.
  When this happens, $x_1,y_1$ are no longer useful.  If $k=1$, there are no other active
  special points and we simply go back to the initial state $(0,1)$.  If $k>1$, all the
  higher clocks are shifted using the permutation \textsf{shift} defined as
  $(x_1,y_1,\dots,x_k,y_k):=(x_2,y_2,\dots,x_k,y_k,x_1,y_1)$ and keeps the other clocks
  unchanged. Notice that $k$ is decremented in this case.
  
  \item[Lines 33--35] The current point contains $\argtwo$ and either $\ell=1$ (waiting 
  for witness $j'_1$) or $\ell=2$ and we have executed Lines 27--32. In the second case, 
  the current point is $j''_1$ as explained above, and we are now 
  waiting for the witness $j'_2$. Notice that we may have $j'_2=j''_1$. 
  Notice that the witness that was called $j'_2$ before the shift, is called $j'_1$ after 
  the shift.
  Hence, in both cases $\ell\in\{1,2\}$, we are waiting for the witness $j'_1$.
  
  If $x_1\neq0$ then the current point is \emph{not} the witness $j'_1$ we are waiting 
  for, in which case we simply go to state $(k,1)$ (Line 34).

  Observe that $x_1=0$ does not imply that the current point is $j'_1$, since there could
  be a sequence of positions containing $\argtwo$ occurring at the same time, and $j'_1$
  is the last of them.  Therefore, when $x_1=0$, we make a non-deterministic choice
  whether the current point is $j'_1$ (Line 35) or not (Line 34).
  
  \medskip

  \item[Lines 9--11] Finally, let us explain these lines.  
  Recall that in states $(k,\ell)$ with $k\geq1$, $\argone\U\argtwo$ holds and we are 
  waiting for witnesses.
  When the test $\neg(\postM\argone\wedge\post\AU\state=\ell_{1})$ of Line 9 succeeds, the
  current point is the last possible witness ($j_\mathrm{last}$) for $\argone\U\argtwo$.
  Notice that, if the predictions were correct, this may only happen when we see $\argtwo$ 
  and the automaton is in state $(1,2)$ (Line 10).
  Indeed, in state $(k,1)$ with $k\geq1$ the invariant~\eqref{eq:inv-ell1} implies 
  $\post\AUlast{y}\leq y_1<x_1\leq0$ which is not possible if we are at the last possible
  witness.  Similarly, in state $(k,\ell)$ with $k\geq2$ the invariant
  $\post\AUlast{y}\leq y_2<x_2\leq0$ cannot hold at the last witness.  Therefore, the
  current point is both the last witness and the second witness of $j_1$:
  $j_\mathrm{last}=j''_1$.  We check that indeed $y_1=0$ and we go back to the initial
  state (Line 11).  The output is false since $0\notin I$.
\end{description}

\begin{algorithm}[tbp]
  \small
\caption{Automaton $\AUgenI$ when $I$ is a two-sided interval, in particular, $0\notin I$.} 
\label{algo:Until-I-general}
\begin{algorithmic}[1]

  \Case{State $(0,1)$} \Comment initial state of $\AUgenI$
    \IfThen{$\neg(\postM\argone\wedge\post\AU\state=\ell_{1})$}{\Output 0; \goto (0,1)}
    \Comment $\neg(\argone\wedge\X(\argone\U\argtwo))$\EndIfThen
    \IfThen{$\neg(\Absolut{\post\AUfirst{x}}<I)$}{\Output $\Absolut{\post\AUfirst{x}}\in I$; \goto (0,1)}\EndIfThen
    \IfThen{$\neg(\Absolut{\post\AUlast{y}}>I)$}{\Output $\Absolut{\post\AUlast{y}}\in I$; \goto (0,1)}\EndIfThen
    \State $x_{1}:\in\{t\in(-\infty,0]\mid\neg(-t>I)\}$; $y_{1}:\in\{t\in(-\infty,0]\mid -t>I\}$
    \Comment Special difficult point
    \State Check $\post\AUlast{y}\leq y_1 \wedge  x_1\leq\post\AUfirst{x}$
    \State \Output $(\Absolut{x_{1}}\in I)$; \goto (1,1)  \Comment boolean value 
  \EndCase
  
  \Case{State $(k,\ell)$ with $k>0$ and $\ell\in\{1,2\}$} \Comment waiting witness
  predicted by $x_{1}$ if $\ell=1$ or $y_{1}$ if $\ell=2$
    \If{$\neg(\postM\argone\wedge\post\AU\state=\ell_{1})$}
      \Comment Last witness: $\neg(\argone\wedge\X(\argone\U\argtwo))$
      \State Check $(k,\ell)=(1,2)$ \Comment Only possible when $(k,\ell)=(1,2)$
      \State Check $\postM\argtwo\wedge y_{1}=0$; $y_{1}:=-\infty$; \Output 0; \goto (0,1)
    \Else \Comment Not the last witness
      \If{$\neg(\Absolut{\post\AUfirst{x}}<I)$}
        \State \Output $\Absolut{\post\AUfirst{x}}\in I$ \Comment boolean value
      \ElsIf{$\neg(\Absolut{\post\AUlast{y}}>I)$} 
        \State \Output $(\Absolut{\post\AUlast{y}}\in I)$ \Comment boolean value
      \ElsIf{$\neg(\Absolut{y_{k}}<I)$}
        \State \Output $(\Absolut{x_{k}}\in I) \vee (\Absolut{y_{k}}\in I)$ \Comment boolean value 
      \Else  \Comment new special difficult point 
        \State $k\gets k+1$; 
        \State $x_{k}:\in\{t\in(-\infty,0]\mid\neg(-t>I)\}$; $y_{k}:\in\{t\in(-\infty,0]\mid -t>I\}$
        \State Check $\post\AUlast{y}\leq y_k \wedge x_k\leq y_{k-1} \wedge x_k\leq\post\AUfirst{x}$;
        \Output $(\Absolut{x_{k}}\in I)$ \Comment boolean value 
      \EndIf
      \If{$\neg\postM\argtwo$} \Comment not a witness 
        \State \goto $(k,\ell)$
      \Else
        \If{$\ell=2$}
          \State Check $y_{1}=0$; $y_{1}:=-\infty$
          \Comment witness predicted by $y_{1}$ 
          \IfThen{$k=1$}{\goto $(0,1)$} 
          \EndIfThen
          \State $(x_1,y_1,\dots,x_k,y_k):=(x_2,y_2,\dots,x_k,y_k,x_1,y_1)$
          \Comment permutation shift
          \State $k\gets k-1$; 
        \EndIf
        
        \Choose{}
          \When{True}{\goto $(k,1)$}   
            \Comment not the witness predicted by (the possibly new) $x_{1}$ 
          \EndWhen
          \When{$(x_{1}=0)$}{$x_{1}:=-\infty$; \goto $(k,2)$} 
            \Comment witness predicted by $x_{1}$ 
          \EndWhen
        \EndChoose
      \EndIf
    \EndIf
  \EndCase

\end{algorithmic}
\end{algorithm}

All this discussion leads to the correctness of the construction for the general Until case, as stated by the following lemma.

\begin{lemma}\label{lem:until-general}
  Let $\Prop' = \Prop \uplus \{\argone, \argtwo\}$.  For every word $w = (a_0, \tau_0)
  (a_1, \tau_1) \cdots$ over $\Prop'$ there is a unique run 
  $C_0 \xra{\delta_0} C'_0 \xra{\overline{t}_0} C_1 \xra{\delta_1} C'_1
  \xra{\overline{t}_1} C_2 \cdots$ of $(\AUfirst,\AU,\AUlast,\AUgenI)$.
  Moreover, for all $i \ge 0$: $(w,i) \models \argone\U_I \argtwo$ iff the output of $\AUgenI$ in transition $\overline{t_i}$ is $1$. 
\end{lemma}

Finally, combining all lemmas for each of the individual operators, we may prove
\Cref{thm:full-mtl-theorem} following the same lines
of the proof of \Cref{thm:fast-mtl-theorem} in \Cref{sec:proof-of-fast-mtl}.

\begin{remark}
  \label{rmk:finite-words}
  Note that \Cref{thm:fast-mtl-theorem} and \Cref{thm:full-mtl-theorem} can
  be extended to the case of finite words, where we want to reach a bad configuration
  instead of eventually remaining in bad configurations.  The only modification required
  concerns the future modalities operator.  We can use the same automata defined above,
  but the definition of a bad configuration $C$ for finite words asks, in addition to the
  existing conditions, that (1) for each occurrence of $\AX$ the final state is
  $\ell_{0}$ ($C(\pre{\AX}\state)=\ell_{0}$) and (2) for each occurrence
  of $\AU$ the final state is $\ell_{0}$ ($C(\pre\AU\state)=\ell_{0}$).
\end{remark}

\begin{table}[tbp]
  \hspace{-20mm}
$\small
\begin{array}{r@{\colon}l} 
  \Delta_{01}^{01} & 
  \left\{\!\!\!\begin{array}{l@{\;\mid\;}ll}
    \neg C & 0 & \text{\color{red} L2} \\
    C \wedge -x\geq b & (-x\in I) & \text{\color{red} L3} \\
    C \wedge -x<b \wedge -y<c & (-y\in I) & \text{\color{red} L4} 
  \end{array}\right.
  \\[5ex]
  \Delta_{01}^{11} &
    C \wedge -x<b \wedge -y\geq c; 
    [x_1,y_1]; y\leq y_1\leq-c<x_1\leq x
    \mid (-x_1\in I)
    \quad\text{\color{red} L5--7}
  \\[4ex]
  \Delta_{11}^{11} & 
  \left\{\!\!\!\begin{array}{l@{\;\mid\;}ll}
    C \wedge -x\geq b & (-x\in I) & \text{\color{red} L14} \\
    C \wedge -x<b \wedge -y<c & (-y\in I) & \text{\color{red} L16} \\
    C \wedge -x<b \wedge -y\geq c \wedge -y_1\geq b & (-x_1\in I) \vee (-y_1\in I) & \text{\color{red} L18}
  \end{array}\right\} \text{\color{red} + (25 or 34)}
  \\[5ex]
  \Delta_{11}^{12} & 
  \left\{\!\!\!\begin{array}{l@{\;\mid\;}ll}
    C \wedge \postM\argtwo \wedge x_1=0 \wedge -x\geq b; x_1:=-\infty & (-x\in I) & \text{\color{red} L14} \\
    C \wedge \postM\argtwo \wedge x_1=0 \wedge -x<b \wedge -y<c; x_1:=-\infty & (-y\in I) & \text{\color{red} L16} \\
    C \wedge \postM\argtwo \wedge x_1=0 \wedge -x<b \wedge -y\geq c \wedge -y_1\geq b; x_1:=-\infty & (-y_1\in I) & \text{\color{red} L18} 
  \end{array}\right\} \text{\color{red} + 35}
  \\[5ex]
  \Delta_{11}^{21} &
    C \wedge -x<b \wedge -y\geq c \wedge -y_1<b; 
    [x_2,y_2]; y\leq y_2\leq-c<x_2\leq y_1
    \mid (-x_2\in I)
    \quad\text{\color{red} L20--22 + (25 or 34)}
  \\[1ex]
  \Delta_{11}^{22} & 
    C \wedge \postM\argtwo \wedge x_1=0 \wedge -x<b \wedge -y\geq c \wedge -y_1<b; 
    x_1:=-\infty; 
    [x_2,y_2]; y\leq y_2\leq-c<x_2\leq y_1
    \mid (-x_2\in I) 
    \quad\text{\color{red} L20--22 + 35}
  \\[4ex]
  \Delta_{21}^{21} & 
  \left\{\!\!\!\begin{array}{l@{\;\mid\;}ll}
    C \wedge -x\geq b & (-x\in I) & \text{\color{red} L14} \\
    C \wedge -x<b \wedge -y<c & (-y\in I) & \text{\color{red} L16} \\
    C \wedge -x<b \wedge -y\geq c \wedge -y_2\geq b & (-x_2\in I) \vee (-y_2\in I) & \text{\color{red} L18} 
  \end{array}\right\} \text{\color{red} + (25 or 34)}
  \\[5ex]
  \Delta_{21}^{22} & 
  \left\{\!\!\!\begin{array}{l@{\;\mid\;}ll}
    C \wedge \postM\argtwo \wedge x_1=0 \wedge -x\geq b; x_1:=-\infty & (-x\in I) & \text{\color{red} L14} \\
    C \wedge \postM\argtwo \wedge x_1=0 \wedge -x<b \wedge -y<c; x_1:=-\infty & (-y\in I) & \text{\color{red} L16} \\
    C \wedge \postM\argtwo \wedge x_1=0 \wedge -x<b \wedge -y\geq c \wedge -y_2\geq b; x_1:=-\infty & (-x_2\in I) \vee (-y_2\in I) & \text{\color{red} L18} 
  \end{array}\right\} \text{\color{red} + 35}
  \\[5ex]
  \Delta_{21}^{31} &
    C \wedge -x<b \wedge -y\geq c \wedge -y_2<b; 
    [x_3,y_3]; y\leq y_3\leq-c<x_3\leq y_2
    \mid (-x_3\in I)
    \quad\text{\color{red} L20--22 + (25 or 34)}
  \\[1ex]
  \Delta_{21}^{32} & 
    C \wedge \postM\argtwo \wedge x_1=0 \wedge -x<b \wedge -y\geq c \wedge -y_2<b; 
    x_1:=-\infty; 
    [x_3,y_3]; y\leq y_3\leq-c<x_3\leq y_2
    \mid (-x_3\in I) 
    \quad\text{\color{red} L20--22 + 35}
  \\[4ex]
  \Delta_{31}^{31} & 
  \left\{\!\!\!\begin{array}{l@{\;\mid\;}ll}
    C \wedge -x\geq b & (-x\in I) & \text{\color{red} L14} \\
    C \wedge -x<b \wedge -y<c & (-y\in I) & \text{\color{red} L16} \\
    C \wedge -x<b \wedge -y\geq c \wedge -y_3\geq b & (-x_3\in I) \vee (-y_3\in I) & \text{\color{red} L18} 
  \end{array}\right\} \text{\color{red} + (25 or 34)}
  \\[5ex]
  \Delta_{31}^{32} & 
  \left\{\!\!\!\begin{array}{l@{\;\mid\;}ll}
    C \wedge \postM\argtwo \wedge x_1=0 \wedge -x\geq b; x_1:=-\infty & (-x\in I) & \text{\color{red} L14} \\
    C \wedge \postM\argtwo \wedge x_1=0 \wedge -x<b \wedge -y<c; x_1:=-\infty & (-y\in I) & \text{\color{red} L16} \\
    C \wedge \postM\argtwo \wedge x_1=0 \wedge -x<b \wedge -y\geq c \wedge -y_3\geq b; 
    x_1:=-\infty & (-x_3\in I) \vee (-y_3\in I) & \text{\color{red} L18} 
  \end{array}\right\} \text{\color{red} + 35}
\end{array}$
  
  \bigskip
  \caption{Labels of transitions from states $(k,1)$ with $0\leq k\leq 3$. \\
  We assume that $I=[b,c)$ with $0<b<c<+\infty$ and $N=3$
  (where $N=1+\left\lceil\dfrac{b}{c-b}\right\rceil$). \\
  To lighten the notation, we simply write $x$ for $\post\AUfirst{x}$ and $y$ for 
  $\post\AUlast{y}$. \\
  We also let $C=\postM\argone\wedge\post\AU\state=\ell_{1}$ corresponding
  to $\argone\wedge\X(\argone\U\argtwo)$. \\
  The release of a future clock is written $[z]$ and stands for $z:\in(-\infty,0]$. \\
  Actually, the invariants imply that $C$ holds in all states $(k,1)$ with
  $k\geq1$ or $(k,2)$ with $k\geq2$.  \\
  Hence we can remove the $C$ check from all transitions starting from those states. \\
  Notice also that $\argone\U\argtwo$ holds in all states $(k,\ell)\neq(0,1)$.  
  Hence in those states, $\neg\argtwo$ implies $C$ and we may also remove $C$ from
  transitions checking $\neg\postM\argtwo$ starting from state $(1,2)$.
  }
  \label{tbl:labels-trans-1}
\end{table}

\begin{table}[tbp]
  \hspace{-10mm}
$\small
\begin{array}{r@{\colon}l} 
  \Delta_{12}^{01} &
  \left\{\!\!\!\begin{array}{l@{\;\mid\;}ll}
    \neg C \wedge \postM\argtwo \wedge y_1=0; y_1:=-\infty & 0 & \text{\color{red} L11} \\
    C \wedge \postM\argtwo \wedge y_1=0 \wedge -x\geq b; y_1:=-\infty & (-x\in I) & \text{\color{red} L14 + 28,29} \\
    C \wedge \postM\argtwo \wedge y_1=0 \wedge -x<b \wedge -y<c; y_1:=-\infty & (-y\in I) & \text{\color{red} L16 + 28,29} 
  \end{array}\right.
  \\[5ex]
  \Delta_{12}^{11} &
    C \wedge \postM\argtwo \wedge y_1=0 \wedge -x<b \wedge -y\geq c; 
    [x_1,y_1]; y\leq y_1\leq-c< x_1\leq x
    \mid (-x_1\in I)
    \quad\text{\color{red} L20--22 + 28,30,31 + 34}
  \\[1ex]
  \Delta_{12}^{12} & 
  \left\{\!\!\!\begin{array}{l@{\;\mid\;}ll}
    C \wedge \neg\postM\argtwo \wedge -x\geq b & (-x\in I) & \text{\color{red} L14+25} \\
    C \wedge \neg\postM\argtwo \wedge -x<b \wedge -y<c & (-y\in I) & \text{\color{red} L16+25} \\
    C \wedge \neg\postM\argtwo \wedge -x<b \wedge -y\geq c \wedge -y_1\geq b & (-y_1\in I) & \text{\color{red} L18+25} \\
    C \wedge \postM\argtwo \wedge -x<b \wedge -y\geq c \wedge y_1=0; 
    [y_1]; y\leq y_1\leq-c
    & 0 & \text{\color{red} L20--22 + 28,30,31 + 35} 
  \end{array}\right.
  \\[6ex]
  \Delta_{12}^{22} &
    C \wedge \neg\postM\argtwo \wedge -x<b \wedge -y\geq c \wedge -y_1<b; 
    [x_2,y_2]; y\leq y_2\leq-c< x_2\leq y_1
    \mid (-x_2\in I)
    \quad\text{\color{red} L20--22 + 25}
  \\[3ex]
  \Delta_{22}^{22} & 
  \left\{\!\!\!\begin{array}{l@{~}ll}
    C \wedge \neg\postM\argtwo \wedge -x\geq b &\mid (-x\in I) & \text{\color{red} L14 + 25} \\
    C \wedge \neg\postM\argtwo \wedge -x<b \wedge -y<c &\mid (-y\in I) & \text{\color{red} L16 + 25} \\
    C \wedge \neg\postM\argtwo \wedge -x<b \wedge -y\geq c \wedge -y_2\geq b &\mid (-y_2\in I) & \text{\color{red} L18 + 25} \\
    C \wedge \postM\argtwo \wedge -x<b \wedge -y\geq c \wedge -y_2<b \wedge y_1=0 \wedge x_2=0; \\
    \hspace{20mm} y_1:=y_2; 
    [x_2,y_2]; y\leq y_2\leq-c< x_2\leq y_1
    &\mid (-x_2\in I)
    & \text{\color{red} L20--22 + 28,30,31 + 35} 
  \end{array}\right.
  \\[7ex]
  \Delta_{22}^{32} &
    C \wedge \neg\postM\argtwo \wedge -x<b \wedge -y\geq c \wedge -y_2<b; 
    [x_3,y_3]; y\leq y_3\leq-c< x_3\leq y_2
    \mid (-x_3\in I)
    \quad\text{\color{red} L20--22 + 25}
  \\[1ex]
  \Delta_{22}^{11} &
  \left\{\!\!\!\begin{array}{l@{~}ll}
    C \wedge \postM\argtwo \wedge y_1=0 \wedge -x\geq b; \\
    \hspace{20mm} (x_1,y_1,x_2,y_2):=(x_2,y_2,-\infty,-\infty) &\mid (-x\in I) 
    & \text{\color{red} L14 + 28,30,31 + 34} \\
    C \wedge \postM\argtwo \wedge y_1=0 \wedge -x<b \wedge -y<c; \\
    \hspace{20mm} (x_1,y_1,x_2,y_2):=(x_2,y_2,-\infty,-\infty) &\mid (-y\in I) 
    & \text{\color{red} L16 + 28,30,31 + 34} \\
    C \wedge \postM\argtwo \wedge y_1=0 \wedge -x<b \wedge -y\geq c \wedge -y_2\geq b; \\
    \hspace{20mm} (x_1,y_1,x_2,y_2):=(x_2,y_2,-\infty,-\infty) &\mid (-x_1\in I)\vee(-y_1\in I)
    & \text{\color{red} L18 + 28,30,31 + 34} 
  \end{array}\right.
  \\[8ex]
  \Delta_{22}^{21} & 
  \left\{\!\!\!\begin{array}{l@{~}ll}
    C \wedge \postM\argtwo \wedge y_1=0 \wedge -x<b \wedge -y\geq c \wedge -y_2<b; \\
    \hspace{20mm} (x_1,y_1):=(x_2,y_2); 
    [x_2,y_2]; y\leq y_2\leq-c< x_2\leq y_1
    &\mid (-x_2\in I)
    & \text{\color{red} L20--22 + 28,30,31 + 34}
  \end{array}\right.
  \\[3ex]
  \Delta_{22}^{12} &
  \left\{\!\!\!\begin{array}{l@{~}ll}
    C \wedge \postM\argtwo \wedge y_1=0 \wedge x_2=0 \wedge -x\geq b; \\
    \hspace{20mm} (y_1,x_2,y_2):=(y_2,-\infty,-\infty) &\mid (-x\in I) 
    & \text{\color{red} L14 + 28,30,31 + 35} \\
    C \wedge \postM\argtwo \wedge y_1=0 \wedge x_2=0 \wedge -x<b \wedge -y<c; \\
    \hspace{20mm} (y_1,x_2,y_2):=(y_2,-\infty,-\infty) &\mid (-y\in I) 
    & \text{\color{red} L16 + 28,30,31 + 35} \\
    C \wedge \postM\argtwo \wedge y_1=0 \wedge x_2=0 \wedge -x<b \wedge -y\geq c \wedge -y_2\geq b; \\
    \hspace{20mm} (y_1,x_2,y_2):=(y_2,-\infty,-\infty) &\mid (-y_1\in I) 
    & \text{\color{red} L18 + 28,30,31 + 35} 
  \end{array}\right.
  \\[10ex]
  \Delta_{32}^{32} & 
  \left\{\!\!\!\begin{array}{l@{~}ll}
    C \wedge \neg\postM\argtwo \wedge -x\geq b &\mid (-x\in I) & \text{\color{red} L14 + 25} \\
    C \wedge \neg\postM\argtwo \wedge -x<b \wedge -y<c &\mid (-y\in I) & \text{\color{red} L16 + 25} \\
    C \wedge \neg\postM\argtwo \wedge -x<b \wedge -y\geq c \wedge -y_3\geq b &\mid (-y_3\in I) & \text{\color{red} L18 + 25} \\
    C \wedge \postM\argtwo \wedge -x<b \wedge -y\geq c \wedge -y_3<b \wedge y_1=0 \wedge x_2=0; \\
    \hspace{10mm} (y_1,x_2,y_2):=(y_2,x_3,y_3); 
    [x_3,y_3]; y\leq y_3\leq-c< x_3\leq y_2
    &\mid (-x_3\in I)
    & \text{\color{red} L20--22 + 28,30,31 + 35} 
  \end{array}\right.
  \\[7ex]
  \Delta_{32}^{21} &
  \left\{\!\!\!\begin{array}{l@{~}ll}
    C \wedge \postM\argtwo \wedge y_1=0 \wedge -x\geq b; \\
    \hspace{10mm} (x_1,y_1,x_2,y_2,x_3,y_3):=(x_2,y_2,x_3,y_3,-\infty,-\infty)
    &\mid (-x\in I) & \text{\color{red} L14 + 28,30,31 + 34} \\
    C \wedge \postM\argtwo \wedge y_1=0 \wedge -x<b \wedge -y<c; \\
    \hspace{10mm} (x_1,y_1,x_2,y_2,x_3,y_3):=(x_2,y_2,x_3,y_3,-\infty,-\infty)
    &\mid (-y\in I) & \text{\color{red} L16 + 28,30,31 + 34} \\
    C \wedge \postM\argtwo \wedge y_1=0 \wedge -x<b \wedge -y\geq c \wedge -y_3\geq b; \\
    \hspace{10mm} (x_1,y_1,x_2,y_2,x_3,y_3):=(x_2,y_2,x_3,y_3,-\infty,-\infty)
    &\mid (-x_2\in I)\vee(-y_2\in I) & \text{\color{red} L18 + 28,30,31 + 34} 
  \end{array}\right.
  \\[8ex]
  \Delta_{32}^{31} & 
  \left\{\!\!\!\begin{array}{l@{~}ll}
    C \wedge \postM\argtwo \wedge y_1=0 \wedge -x<b \wedge -y\geq c \wedge -y_3<b; \\
    \hspace{10mm} (x_1,y_1,x_2,y_2):=(x_2,y_2,x_3,y_3); 
    [x_3,y_3]; y\leq y_3\leq-c< x_3\leq y_2
    &\mid (-x_3\in I) & \text{\color{red} L20--22 + 28,30,31 + 34}
  \end{array}\right.
  \\[3ex]
  \Delta_{32}^{22} &
  \left\{\!\!\!\begin{array}{l@{~}ll}
    C \wedge \postM\argtwo \wedge y_1=0 \wedge x_2=0 \wedge -x\geq b; \\
    \hspace{10mm} (y_1,x_2,y_2,x_3,y_3):=(y_2,x_3,y_3,-\infty,-\infty) 
    &\mid (-x\in I) & \text{\color{red} L14 + 28,30,31 + 35} \\
    C \wedge \postM\argtwo \wedge y_1=0 \wedge x_2=0 \wedge -x<b \wedge -y<c; \\
    \hspace{10mm} (y_1,x_2,y_2,x_3,y_3):=(y_2,x_3,y_3,-\infty,-\infty) 
    &\mid (-y\in I) & \text{\color{red} L16 + 28,30,31 + 35} \\
    C \wedge \postM\argtwo \wedge y_1=0 \wedge x_2=0 \wedge -x<b \wedge -y\geq c \wedge -y_3\geq b; \\
    \hspace{10mm} (y_1,x_2,y_2,x_3,y_3):=(y_2,x_3,y_3,-\infty,-\infty) 
    &\mid (-y_2\in I) & \text{\color{red} L18 + 28,30,31 + 35} 
  \end{array}\right.
\end{array}$
  
  \bigskip
  \caption{Labels of transitions from states $(k,2)$ with $1\leq k\leq 3$. \\
  }
  \label{tbl:labels-trans-2}
\end{table}

\begin{example}\label{ex:general-until}
  To understand the General Until construction better, let us explain how
  \Cref{algo:Until-I-general} works for the case of $N=3$ and produces the
  automaton depicted in \Cref{fig:Until-I-gen-automaton}.  The transitions of the
  automaton are labeled $\Delta_{ij}^{k\ell}$ to denote a transition going from state
  $(i,j)$ (denoted $ij$ in the automaton) to state $(k,\ell)$ and are detailed in
  \Cref{tbl:labels-trans-1,tbl:labels-trans-2}.  Our goal in doing so is not
  to argue correctness of the algorithm (which has been done earlier), but to show how the
  algorithm produces the resulting automaton.  \Cref{tbl:labels-trans-1} considers
  all transitions from states of the form $(k,1)$ for different values of
  $k\in\{0,1,2,3\}$ (i.e., the top 4 states in \Cref{fig:Until-I-gen-automaton}),
  while \Cref{tbl:labels-trans-2} considers the transitions from states $(k,2)$ (the
  bottom 3 states).

  First, note that we consider an interval that is closed on one side and open on the
  other side, i.e., $I=[b,c)$ with $0<b<c<+\infty$, so that we can demonstrate how the
  rather abstract guards in the algorithm (e.g., $x \in I, x < I, x \geq I$) are
  instantiated in the case of both open and closed intervals.
  Second, note that the number of states in the automaton $\AUgenI$ is $2N+1=7$ and the
  number of clocks is $2N=6$, which matches the bounds given in
  \Cref{thm:full-mtl-theorem} for this case.

  Next, we explain the states from which there are some \emph{special} transitions, namely
  the states $(0,1)$ and $(1,2)$.
  These are the states from which you could potentially see a position that satisfies
  $\neg C$, where $C$ denotes $\postM\argone\wedge\post\AU\state=\ell_{1}$ corresponding
  to $\argone\wedge\X(\argone\U\argtwo)$.  $\neg C$ means that at the current position,
  there are no witnesses of $\argone\U\argtwo$ in the strict future.

  \noindent {\em From state $(0,1)$.} 
  The automaton starts in state $(0,1)$ and stays in state $(0,1)$ according to transition
  $\Delta_{01}^{01}$ on the self-loop corresponding simply to lines \Red{2,3,4}.  These
  are the cases where, either the formula $\argone\U_I\argtwo$ is not satisfied, or the
  satisfaction of the formula can be determined by only using the first witness or the
  last witness.

  If the satisfaction of the current position using only the first and last witnesses
  cannot be determined, we need to guess a new set of potential witnesses and verify it,
  towards which the automaton takes the transition $\Delta_{01}^{11}$ to $(1,1)$.  As can
  be seen the guard and output corresponds to conjunction of lines \Red{5-7}.

  \noindent {\em From states $(k,1)$.} 
  Now, from each state $(k,1)$, we check if one of the expected witnesses is seen; the
  first $x$, the last $y$ or the $k^{th}$ set of witnesses are seen corresponding to lines
  \Red{14,16,18}.  We also make the necessary changes: either stay in the same state
  (lines \Red{25} or \Red{34}) or go to $(k,2)$ according to line \Red{35}.  This
  describes transitions $\Delta_{k1}^{k1},\Delta_{k1}^{k2}$ for different values of $k$.
  Finally, starting new clocks $x_k,y_k$ is done when going from $(k,i)$ to $(k+1,i)$ or
  $(k+1,i+1)$ as in lines \Red{21,22}.

  \noindent{\em From states $(k,2)$.} 
  In the second table, we describe the remaining transitions in the automaton.  Again there
  are a few patterns of interest worth noting, in addition to what was described above.
  In transition $\Delta_{12}^{01}$ we go back to state $(0,1)$ and hence use the guard as in
  lines \Red{9,11}.
  In transitions such as $\Delta_{22}^{12}$ and indeed anywhere where lines
  \Red{28,30,31} are used, the idea is that $x_1,y_1$ (respectively $x_k$, $y_k$) are not
  useful and hence we shift all the larger indexed clocks and thus decrement $k$ (or keep
  $k$ same if two more clocks are being introduced as in $\Delta_{22}^{22})$.

  We would next like to point out a subtle point in the construction.  In
  transitions such as $\Delta_{22}^{12}$, $\Delta_{22}^{21}$, $\Delta_{32}^{21}$ and
  $\Delta_{32}^{31}$, we are combining multiple lines of the algorithm together to produce
  a single transition in the automaton.  In doing so, we are simplifying the transitions
  by combining some operations and only displaying the final results of the
  simplification.

  For instance, consider the last transition in $\Delta_{22}^{22}$.
  Since we are in state $(2,2)$, we have two pairs of active of clocks - $(x_1,y_1)$ and
  $(x_2,y_2)$ - that are tracking the potential witnesses.
  Lines \Red{20--22} will release a new pair of clocks $(x_3,y_3)$ to track a new set of
  potential witnesses.
  However, since we are subsequently executing the lines \Red{28,30,31}, we are also
  shifting the existing pairs of clocks.
  Therefore, on this transition, we do not refer to the clocks $x_3$ and $y_3$, as they
  have been shifted to $x_2$ and $y_2$ respectively, by the end of the transition.
  Further, notice that the output of this transition is $-x_3 \in I$, as indicated in line
  \Red{22}.  However, since $x_3$ has been shifted to $x_2$ by the end of the transition,
  this is now represented as $-x_2 \in I$.
  
  This is a subtle part of the construction.
  The high level idea is that when we combine lines of the algorithm, the
  transition that is resultant in the automaton is simpler.  
  To illustrate this, we consider again the last transition in $\Delta_{22}^{22}$ that is
  obtained by executing the lines \text{\color{red} 20--22 + 28,30,31 + 35} of
  \Cref{algo:Until-I-general}.  We give both the complete form and the simplified form.
  The full form of this transition is
  \[
  \begin{alignedat}{3}
    & C \wedge -x<b \wedge -y\geq c \wedge -y_2<b; 
    [x_3,y_3];\ y\leq y_3\leq -c < x_3\leq y_2
    &\quad& \mid (-x_3\in I)
    &\quad& \text{\color{red} L20--22} \\[4pt]
    & \postM\argtwo\wedge y_1=0;\ y_1 := -\infty;\ 
    (x_1,y_1,x_2,y_2,x_3,y_3)
    := (x_2,y_2,x_3,y_3,-\infty,-\infty)
    &\quad&
    &\quad& \text{\color{red} L28,30,31} \\[4pt]
    & x_1=0;\ x_1 := -\infty
    &\quad&
    &\quad& \text{\color{red} L35}
  \end{alignedat}
  \]
  The simplified form of this transition is 
  \[
  \begin{alignedat}{3}
    & C \wedge \postM\argtwo  \wedge -x<b  \wedge -y\geq c  \wedge -y_2<b  \wedge y_1=0  \wedge x_2=0; \\[2pt]
    & y_1 := y_2;\   [x_2,y_2];\   y\leq y_2\leq -c < x_2\leq y_1&\quad& \mid (-x_2\in I)&\quad& \text{\color{red} L20--22 + 28,30,31 + 35}
  \end{alignedat}
  \]
  
\begin{remark}
  Note that the full form contains the operation $x_1 := -\infty$, which sets the clock
  $x_1$ to $-\infty$ and hence deactivates it, while the simplified form does not contain
  this operation.  This is because the transition starts from state $(2,2)$ and by the
  invariant~\eqref{eq:inv-ell2} we already have $x_1=-\infty$ at the beginning of the
  transition.
\end{remark}  

\end{example}

\section{Improved simulation check and liveness for \GTA} 
\label{sec:gta-liveness}

In this section, we describe our technical contributions that have gone into the
development of the $\GTAp$ tool. We start with an overview of these contributions.

A liveness algorithm for timed automata or GTA computes a symbolic abstraction called the
\emph{zone graph}.  Zone graphs are known to be infinite and termination mechanisms for
zone computations have seen a long line of work~(see~\cite{DBLP:conf/formats/BouyerGHSS22}
for a survey).  A node in the zone graph is of the form $(q, Z)$ where $q$ is a control
state of the network of automata and $Z$ is a constraint system on clocks that represents
the set of configurations abstracted by the current node.  For reachability, termination
is guaranteed by using \emph{finite simulation relations}, which are binary relations on
nodes $(q, Z) \preceq (q, Z')$.  For liveness, to ensure soundness, simulation needs to be
replaced with a \emph{mutual simulation}: $(q, Z) \preceq (q, Z')$ and $(q, Z') \preceq
(q, Z)$.  The procedure for checking $(q, Z) \preceq (q, Z')$ is the most used operation
in the zone graph computation and it is extremely crucial to have efficient algorithms for
the same.  When the automaton contains no \emph{diagonal constraints} (guards of the form
$x - y \leqlt c)$, the check can be done in $\mathcal{O}(|X|^2)$ where $X$ is the total
number of clocks.  When diagonal constraints are present, the check is known to be
NP-hard, and it has a running time of $\mathcal{O}(|X|^2 \cdot 2^d)$ where $d$ is the
number of diagonals present.  Essentially, the check calls the diagonal-free module at
most $2^d$ times.

\smallskip

In the $\GTAp$ tool, we have implemented an improved simulation check for GTAs with future
clocks, compared to the simulation check of~\cite{AGGJS-CAV23}.  This results in coarser
simulations and smaller parts of the zone graph being explored.  There are cases where 
the gain is exponential, as for instance with GTAs of the form of 
\Cref{fig:G-sim-example}.
We also use a new direct algorithm for the mutual simulation check, that bypasses the need
to call the basic simulation check twice.  This reduces the computation time by half.

\begin{figure}[htbp]
  \centering\noindent
  \includegraphics[page=22,scale=1]{gastex-figures-pics.pdf}
  \caption{A \GTA\ that shows the exponential improvement produced smaller $\Gg$-sets.}
  \label{fig:G-sim-example}
\end{figure}

Finally, another important contribution is that we lift and adapt Couvreur's algorithm
\cite{Couvreur} for B\"uchi liveness to the setting of generalized timed automata.

We present below our procedure for the liveness problem of safe \GTAs. 
Decidability of liveness for safe \GTAs\ was recently shown in~\cite[Theorem~11]{AGGS-CONCUR24}.
A \GTA\ is live, i.e., it admits an infinite accepting non-Zeno run, iff in its zone 
graph there is a reachable accepting cycle, i.e., a reachable zone $(q,Z)$ and a path
$(q,Z)\xra{t_{1}}(q_{1},Z_{1})\xra{t_{2}}\cdots(q_{k},Z_{k})\xra{t_{k+1}}(q,Z')$ such that
$(q,Z)\preceq(q,Z')$ and $(q,Z')\preceq(q,Z)$, 
each B\"uchi condition is witnessed along 
this cycle, and each future clock $x\in X_{F}$ is either released along the cycle 
or could take value $-\infty$ (there is a valuation $v_{x}\in Z$ with $v_{x}(x)=-\infty$).

There are several ways to implement this liveness check.  Our choice is to extend
Couvreur's SCC algorithm~\cite{Couvreur,Couvreur-1,Couvreur-2} (the state-of-the-art
approach for checking liveness of timed automata) to the setting of \GTAs.  Our procedure
can be thought of as an adaptation of the idea for checking liveness of \GTA\ proposed
in~\cite{AGGS-CONCUR24} to the well-studied framework of Couvreur's SCC algorithm.  On top
of the usual liveness algorithm, we need to perform extra checks to determine if future
clocks are released in SCCs.

The idea for checking liveness in~\cite{AGGS-CONCUR24} ultimately boils down to
identifying cycles in zone graphs that satisfy certain special properties.  Couvreur's SCC
algorithm provides an optimal method for detecting cycles.  \Cref{alg:liveness},
given in Appendix~\ref{sec:couvreur-algorithm}, is an SCC-based liveness algorithm to
check if a GTA $\gtasystem$ has an accepting cycle with respect to some generalized
B\"uchi acceptance condition. It leverages the various optimizations of Couvreur's SCC
algorithm~\cite{Couvreur} that makes liveness checking more efficient for timed automata.
Notice that, to the best of our knowledge, Couvreur's algorithm has not been previously
extended to event-clock automata or timed automata with diagonal constraints.
Hence, \Cref{alg:liveness} is the first extension of Couvreur's liveness
procedure for these classes.

\section{Implementation and Experimental evaluation}\label{sec:experiments}

In this section, we present an experimental evaluation of our prototype implementation
tool \tool\ for satisfiability and model checking for $\MTLfp$ formulas.  The tool
contains two modules:
\begin{itemize}
    \item \rm{MITL2GTA} that implements our optimized translation from $\detMTL$ to
    networks of deterministic timed automata (from \Cref{sec:fastmtl-to-ta}) and the
    optimized translation from $\MTLfp$ to networks of $\GTA$ (from \Cref{sec:future}),
    using the improvements (sharing and two-state until) for future modalities.

    \item \GTAp\ where we have implemented the new SCC-based liveness algorithm (discussed in \Cref{sec:gta-liveness}) and an improved reachability algorithm for $\GTA$.
    This is an enhanced version of \GTA\ from~\cite{AGGJS-CAV23}, built on top of the open-source timed automata library \Tchecker~\cite{TChecker}.
\end{itemize}
By combining these two modules, we obtain a new pipeline \tool (\rm{MITL2GTA}+\GTAp) for satisfiability and model checking for $\MTLfp$ formulas (with both past and future modalities) over timed words under pointwise semantics.
Our tool, along with the benchmarks used in this paper, is available and can be downloaded from \href{https://github.com/EQuaVe/TEMPORA}{https://github.com/EQuaVe/TEMPORA}. 

We organize our evaluation into three parts as follows:
\begin{itemize}[left=0.1em]
  \item First, we evaluate the performance of the $\detMTL$ to deterministic timed
  automata translation on a set of benchmarks, which include both standalone past
  modalities and combinations with future modalities.  To the best of our knowledge, in
  the pointwise semantics, there are no other tools that tackle this fragment.
  
  \item Next, we compare our $\MTLfp$ to $\GTA$ translation against two competing
  approaches in the pointwise semantics, namely:
  \begin{enumerate}[left=0.1em]
    \item \textbf{\MightyL~\cite{MightyL}}: a state-of-the-art tool that translates
    $\MITL$ to OCATA and then to networks of timed automata.  The resulting automata are
    analyzed using $\UPPAAL$~\cite{UPPAAL-1} (for finite words) and \Opaal~\cite{OPAAL}
    (for infinite words).

    \item \textbf{Baseline GTA Translation} an implementation of the (theoretical)
    translation from $\MITL$ to $\GTA$ proposed in~\cite{AGGS-CONCUR24}.
  \end{enumerate}
  
  \item Finally, to demonstrate the full scope of our implementation, we present our full
  model-checking pipeline on two classical examples, namely \textrm{Fischer} and
  \textrm{Dining Philosophers} with $\MTLfp$ formulae capturing some standard properties.
\end{itemize}

\medskip\noindent \textbf{Experimental setup}

\noindent Our tool \tool\ takes as input a $\MTLfp$ formula and applies the appropriate
translation described in \Cref{sec:fastmtl-to-ta,sec:future}, producing a network of
$\GTA$ (which is deterministic, when the formula is in $\detMTL$).  Then, checking
satisfiability of a formula over finite timed words in the pointwise semantics reduces to
checking emptiness of the corresponding network of $\GTA$.  We also adapted our translation to work
with finite timed words as explained in \Cref{rmk:finite-words}.  We use an improved
version of $\GTA$ reachability algorithm from~\cite{AGGJS-CAV23}, which is built on the
open source tool \Tchecker~\cite{TChecker}.  Over infinite words, checking satisfiability
reduces to checking B\"uchi acceptance in $\GTA$.  For this, we implement the liveness
algorithm from~\cite{AGGS-CONCUR24}, again on top of TChecker.
Finally, we implemented a prototype model-checking algorithm for $\MTLfp$ formulae w.r.t.
a network of $\GTA$ with the pointwise semantics.

All experiments were run on an Ubuntu 18.04 machine with an Intel i7 3.40GHz processor and
32 GB RAM. In this section, we describe the benchmarks and present an analysis of the
results.  Details of our implementation, including the various engineering optimisations
that we employ, are discussed in Appendix~\ref{sec:app:experiments}.

\medskip\noindent \textbf{Benchmarks.}
We conduct our experiments on a suite of 72 benchmarks, which we have grouped into three
groups: (1) benchmarks designed to demonstrate past operators (alongside future
modalities), (2) benchmarks from \MightyL~\cite{MightyL} that are themselves based on
examples from~\cite{GastinO01,Plaku2015}, and (3) benchmarks from Acacia synthesis tool's
benchmark suite~\cite{Acacia}.

\noindent \textit{Group 1.} This group consists of 22 $\MITL$ formulae designed to
highlight various strengths of our translation.  These include formulae featuring
conjunctions of temporal operators, the simultaneous use of past and future modalities
(for which, to the best of our knowledge, no existing tool supports), and other
representative constructs.  We rewrite the derived operators $\Always, \Eventually,
\Release$ in terms of $\U$, in the standard way.  We use the following shorthand for
common patterns:
$\rho(n) = \Next(n) \wedge \Always (p_1 \vee p_2)$, where $\Next(n)  =  \Next_{[2,3]}
\big( (p_1  \implies  \Next(n-1)) \wedge (p_2  \implies  \neg\Next(n-1)) \big)$ with
base case of $\Next(1) = p_1$,
$\eta(n) = \Always (\bigwedge_{i=1}^{n} (p_i   \implies   (a_i \Since_{[0, 2]} b_i)))
\wedge \Always(\bigvee_{i=1}^{n} p_i)$,
$\Reqgrant = \Always_{[2,\infty)} \left((R \implies  \Eventually_{[4,5]} G) \right)$, and
$\Spotex = \Always_{[0, 10]} \Eventually_{[1, 2]} (a  \iff  \Next b)$.

\noindent \textit{Group 2.} This group contains all 33 benchmarks from~\cite{MightyL},
accompanying the tool \MightyL. These are based on temporal patterns from~\cite{GastinO01,Plaku2015}, and include families of parameterized formulae:
$\Eventually(k,I) = \bigwedge_{i=1}^{k} \Eventually_{I} p_i$, $\Always(k,I) = \bigwedge_{i=1}^{k} \Always_{I} p_i$,
$\U(k,I) = ( \cdots (p_1 \U_I p_2) \cdots ) \U_I p_k$, and
$\Release(k,I) = ( \cdots (p_1 \Release_I p_2) \cdots ) \Release_I p_k$, $\mu(k,I) =
\bigwedge_{i=1}^{k} \Eventually_{[3i-1,3i]} t_i \wedge \Always \neg p$ and
$\theta(k,I) = \neg((\bigwedge_{i=1}^{k} \Always \Eventually p_i) \implies \Always (q \implies \Eventually_I r))$.

\noindent \textit{Group 3.} This group includes 17 benchmarks drawn from the Acacia
synthesis tool's benchmark suite~\cite{Acacia}, which are also featured in the SyntComp
competition suite~\cite{Syntcomp-Jacobs2015}, with added timing intervals.  These include
formulae such as
\[
\begin{aligned}
\Acacia_1 &= \left(\Eventually_{[0,2]} \Always p\right) \iff \left(\Always \Eventually q\right), \\
\Acacia_2 &= \left(\Always\left(\Eventually_{[0,2]} \left(p \implies \Next_{[1,2]} \Next_{[1,2]} \Next_{[1,2]} q\right)\right)\right) \iff \left(\Always \Eventually r\right), \\
\Acacia_3 &= \left(((p \U_{[0,2]} q) \U_{[0,2]} r) \wedge ((q \U_{[0,2]} r) \U_{[0,2]} p) \wedge ((r \U_{[0,2]} p) \U_{[0,2]} q)\right) \iff \Always \Eventually s, \\
\Acacia_4 &= \left(\Always(p \implies \Eventually_{[3,4]} q) \wedge \Always(\neg p \implies \Eventually_{[3,4]} \neg q)\right) \iff \Always \Eventually r.
\end{aligned}
\]
The complete list of all 72 benchmark formulae are given in Tables~A and~B of Appendix~\ref{sec:app:experiments}.

\subsection*{Our results}
We present the results of our experiments in
\Cref{table:past,table:mitl-gta-inf-exp-main-results,table:model-checking}, which show the
number of nodes stored and the time taken by each tool to check satisfiability of the
formulae over finite and infinite words, respectively.  We also present a log-scale plot
of the time taken by \MightyL\ and \tool\ in \Cref{fig:plot}, which shows the performance
of our tool on the various benchmarks.  We discuss our experimental results in detail
below.  In all the tables, we highlight in green the rows corresponding to formulae
belonging to the $\detMTL$ fragment.

\begin{table}[!bp]
  \centering
  \resizebox{0.7\textwidth}{!}{
    \begin{tabular}{|l|r|r|}
      \hline
      \multicolumn{1}{|c|}{Formula}
      & \multicolumn{2}{c|}{\tool}\\
      \cline{2-3}
      & Stored nodes & Time in ms.\\
    \hline
\rowcolor{green!20}
    $\Eventually_{[0,20]} (\Y_{[2,3]} p_1 \vee \Y_{[4,5]} p_2) \vee \Y_{[6,7]} p_3)$ & 287 & 1.5\\ 
    \hline
\rowcolor{green!20}
    $\Always_{[0,20]} (\Y_{[2,3]} p_1 \vee \Y_{[4,5]} p_2) \vee \Y_{[6,7]} p_3)$ &  262 & 3.1\\ 
    \hline
\rowcolor{green!20}
    $\Eventually (p \S_{[1, 2]} (p \S_{[1, 2]} (p \S_{[1, 2]} q)))$ &  1,303 & 29.9\\ 
    \hline
\rowcolor{green!20}
    $\Eventually (p \S_{[1, \infty)} (p \S_{[1, \infty)} (p \S_{[1, \infty)} q)))$ & 66 & 0.5\\ 
    \hline
\rowcolor{green!20}
    $\Eventually (p \S_{[1, 2]} q \wedge  p \S_{[2, 3]} q \wedge p \S_{[3, 4]} q \wedge p \S_{[4, 5]} q)$ &  1,571 & 50.9\\ 
    \hline
\rowcolor{green!20}
    $\Eventually (p \S_{[1, \infty)} q \wedge  p \S_{[2, \infty)} q \wedge p \S_{[3, \infty)} q \wedge p \S_{[4, \infty)} q)$ & 68 & 0.8\\ 
    \hline
\rowcolor{green!20}
    $\Eventually (p \S_{[1, 2]} q \wedge  p \S_{[2, 3]} q \wedge p \S_{[3, 4]} q \wedge p \S_{[4, 5]} q \wedge p \S_{[5, 6]} q)$ & 27,737 & 1,762.1\\ 
    \hline
\rowcolor{green!20}
    $\Eventually (p \S_{[1, 2]} q \wedge  p \S_{[2, 3]} q \wedge p \S_{[3, 4]} q \wedge p \S_{[4, 5]} q \wedge p \S_{[5, 6]} q \wedge p \S_{[6, 7]} q)$ & 167,077 & 21,490.5\\ 
    \hline
\rowcolor{green!20}
    $\Always(\neg p_1 \vee (a_1 \S_{[0, 2]} b_1)) \wedge (\neg p_2 \vee (a_2 \S_{[0, 2]} b_2)) \wedge$ & &  \\
\rowcolor{green!20}
    $(\neg p_3 \vee (a_3 \S_{[0, 2]} b_3)) \wedge (\neg p_4 \vee (a_4 \S_{[0, 2]} b_4)) \wedge \Always(p_1 \vee p_2 \vee p_3 \vee p_4)$ & 186 & 1.67\\ 
    \hline
  \end{tabular}
  }
  \caption{Experimental results obtained by running our prototype implementation \tool\ for satisfiability of $\detMTL$ formulae over finite timed words. Timeout is set to 300 seconds.
  }
  \label{table:past}
\end{table}

\medskip\noindent {\bf Satisfiability for the past fragment of $\detMTL$.}
The first set of experiments, presented in \Cref{table:past}, evaluates the performance of
our tool \tool\ on $\detMTL$ formulae involving past operators.  We find that \tool\
successfully handles all benchmarks in this category.  We note that \MightyL\ could not be
run on these, as it does not support past modalities (listed as part of future work
in~\cite{MightyL}).
The last but two formulae are specifically designed to stress-test the handling of
multiple nested past operators.  \tool\ processes each of them efficiently, storing only a
small number of nodes and completing each run in under 30 seconds.  The final formula in
this set is more complex, combining several past operators with a conjunction of future
modalities.  Here too, \tool\ performs efficiently, requiring only 186 nodes and less than
2 ms to check satisfiability.
These results demonstrate the effectiveness of our deterministic translation, which avoids
the exponential blowup typically associated with non-deterministic approaches and enables
scalable analysis even for complicated formulae.
To see this, consider, for instance, the formula $\Eventually (p \U_{[1, 2]} q \wedge p
\U_{[2, 3]} q \wedge p \U_{[3, 4]} q \wedge p \U_{[4, 5]} q)$, obtained by replacing the
past operator $\S$ with the future operator $\U$ in the $5^{th}$ entry \tool\ solves this
by generating 28,030 nodes in 784 ms.  Another example is the nested formula $\Eventually
(p \U_{[1, 2]} (p \U_{[1, 2]} (p \U_{[1, 2]} q)))$ which results in 299,130 nodes and
takes 9 seconds, compared to the past version $\Eventually (p \S_{[1, 2]} (p \S_{[1, 2]}
(p \S_{[1, 2]} q)))$ which generates only 1,303 nodes in 29.9 milliseconds.

\medskip\noindent {\bf Satisfiability for full $\MTLfp$ and comparisons.}
For the second set of experiments, we evaluate the performance of \tool\ on the
satisfiability problem for $\MTLfp$ formulae over both finite and infinite timed words.
We compare our tool against \MightyL\ and the baseline implementation of the translation
proposed in~\cite{AGGS-CONCUR24}.  For finite timed words, we compare against \MightyL\
using $\UPPAAL$'s reachability analysis; for infinite timed words, we use \MightyL\ with
\Opaal's liveness checking algorithm.

\begin{figure}[tbp]
  \centering
  \begin{minipage}{0.53\textwidth}
      \centering
      \includegraphics[scale=0.15]{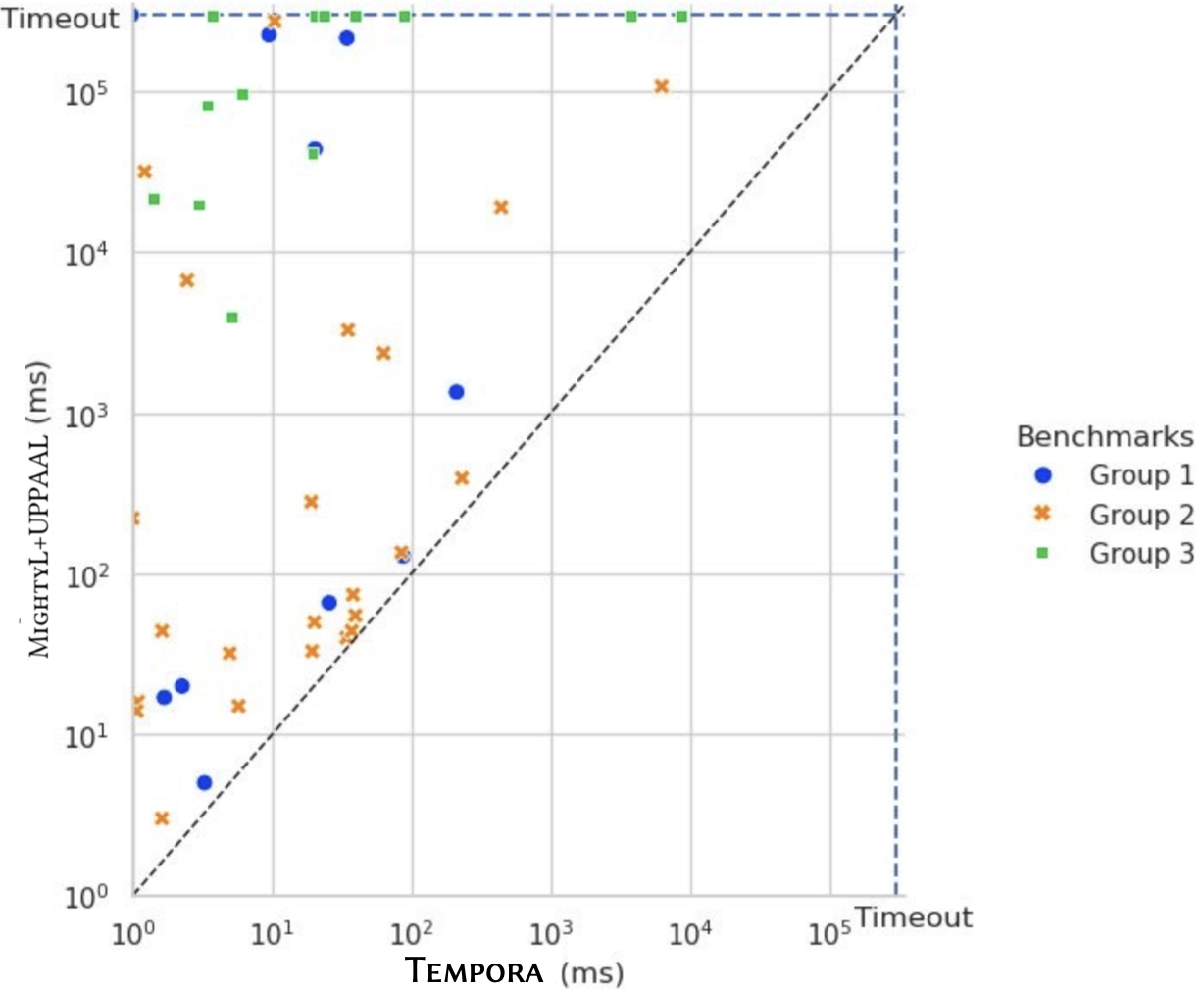}  
  \end{minipage}
  \hspace{0.1in}
  \begin{minipage}{0.4\textwidth}
      \centering
      \includegraphics[scale=0.15]{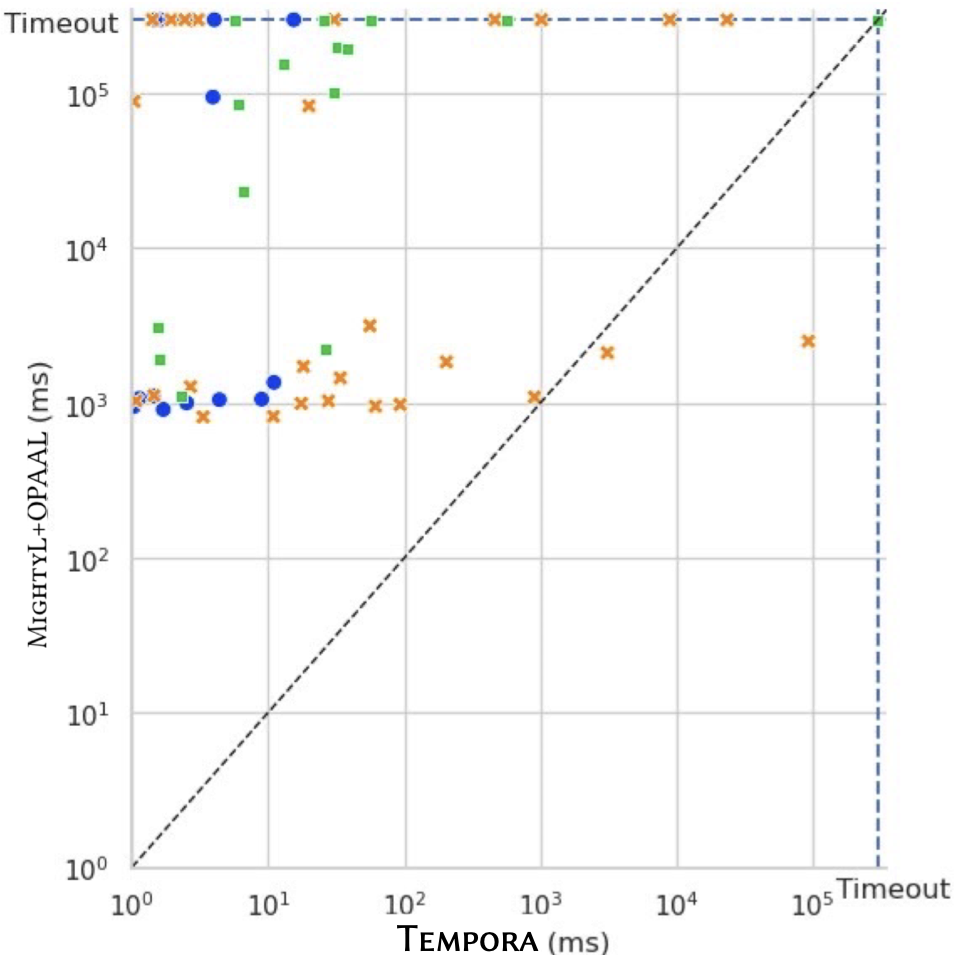}
  \end{minipage}
  \caption{Log-scale plot $\MightyL$ vs $\tool$ comparison for time taken to check
  satisfiability on finite (left) and infinite (right) timed words, after the translation.
  Group 1 refers to our benchmarks, while Group 2 are \MightyL\ benchmarks~\cite{MightyL}
  are Group 3 benchmarks derived from Acacia~\cite{Acacia}.  Timeout set to $300s$.}
  \label{fig:plot}
\end{figure}
\Cref{fig:plot}(left) compares the performance of our tool and the $\MightyL$ pipeline on
all 73 benchmarks for finite timed words, and \Cref{fig:plot}(right) presents the same
comparison for infinite words.  The time reported is the time after the translation (to
$\TA$ in case of $\MightyL$ and $\GTA$ for $\tool$), that is, the time for zone graph
exploration to find a reachable node (for finite timed words) or a live cycle (for
infinite timed words).  This is because, the time for the translation itself is rather
small in most cases (of the order of 3-4ms), except for the cases shown in
\Cref{table:translation-time} in Appendix~\ref{sec:app:experiments}.  For $\Opaal$, we
report the sum of time taken by \Opaal\ to generate C++ code, the \texttt{g++} compiler to
compile the generated code and \LTSmin\ to check liveness.  We note that \Opaal\ is a
multi-core tool and uses 8 cores on our machine.  On 10 of the benchmarks, $\Opaal$ gave a
segmentation fault (which we mark as timeout) and $\UPPAAL$ reported an overflow error on
the output generated by $\MightyL$ for 2 benchmarks.

\Cref{table:mitl-gta-inf-exp-main-results} presents selected results reporting the number
of nodes stored and time taken by each tool to check satisfiability over infinite words,
while the full set of results appears in Appendix~\ref{sec:app:experiments}

\begin{table}[!tbp]
  \centering
  \resizebox{0.73\textwidth}{!}{
    \begin{tabular}{|l|r|r|r|c|r|r|}
      \hline
      \multicolumn{1}{|c|}{Formula}
      & \multicolumn{2}{c}{Baseline}
      & \multicolumn{2}{|c|}{\MightyL + OPAAL}
      & \multicolumn{2}{c|}{\tool}\\
      \cline{2-7}
      & Stored & Time in
      & Stored & Time in
      & Stored & Time in\\
      & nodes & ms.
      & nodes & ms.
      & nodes & ms.\\
    \hline
    \rowcolor{green!20} 
    $\Eventually(5,[2,\infty))$ & - & TO & 3069 & 373 + 820 + 85  & 200 & 2.38\\ 
    \hline
    \rowcolor{green!20} 
    $\Eventually(5,[0,2])$ & 952 & 5.99 & 2693 & 214 + 702 + 82 & 68 & 0.49\\ 
    \hline
    $\U(5,[2,\infty))$ & 567,192 & 5,404 & 9,383 & 382 + 960 + 114 & 4,075 & 32.70\\ 
    \hline
    $\U(5,[1,2])$ & - & TO & - & TO & 465,185 & 8,750\\ 
    \hline
    \rowcolor{green!20} 
    $\mu(2)$ & 953,492 & 24,977 & 964 & 68,462 + 20,584+ 103 & 158 & 0.90\\
    \hline
    \rowcolor{green!20} 
    $\mu(3)$ & - & TO & - & TO & 315 & 3.34\\
    \hline
    $\theta(3)$ & - & TO & 42,804 & 836 + 1,057 + 225 & 330,044 & 2977\\
    \hline
    \rowcolor{green!20} 
    $\Always(5,[0,2])$ & 4,793 & 34.6 & 5,913 & 257 + 760+ 101 & 224 & 1.45\\ 
    \hline
    \rowcolor{green!20} 
    $\Always(5,[1,2])$ & - & TO & - & TO & 225 & 2.45\\ 
    \hline
    $\Release(5,[2,\infty))$ & 29,685 & 278.3 & 1,059 & 231 + 726 + 77 & 3,119 & 27.75\\ 
    \hline
    $\Release(5,[1,2])$ & - & TO & -	 & TO & 1,093,187 &	23,474\\ 
    \hline
    \hline
    \rowcolor{green!20} 
    $p \U_{[11, 12]} q$ & 1,484 & 53 & - & TO & 220 & 1.37\\
    \hline
    $\Reqgrant$ & - & TO & - & TO & 444 & 4.00\\
    \hline
    $\Spotex$ & 117,319 & 1,141 & 7,921 & 74,585 + 20,184 + 238 & 504	& 3.50\\
    \hline
    \rowcolor{green!20} 
    $\eta(4)$ & - & TO & - & NA & 249 & 1.49\\
    \hline
    $\rho(4)$ & - & TO & - & TO & 1933 & 15.93\\
    \hline
    \hline
    $\Acacia_1$ & - & TO & 4,943 & 258 + 755 + 96 & 390 & 2.54\\ 
    \hline
    $\Acacia_2$ & - & TO & 742,427 & 426 + 949 + 1,683 & 231 & 1.94\\
    \hline
    $\Acacia_3$ & - & TO & 7,271,276 & 611 + 1,143 + 21,349 & 918 & 7.41\\ 
    \hline
    $\Acacia_4$ & - & TO & - & TO & 4,159 & 55.71\\
    \hline    
  \end{tabular}
  }
  \caption{Selected set of experimental results obtained by running our prototype
  implementation \tool, \MightyL\ and, the baseline implementation of the translation
  proposed in~\cite{AGGS-CONCUR24} for satisfiability over infinite traces.  Timeout is
  set to 300 seconds.
  The complete table is available in~Table 8 in Appendix~\ref{sec:app:experiments}.
  }
  \label{table:mitl-gta-inf-exp-main-results}
\end{table}

As can be seen from \Cref{fig:plot}, in the finite timed words case, $\tool$ is always
faster than $\MightyL+\UPPAAL$ and for benchmarks where both tools take more than 1s, we
are significantly faster, as the plot is in {\em log-scale}.  Further, $\UPPAAL$ times out
in several cases, while we never do so on the benchmarks tested.  Thus, we scale better on
larger and more complex formulae.  For the infinite case, except for 2 formulae we are
always better.  Note that the plot is only on the 63 benchmarks on which all three tools
work, and therefore, we exclude the formulae containing past operators.
Our performance is significantly better in terms of time-taken as can be seen from the
figure.  Further, out of the 72 benchmarks, $\Opaal$ timed out or seg-faulted on 18
benchmarks whereas we timeout on only 4 benchmarks (of which $\Opaal$ also times
out/segfaults on 3).
Finally, \Cref{table:mitl-gta-inf-exp-main-results} (and the extended tables in
Appendix~\ref{sec:app:experiments}) show that $\tool$ is also significantly better than
the Baseline (in fact the \MightyL\ pipeline also outperforms Baseline significantly on
most benchmarks) on all benchmarks.

{\em In summary, both for finite words and infinite words over the pointwise semantics,
$\tool$ significantly outperforms $\MightyL$ and the baseline for checking satisfiability,
both in terms of time taken and size of stored nodes.}

\medskip\noindent {\bf Model checking.}
Finally, going beyond satisfiability, we evaluate our model-checking pipeline on a set of
$\MTLfp$ formulas and two models: Fischer's mutual exclusion algorithm and the Dining
Philosophers problem.  The model-checking pipeline is implemented in \tool\ and uses
\Tchecker\ for the zone graph exploration.  To the best of our knowledge, there is no
existing tool among \UPPAAL, \MightyL, and \Opaal\ that can perform complete model
checking for pointwise semantics with a model and a $\MTLfp$ formula.  As a result, we are
unable to provide a comparison for our model-checking pipeline.  Nevertheless, as a proof
of concept, we evaluate our pipeline on two models (Fischer and Dining Philosophers)
against a set of formulas, with results in \Cref{table:model-checking}.

For the Fischer's mutual exclusion algorithm, we consider a model with 3 to 7 processes
and check various properties.  The first property checks whether the first process can
eventually enter the waiting state after making a request, while the second property
checks whether it can enter the critical section.
We also check the classical safety property for mutual exclusion protocols that ensures
that no two processes can be in the critical section at the same time.
Our experimental results show that for Fischer with 6 processes, the property $\Always
(\req_1 \rightarrow \Eventually_{[0, 5]} \wait_1)$ can be checked to hold in 4.6 ms, while
for Fischer with 7 processes, it takes 5.3 ms.
We also check the property that no two processes can be in the critical section at the
same time, which is reported to hold true for Fischer with 5 processes in 15.51 seconds.

For the Dining Philosophers problem, we consider models with 2 and 8 philosophers.  We
check properties such as whether at least one philosopher is eating at some point in time,
and whether both philosophers are guaranteed to eat within a certain window of time.
For the model with 8 philosophers, we find that the property $\Eventually_{[0, 20]}
(\bigvee_{i=1}^{8} \eat_{i})$ does not hold true within 0.01 seconds, indicating that 
no philosopher is guaranteed to eat within 20 time units.
For the model with 2 philosophers, we check whether
both philosophers are guaranteed to eat within a certain time window, which does not hold
true within 0.03 seconds.
The results indicate that our model-checking pipeline can efficiently handle the
verification of $\MTLfp$ properties on these models, demonstrating its potential for
practical applications in verifying timed systems.

\begin{table}[!tbp]
  \centering
  \resizebox{0.8\textwidth}{!}{
    \begin{tabular}{|l|l|l|r|r|}
      \hline
      Sl. 
      & \multicolumn{1}{c|}{Models}
      & \multicolumn{1}{c|}{Formula}
      & \multicolumn{2}{|c|}{\tool + \Tchecker}\\
      \cline{4-5}
      No. 
      & &
      & Stored nodes & Time (ms)\\
      \hline
      1 & Fischer (3) & $\Always (\neg \req_1 \vee \Eventually_{[0, 20]} \wait_1)$ & 32,817	& 221.3\\
      \hline
      2 & Fischer (4) & $\Always (\neg \req_1 \vee \Eventually_{[0, 20]} \wait_1)$ & 449,108	& 5,723\\
      \hline
      3 & Fischer (6) & $\Always (\neg \req_1 \vee \Eventually_{[0, 5]} \wait_1)$ & 362	& 4.6\\
      \hline
      4 & Fischer (7) & $\Always (\neg \req_1 \vee \Eventually_{[0, 5]} \wait_1)$ & 391	& 5.3\\
      \hline
      5 & Fischer (3) & $\Always (\neg \req_1 \vee \Eventually_{[0, 20]} \cs)$ & 33,918 & 209.69\\
      \hline
      6 & Fischer (4) & $\Always (\neg \req_1 \vee \Eventually_{[0, 20]} \cs)$ & 352,774	& 3,749.6\\
      \hline
      \rowcolor{green!20} 
      7 & Fischer (5) & $\Always (\neg((\cs_1 \wedge (\cs_2 \vee \cs_3 \vee \cs_4 \vee \cs_5)) \vee$ & 	& \\
      \rowcolor{green!20} 
      &  & $ (\cs_2 \wedge (\cs_1 \vee \cs_3 \vee \cs_4 \vee \cs_5)) \vee (\cs_3 \wedge (\cs_1 \vee \cs_2 \vee \cs_4 \vee \cs_5))  \vee $ &  & \\ 
      \rowcolor{green!20} 
      &  & $(\cs_4 \wedge (\cs_1 \vee \cs_2 \vee \cs_3 \vee \cs_5)) \vee (\cs_5 \wedge (\cs_1 \vee \cs_2 \vee \cs_3 \vee \cs_4))))$ & 527,829	& 15,514\\
      \hline
      \rowcolor{green!20} 
      8 & Din-Phil(8, 10, 3, 1) & 
      $\Eventually_{[0, 20]} (\eat_1 \vee \eat_2 \vee \eat_3 \vee \eat_4$ 
      & & \\
      \rowcolor{green!20} 
      & & $\vee \eat_5 \vee \eat_6 \vee \eat_7 \vee \eat_8)$
      & 371	& 5.8\\
      \hline
      9 & Din-Phil(2, 10, 3, 1) & 
      $\Always_{[10, \infty)} (\Eventually_{[0, 30]}(\eat_1) \wedge \Eventually_{[0, 30]}(\eat_2))$ & 4,904 & 31.35\\
      \hline
  \end{tabular}  
  }
  \caption{Experimental results obtained by running our prototype implementation \tool,
  and the \GTA\ liveness algorithm implementation in \Tchecker\ for checking liveness.
  The numbers in the brackets represent parameterization, for Fischer's it is \#-processes
  and for Dining Philosophers, it represents the number of philosophers, timeout before
  releasing fork, time to eat and delay time to release fork.
}
\label{table:model-checking}
\end{table}

Additional results, including ablation studies, are presented in \Cref{sec:app:ablation}
of Appendix~\ref{sec:app:experiments}.

\subsubsection*{Summary of Experimental Evaluation.}
\label{sec:exp-summary}
In summary, our experiments demonstrate the practicality and effectiveness of \tool\ in various settings. 
\tool\ handles all standard $\MITL$ benchmarks used in prior works, and crucially extends
support to the richer $\MTLfp$ fragment - including past operators and extending timed
modalities to allow punctual constraints at the top level.  This is beyond the
capabilities of state-of-the-art tools for pointwise $\mitl$ logics such as \MightyL,
which do not support past operators or the punctual timed operators.  Further, on
benchmarks where $\MightyL$ does work, $\tool$ outperforms it significantly both in the
size of zone graph computed and time taken to check satisfiability.
Finally, \tool\ can be effectively used for model checking full $\MTLfp$ specifications,
including both safety and liveness properties on standard benchmarks such as Fischer's
mutual exclusion protocol and the Dining Philosophers.

\section{Conclusion}\label{sec:conclusion}

In this paper, we developed a new synchronization-based model checking technique that
allows translating formulas in $\MITL$ with past modalities (and punctuality at the
outermost level) to networks of small (generalized) timed automata with the general theme
of making it ``more deterministic''.  While we get deterministic automata for a large
fragment, we show improvements even in the general case, especially a surprising 2-state
until automaton that takes advantage of future clocks.
All the technical developments have resulted in the tool \tool, which outperforms the
current state-of-the-art tools.  As future work, two natural directions would be towards
monitoring of $\MTLfp$ properties in the pointwise semantics, and to try to extend our
approach to the signal/continuous semantics.

\bibliography{mitl2gta.bib}

@inproceedings{AGGS-CONCUR24,
	author       = {S. Akshay and
                  Paul Gastin and
                  R. Govind and
                  B. Srivathsan},
  title        = {{MITL} Model Checking via Generalized Timed Automata and a New Liveness
                  Algorithm},
  booktitle    = {{CONCUR}},
  series       = {LIPIcs},
  volume       = {311},
  pages        = {5:1--5:19},
  publisher    = {Schloss Dagstuhl - Leibniz-Zentrum f{\"{u}}r Informatik},
  year         = {2024}
}

@inproceedings{AGGJS-CAV23,
 author       = {S. Akshay and
                  Paul Gastin and
                  R. Govind and
                  Aniruddha R. Joshi and
                  B. Srivathsan},
  title        = {A Unified Model for Real-Time Systems: Symbolic Techniques and Implementation},
  booktitle    = {{CAV} {(1)}},
  series       = {Lecture Notes in Computer Science},
  volume       = {13964},
  pages        = {266--288},
  publisher    = {Springer},
  year         = {2023}
}

@inproceedings{MightyL,
  author       = {Thomas Brihaye and
                  Gilles Geeraerts and
                  Hsi{-}Ming Ho and
                  Benjamin Monmege},
  title        = {{MightyL}: {A} Compositional Translation from {MITL} to Timed Automata},
  booktitle    = {{CAV} {(1)}},
  series       = {Lecture Notes in Computer Science},
  volume       = {10426},
  pages        = {421--440},
  publisher    = {Springer},
  year         = {2017}
}

@inproceedings{Survey-AlurH91,
  author       = {Rajeev Alur and
                  Thomas A. Henzinger},
  title        = {Logics and Models of Real Time: {A} Survey},
  booktitle    = {{REX} Workshop},
  series       = {Lecture Notes in Computer Science},
  volume       = {600},
  pages        = {74--106},
  publisher    = {Springer},
  year         = {1991}
}

@inproceedings{OPAAL,
  author       = {Gijs Kant and
                  Alfons Laarman and
                  Jeroen Meijer and
                  Jaco van de Pol and
                  Stefan Blom and
                  Tom van Dijk},
  title        = {LTSmin: High-Performance Language-Independent Model Checking},
  booktitle    = {{TACAS}},
  series       = {Lecture Notes in Computer Science},
  volume       = {9035},
  pages        = {692--707},
  publisher    = {Springer},
  year         = {2015}
}

@inproceedings{Couvreur,
  author       = {Jean{-}Michel Couvreur},
  title        = {On-the-Fly Verification of Linear Temporal Logic},
  booktitle    = {World Congress on Formal Methods},
  series       = {Lecture Notes in Computer Science},
  volume       = {1708},
  pages        = {253--271},
  publisher    = {Springer},
  year         = {1999}
}

@inproceedings{Couvreur-1,
  author       = {Stefan Schwoon and
                  Javier Esparza},
  title        = {A Note on On-the-Fly Verification Algorithms},
  booktitle    = {{TACAS}},
  series       = {Lecture Notes in Computer Science},
  volume       = {3440},
  pages        = {174--190},
  publisher    = {Springer},
  year         = {2005}
}

@inproceedings{Couvreur-2,
  author       = {Andreas Gaiser and
                  Stefan Schwoon},
  title        = {Comparison of Algorithms for Checking Emptiness on {B\"{u}}chi
                  Automata},
  booktitle    = {{MEMICS}},
  series       = {OASIcs},
  volume       = {13},
  publisher    = {Schloss Dagstuhl - Leibniz-Zentrum fuer Informatik, Germany},
  year         = {2009}
}

@article{AD94,
  author  = {Rajeev Alur and David L. Dill},
  title   = {A Theory of Timed Automata},
  journal = {Theoretical Computer Science},
  year    = {1994},
  volume  = {126},
  pages   = {183--235}
}

@inproceedings{Daws,
  author    = {Conrado Daws and
	Stavros Tripakis},
  title     = {Model Checking of Real-Time Reachability Properties Using Abstractions},
  booktitle = {{TACAS}},
  series    = {LNCS},
  volume    = {1384},
  pages     = {313--329},
  publisher = {Springer},
  year      = {1998}
}

@misc{TChecker,
  title        = {{TChecker}},
  author       = {Fr{\'{e}}d{\'{e}}ric Herbreteau and
                  G{\'{e}}rald Point and
                  Ocan Sankur},
  howpublished = {\url{https://github.com/ticktac-project/tchecker}},
  year         = {v0.8 - September 2023}
}

@article{UPPAAL-1,
  author  = {Kim Guldstrand Larsen and
	Paul Pettersson and
	Wang Yi},
  title   = {{UPPAAL} in a Nutshell},
  journal = {{STTT}},
  volume  = {1},
  number  = {1-2},
  pages   = {134--152},
  year    = {1997}
}

@inproceedings{UPPAAL-2,
  author    = {Gerd Behrmann and
	Alexandre David and
	Kim Guldstrand Larsen and
	John Hakansson and
	Paul Pettersson and
	Wang Yi and
	Martijn Hendriks},
  title     = {{UPPAAL} 4.0},
  booktitle = {{QEST}},
  pages     = {125--126},
  publisher = {{IEEE} Computer Society},
  year      = {2006}
}

@inproceedings{PAT,
  author    = {Jun Sun and Yang Liu and Jin Song Dong and Jun Pang},
  title     = {{PAT}: Towards Flexible Verification under Fairness},
  journal   = {Proceedings of the 21th International Conference on Computer Aided Verification (CAV'09)},
  year      = {2009},
   publisher = {Springer},
  pages     = {709-714},
  series    = {LNCS},
  volume    = {5643},
}

@inproceedings{RED,
  author    = {Farn Wang},
  title     = {{REDLIB} for the Formal Verification of Embedded Systems},
  booktitle = {ISoLA},
  pages     = {341--346},
  publisher = {{IEEE} Computer Society},
  year      = {2006}
}

@inproceedings{GastinO01,
  author       = {Paul Gastin and
                  Denis Oddoux},
  title        = {Fast {LTL} to B{\"{u}}chi Automata Translation},
  booktitle    = {{CAV}},
  series       = {LNCS},
  volume       = {2102},
  pages        = {53--65},
  publisher    = {Springer},
  year         = {2001}
}

@article{MTL-OW07J,
  author       = {Jo{\"{e}}l Ouaknine and
                  James Worrell},
  title        = {On the decidability and complexity of Metric Temporal Logic over finite
                  words},
  journal      = {Log. Methods Comput. Sci.},
  volume       = {3},
  number       = {1},
  year         = {2007}
}

@inproceedings{MTL-OW07C,
  author       = {Jo{\"{e}}l Ouaknine and
                  James Worrell},
  title        = {On the Decidability of Metric Temporal Logic},
  booktitle    = {{LICS}},
  pages        = {188--197},
  publisher    = {{IEEE} Computer Society},
  year         = {2005}
}

@article{Koymans90,
  author       = {Ron Koymans},
  title        = {Specifying Real-Time Properties with Metric Temporal Logic},
  journal      = {Real Time Syst.},
  volume       = {2},
  number       = {4},
  pages        = {255--299},
  year         = {1990}
}

@article{AFH-MITL-J,
  author       = {Rajeev Alur and
                  Tom{\'{a}}s Feder and
                  Thomas A. Henzinger},
  title        = {The Benefits of Relaxing Punctuality},
  journal      = {J. {ACM}},
  volume       = {43},
  number       = {1},
  pages        = {116--146},
  year         = {1996}
}

@inproceedings{AlurH90,
  author       = {Rajeev Alur and
                  Thomas A. Henzinger},
  title        = {Real-time Logics: Complexity and Expressiveness},
  booktitle    = {{LICS}},
  pages        = {390--401},
  publisher    = {{IEEE} Computer Society},
  year         = {1990}
}

@inproceedings{OuaknineW06,
  author       = {Jo{\"{e}}l Ouaknine and
                  James Worrell},
  title        = {On Metric Temporal Logic and Faulty Turing Machines},
  booktitle    = {FoSSaCS},
  series       = {LNCS},
  volume       = {3921},
  pages        = {217--230},
  publisher    = {Springer},
  year         = {2006}
}

@inproceedings{BrihayeEG14,
  author       = {Thomas Brihaye and
                  Morgane Esti{\'{e}}venart and
                  Gilles Geeraerts},
  title        = {On {MITL} and Alternating Timed Automata over Infinite Words},
  booktitle    = {{FORMATS}},
  series       = {LNCS},
  volume       = {8711},
  pages        = {69--84},
  publisher    = {Springer},
  year         = {2014}
}

@article{BulychevDLL14,
  author       = {Peter E. Bulychev and
                  Alexandre David and
                  Kim G. Larsen and
                  Guangyuan Li},
  title        = {Efficient controller synthesis for a fragment of MTL\({}_{\mbox{0,{\(\infty\)}}}\)},
  journal      = {Acta Informatica},
  volume       = {51},
  number       = {3-4},
  pages        = {165--192},
  year         = {2014}
}

@inproceedings{ZhouMB16,
  author       = {Yuchen Zhou and
                  Dipankar Maity and
                  John S. Baras},
  title        = {Timed automata approach for motion planning using metric interval
                  temporal logic},
  booktitle    = {{ECC}},
  pages        = {690--695},
  publisher    = {{IEEE}},
  year         = {2016}
}

@inproceedings{BrihayeEG13,
  author       = {Thomas Brihaye and
                  Morgane Esti{\'{e}}venart and
                  Gilles Geeraerts},
  title        = {On {MITL} and Alternating Timed Automata},
  booktitle    = {{FORMATS}},
  series       = {LNCS},
  volume       = {8053},
  pages        = {47--61},
  publisher    = {Springer},
  year         = {2013}
}

@inproceedings{MalerNP05,
  author       = {Oded Maler and
                  Dejan Nickovic and
                  Amir Pnueli},
  title        = {Real Time Temporal Logic: Past, Present, Future},
  booktitle    = {{FORMATS}},
  series       = {Lecture Notes in Computer Science},
  volume       = {3829},
  pages        = {2--16},
  publisher    = {Springer},
  year         = {2005}
}

@inproceedings{MalerNP06,
  author       = {Oded Maler and
                  Dejan Nickovic and
                  Amir Pnueli},
  title        = {From {MITL} to Timed Automata},
  booktitle    = {{FORMATS}},
  series       = {LNCS},
  volume       = {4202},
  pages        = {274--289},
  publisher    = {Springer},
  year         = {2006}
}

@article{jacm/FerrereMNP19,
  author       = {Thomas Ferr{\`{e}}re and
                  Oded Maler and
                  Dejan Nickovic and
                  Amir Pnueli},
  title        = {From Real-time Logic to Timed Automata},
  journal      = {J. {ACM}},
  volume       = {66},
  number       = {3},
  pages        = {19:1--19:31},
  year         = {2019}
}

@inproceedings{Bouyer09,
  author       = {Patricia Bouyer},
  title        = {Model-checking Timed Temporal Logics},
  booktitle    = {{M4M}},
  series       = {Electronic Notes in Theoretical Computer Science},
  volume       = {231},
  pages        = {323--341},
  publisher    = {Elsevier},
  year         = {2007}
}

@article{Spin,
  author       = {Gerard J. Holzmann},
  title        = {The Model Checker {SPIN}},
  journal      = {{IEEE} Trans. Software Eng.},
  volume       = {23},
  number       = {5},
  pages        = {279--295},
  year         = {1997}
}

@inproceedings{Spot,
  author       = {Alexandre Duret{-}Lutz and
                  Alexandre Lewkowicz and
                  Amaury Fauchille and
                  Thibaud Michaud and
                  Etienne Renault and
                  Laurent Xu},
  title        = {Spot 2.0 - {A} Framework for {LTL} and $\omega$-Automata
                  Manipulation},
  booktitle    = {{ATVA}},
  series       = {LNCS},
  volume       = {9938},
  pages        = {122--129},
  year         = {2016}
}

@inproceedings{NuSMV,
  author       = {Alessandro Cimatti and
                  Edmund M. Clarke and
                  Fausto Giunchiglia and
                  Marco Roveri},
  title        = {{NUSMV:} {A} New Symbolic Model Verifier},
  booktitle    = {{CAV}},
  series       = {LNCS},
  volume       = {1633},
  pages        = {495--499},
  publisher    = {Springer},
  year         = {1999}
}

@inproceedings{Wilke94,
  author       = {Thomas Wilke},
  title        = {Specifying Timed State Sequences in Powerful Decidable Logics and
                  Timed Automata},
  booktitle    = {{FTRTFT}},
  series       = {LNCS},
  volume       = {863},
  pages        = {694--715},
  publisher    = {Springer},
  year         = {1994}
}

@book{handbook-mc,
author = {Clarke, Edmund M. and Henzinger, Thomas A. and Veith, Helmut and Bloem, Roderick},
title = {Handbook of Model Checking},
year = {2018},
isbn = {3319105744},
publisher = {Springer Publishing Company, Incorporated},
edition = {1st},
}

@inproceedings{AlurD90,
  author    = {Rajeev Alur and
               David L. Dill},
  title     = {Automata For Modeling Real-Time Systems},
  booktitle = {{ICALP}},
  series    = {LNCS},
  volume    = {443},
  pages     = {322--335},
  publisher = {Springer},
  year      = {1990}
}

@inproceedings{DBLP:conf/formats/BouyerGHSS22,
  author    = {Patricia Bouyer and
               Paul Gastin and
               Fr{\'{e}}d{\'{e}}ric Herbreteau and
               Ocan Sankur and
               B. Srivathsan},
  title     = {Zone-Based Verification of Timed Automata: Extrapolations, Simulations
               and What Next?},
  booktitle = {{FORMATS}},
  series    = {LNCS},
  volume    = {13465},
  pages     = {16--42},
  publisher = {Springer},
  year      = {2022}
}

@article{tptl-AlurH94,
  author       = {Rajeev Alur and
                  Thomas A. Henzinger},
  title        = {A Really Temporal Logic},
  journal      = {J. {ACM}},
  volume       = {41},
  number       = {1},
  pages        = {181--204},
  year         = {1994}
}

@article{SchobbensRH02,
  author       = {Pierre{-}Yves Schobbens and
                  Jean{-}Fran{\c{c}}ois Raskin and
                  Thomas A. Henzinger},
  title        = {Axioms for real-time logics},
  journal      = {Theor. Comput. Sci.},
  volume       = {274},
  number       = {1-2},
  pages        = {151--182},
  year         = {2002}
}

@article{HirshfeldR04,
  author       = {Yoram Hirshfeld and
                  Alexander Moshe Rabinovich},
  title        = {Logics for Real Time: Decidability and Complexity},
  journal      = {Fundam. Informaticae},
  volume       = {62},
  number       = {1},
  pages        = {1--28},
  year         = {2004}
}

@phdthesis{timed-logics-Raskin-thesis,
  author       = {Jean{-}Fran{\c{c}}ois Raskin},
  title        = {Logics, Automata and Classical Theories for Deciding Real Time},
  school       = {Universit{\'{e}} de Namur, Belgium},
  year         = {1999}
}

@inproceedings{Acacia,
  author       = {Aaron Bohy and
                  V{\'{e}}ronique Bruy{\`{e}}re and
                  Emmanuel Filiot and
                  Naiyong Jin and
                  Jean{-}Fran{\c{c}}ois Raskin},
  title        = {Acacia+, a Tool for {LTL} Synthesis},
  booktitle    = {{CAV}},
  series       = {Lecture Notes in Computer Science},
  volume       = {7358},
  pages        = {652--657},
  publisher    = {Springer},
  year         = {2012}
}

@inproceedings{TsayVardi2021,
  author    = {Yih-Kuen Tsay and Moshe Y. Vardi},
  title     = {From Linear Temporal Logics to Büchi Automata: The Early and Simple Principle},
  booktitle = {Proceedings of the 2021 International Conference on Model Checking Software (SPIN 2021)},
  pages     = {8--40},
  year      = {2021}
}

@inproceedings{Syntcomp-Jacobs2015,
  author    = {Swen Jacobs et al.},
  title     = {The Second Reactive Synthesis Competition (SYNTCOMP 2015)},
  booktitle = {Proc. SYNT 2015},
  pages     = {27--57},
  year      = {2015},
  doi       = {10.4204/EPTCS.202.4}
}

@article{LTL-past-Markey,
  author    = {Nicolas Markey},
  title     = {Temporal Logic with Past is Exponentially More Succinct},
  journal   = {Bulletin of the European Association for Theoretical Computer Science},
  volume    = {79},
  pages     = {122--128},
  year      = {2003},
  month     = {February},
  publisher = {EATCS}
}

@article{Plaku2015,
  author    = {Erion Plaku and Sertac Karaman},
  title     = {Motion Planning with Temporal-Logic Specifications: Progress and Challenges},
  journal   = {Artificial Intelligence},
  volume    = {232},
  pages     = {1--20},
  year      = {2015}
}

@inproceedings{DBLP:conf/sfm/BehrmannDL04,
  author       = {Gerd Behrmann and
                  Alexandre David and
                  Kim Guldstrand Larsen},
  title        = {A Tutorial on {UPPAAL}},
  booktitle    = {{SFM}},
  series       = {Lecture Notes in Computer Science},
  pages        = {200--236},
  publisher    = {Springer},
  year         = {2004}
}

@inproceedings{DBLP:conf/atva/BouyerHR06,
  author       = {Patricia Bouyer and
                  Serge Haddad and
                  Pierre{-}Alain Reynier},
  title        = {Timed Unfoldings for Networks of Timed Automata},
  booktitle    = {{ATVA}},
  series       = {Lecture Notes in Computer Science},
  pages        = {292--306},
  publisher    = {Springer},
  year         = {2006}
}

@inproceedings{DBLP:conf/concur/GovindHSW19,
  author       = {R. Govind and
                  Fr{\'{e}}d{\'{e}}ric Herbreteau and
                  B. Srivathsan and
                  Igor Walukiewicz},
  title        = {Revisiting Local Time Semantics for Networks of Timed Automata},
  booktitle    = {{CONCUR}},
  series       = {LIPIcs},
  pages        = {16:1--16:15},
  publisher    = {Schloss Dagstuhl - Leibniz-Zentrum f{\"{u}}r Informatik},
  year         = {2019}
}

@inproceedings{mitlbmc,
  author       = {Roland Kindermann and
                  Tommi A. Junttila and
                  Ilkka Niemel{\"{a}}},
  title        = {Bounded Model Checking of an {MITL} Fragment for Timed Automata},
  booktitle    = {{ACSD}},
  pages        = {216--225},
  publisher    = {{IEEE} Computer Society},
  year         = {2013}
}

@inproceedings{nuxmv0,
  author       = {Alessandro Cimatti and
                  Alberto Griggio and
                  Enrico Magnago and
                  Marco Roveri and
                  Stefano Tonetta},
  title        = {Extending nuXmv with Timed Transition Systems and Timed Temporal Properties},
  booktitle    = {{CAV} {(1)}},
  series       = {Lecture Notes in Computer Science},
  volume       = {11561},
  pages        = {376--386},
  publisher    = {Springer},
  year         = {2019}
}

@article{nuxmv1,
  author       = {Alessandro Cimatti and
                  Alberto Griggio and
                  Enrico Magnago and
                  Marco Roveri and
                  Stefano Tonetta},
  title        = {SMT-based satisfiability of first-order {LTL} with event freezing
                  functions and metric operators},
  journal      = {Inf. Comput.},
  volume       = {272},
  pages        = {104502},
  year         = {2020}
}

@article{Zot,
  author       = {Matteo Pradella},
  title        = {A User's Guide to Zot},
  journal      = {CoRR},
  volume       = {abs/0912.5014},
  year         = {2009}
}

@inproceedings{stl,
  author       = {Jia Lee and
                  Geunyeol Yu and
                  Kyungmin Bae},
  title        = {Efficient SMT-Based Model Checking for Signal Temporal Logic},
  booktitle    = {{ASE}},
  pages        = {343--354},
  publisher    = {{IEEE}},
  year         = {2021}
}

@inproceedings{ALP24,
  author       = {Shaun Azzopardi and
                  David Lidell and
                  Nir Piterman},
  title        = {A Direct Translation from {LTL} with Past to Deterministic {R}abin Automata},
  booktitle    = {{MFCS}},
  series       = {LIPIcs},
  pages        = {13:1--13:15},
  publisher    = {Schloss Dagstuhl - Leibniz-Zentrum f{\"{u}}r Informatik},
  year         = {2024}
}

@inproceedings{LPZ85,
  author       = {Orna Lichtenstein and
                  Amir Pnueli and
                  Lenore D. Zuck},
  title        = {The Glory of the Past},
  booktitle    = {Logic of Programs},
  series       = {Lecture Notes in Computer Science},
  pages        = {196--218},
  publisher    = {Springer},
  year         = {1985}
}

@inproceedings{KKP11,
  author    = {Dileep Raghunath Kini and Shankara Narayanan Krishna and Paritosh K. Pandya},
  title     = {On Construction of Safety Signal Automata for {$\mathit{MITL}[\mathcal{U},\mathcal{S}]$} Using Temporal Projections},
  booktitle    = {{FORMATS}},
  series       = {Lecture Notes in Computer Science},
  pages        = {225--239},
  publisher    = {Springer},
  year         = {2011}
}

@inproceedings{AGGMM23,
  author       = {Alessandro Artale and
                  Luca Geatti and
                  Nicola Gigante and
                  Andrea Mazzullo and
                  Angelo Montanari},
  title        = {Complexity of Safety and coSafety Fragments of Linear Temporal Logic},
  booktitle    = {{AAAI}},
  pages        = {6236--6244},
  publisher    = {{AAAI} Press},
  year         = {2023}
}

@article{MBRP20,
  author       = {Claudio Menghi and
                  Marcello M. Bersani and
                  Matteo Rossi and
                  Pierluigi {San Pietro}},
  title        = {Model Checking {MITL} Formulae on Timed Automata: {A} Logic-based
                  Approach},
  journal      = {{ACM} Trans. Comput. Log.},
  volume       = {21},
  number       = {3},
  pages        = {26:1--26:44},
  year         = {2020}
}

@inproceedings{DBLP:conf/fossacs/BouyerSV25,
  author       = {Patricia Bouyer and
                  B. Srivathsan and
                  Vaishnavi Vishwanath},
  title        = {Model-Checking Real-Time Systems: Revisiting the Alternating Automaton
                  Route},
  booktitle    = {FoSSaCS},
  series       = {Lecture Notes in Computer Science},
  pages        = {399--421},
  publisher    = {Springer},
  year         = {2025}
}

\clearpage

\appendix

\section{Appendix for missing proofs}~\label{sec:app:translation}
 
In this section, we provide the missing proofs omitted in \Cref{sec:fastmtl-to-ta} and \Cref{sec:future}.

\subsection{Missing proofs from \Cref{sec:fastmtl-to-ta}}

\lemisatp*

\begin{proof}
  In the forward direction, suppose $w \models p$.  Then, we know that $p \in a_0$, i.e.,
  the atomic proposition $p$ holds at the first position of the word.
  By construction, $\Aisatp{p}$ (\Cref{fig:isat-prop}), the automaton starts in state $1$
  and checks whether $p$ holds at the first position.  If $p \in a_0$, then $\Aisatp{p}$
  takes the transition to state $2$, and from then on, continues to remain in state $2$ on
  all subsequent inputs.
  Hence, configuration $C_i$ satisfies $\Aisatp{p}.\state = 2$ for $i \ge 1$.

  In the other direction, suppose that for all $i \ge 1$, the configuration $C_i$
  satisfies $\Aisatp{p}.\state = 2$.  Since the only way for the automaton to reach state
  $2$ from the initial state $1$ is via a transition that checks $p \in a_0$, it must be
  the case that $p \in a_0$.
  Therefore, $w \models p$.
\end{proof}

\lemisatX*

\begin{proof}
  Suppose that $w \models \X_I \argone$. 
  Then, this means that the timed word $w$ satisfies (1) $\argone\in a_1$, and (2) 
  $\delta_1=\tau_1 - \tau_0 \in I$. 
  Consider the run of the automaton $(\Ainit, \AisatX)$ over $w$.  The automaton starts in
  state $1$ at configuration $C_0$.  From state $1$, the automaton moves to state $2$
  after reading the first letter $(a_0, \tau_0)$, resetting clock $\xinit$.  Then, from
  state $2$, since $\argone$ holds at the next letter $a_1$ and the clock value of
  $\xinit$ is $\delta_1=\tau_1 - \tau_0 \in I$, the automaton moves to state $3$.
  Since state $3$ is a sink-state with a self-loop, $\AisatX$ remains in state $3$ for all $i \ge 2$.
  Therefore, for all $i \ge 2$, the configuration $C_i$ satisfies $\AisatX.\state = 3$.
  
  Conversely, suppose that for all $i \ge 2$, the configuration $C_i$ satisfies $\AisatX.\state = 3$.
  By construction of $\AisatX$, state $3$ can only be reached from 
  state $2$ if $(a_1, \tau_1)$ satisfies the following:  
  $\argone \in a_1$ and $\xinit \in I$.
  Further, since $\xinit$ was reset on reading $(a_0, \tau_0)$, this implies that $\tau_1 - \tau_0 \in I$.
  Therefore, $w \models \X_I \argone$.
\end{proof}

\lemYesterday*

  \begin{proof}
    Since the network is deterministic and complete, there is a unique run of the network
    $(\Alast,\AY)$ on the given input word $w$.

    Note that the network $(\Alast, \AY)$ maintains two components:
    \begin{itemize}
      \item $\Alast$ stores the time elapsed since the previous position using the clock $\xlast$.

      \item $\AY$ allows to check whether the formula $\Y \argone$ holds by storing the
      value of $\argone$ at the previous position.
    \end{itemize}

    By construction, the state of $\AY$ is updated as follows. 
    If $\argone$ was true at $a_{i-1}$, then after processing $(a_{i-1}, \tau_{i-1})$,
    $\AY$ enters state $\ell_1$ and remains there at both $C_i$ and $C'_i$.
    Similarly, if $\argone$ was not true at $a_{i-1}$, then after processing $(a_{i-1},
    \tau_{i-1})$, $\AY$ enters state $\ell_0$ and remains there at both $C_i$ and $C'_i$.

    Similarly, $\Alast$ resets $\xlast$ upon reading each letter of the input word.
    Hence, at $C'_i$, we have $\Alast.\xlast = \delta_i = \tau_i - \tau_{i-1}$.
    Therefore, for all $i\ge 1$, $\tau_{i}-\tau_{i-1}\in I$ iff $C'_{i}(\Alast.\xlast)\in I$.

    The formula $\Y_I \argone$ is true at $i$ iff
    \begin{itemize}
      \item $\argone$ held at position $i-1$, which translates to $\cA_{\Y}$ being in
      state $\ell_1$ at configuration $C'_i$,

      \item $\tau_i - \tau_{i-1} \in I$ (i.e., $\Alast.\xlast \in I$ at $C'_i$).
    \end{itemize}
  Therefore, the test $\AY.\state = \ell_1 \wedge \Alast.\xlast \in I$ precisely checks
  $(w, i) \models \Y_I \argone$ during the transition $\overline{t}_i$ from $C'_i$ to
  $C_{i+1}$.
  \end{proof}

\lemsinceub*

\begin{proof} ~
  \begin{enumerate}
    \item 
    From the semantics of the untimed $\Since$ operator, $(w,i) \models \argone \S
    \argtwo$ if there exists $0 \leq j \leq i$ such that $\argtwo\in a_j$, and $\argone\in
    a_k$ for all $j < k \leq i$.

    From \Cref{fig:Since-1-sided-automata}, it can be seen that $\AS$ moves to state
    $\ell_1$ upon seeing $\argtwo$, and stays in $\ell_1$ as long as $\argone\vee\argtwo$
    holds.  $\AS$ returns to $\ell_0$ if it sees a position for which
    $\argone\vee\argtwo$ does not hold.

    Thus, if $C_{i+1}(\AS.\state) = \ell_1$, there exists some $j \le i$ where $\argtwo
    \in a_j$ and if we take the largest such $j$, at all positions in $(j,i]$,
    $\argone$ holds, which implies that $(w,i) \models \argone\S\argtwo$.
    
    Conversely, suppose that $(w,i) \models \argone\S\argtwo$. 
    This means, that there was some position $j \le i$ such that $(w,j) \models \argtwo$, and at all positions between $j$ and $i$, $\argone$ holds.
    Then, from the construction, $\AS$ must be in state $\ell_1$ at $C_{i+1}$.

    \item 
    $\ASlast.x$ is reset whenever $\argtwo$ is observed, and is not updated until the next
    occurrence of $\argtwo$.  Thus, $\ASlast$ stores the total time elapsed since the most
    recent occurrence of $\argtwo$.  We have already seen from the previous item that if
    $C_{i+1}(\AS.\state)=\ell_1$, then there exists some $j \le i$ where $\argtwo\in a_j$.
    Therefore, when $C_{i+1}(\AS.\state) = \ell_1$, the last observed $\argtwo$ was at
    some position $j \le i$ with no $\argtwo$ between $j$ and $i$, then
    $C_{i+1}(\ASlast.x)=\tau_{i}-\tau_{j}$.

    \item 
    Recall that $\argone \S_I \argtwo$ holds at $(w,i)$ if and only if the following two
    conditions are satisfied
    \begin{itemize}
      \item the untimed condition for $\argone \S \argtwo$ is true, which from (i) we know
      is the case if the automaton $\AS$ is in state $\ell_1$ at configuration $C_{i+1}$,
      and

      \item the timed constraint is satisfied, i.e., $\tau_i - \tau_j \in I$, where $j$ is
      the most recent position where $\argtwo$ was true.  This holds because $I$ is an
      upper-bounded interval.  From (2), we know that when $C_{i+1}(\AS.\state) =
      \ell_1$, we have $C_{i+1}(\ASlast.x)=\tau_{i}-\tau_{j}$.
    \end{itemize}

    From the above, it is easy to see that the test
    $\outv_{\S_{I}}=(\post{\AS}\state=\ell_1 \wedge \post\ASlast{x}\in I)$ is satisfied
    during transition $\overline{t}_{i}$ from $C'_{i}$ to $C_{i+1}$ exactly when $(w,i)
    \models \argone \S_I \argtwo$.  
    \qedhere
  \end{enumerate}
\end{proof}

\lemsincelb*

\begin{proof} ~
  \begin{enumerate}
    \item 
    This is already shown in the proof of \Cref{lem:since-upper-bound} given above.
  
    \item First, notice that if $C_{i+1}(\AS.\state)=\ell_1$, the automaton $\AS$ moved
    from $\ell_{0}$ to $\ell_{1}$ in the past, i.e., with some $\overline{t}_{i'}$ with
    $i'\leq i$.  During this transition $\overline{t}_{i'}$, the automaton $\ASfirst$
    takes the loop which resets $y$.  Hence, $y$ was reset at least once in the past of
    $i$.  Let $j\leq i$ be the position where $y$ was last reset by $\ASfirst$.  We have
    $\argtwo\in a_j$ and either $\argone\notin a_{j}$ or
    $C_j(\pre\AS\state)=C'_j(\pre\AS\state)=\ell_{0}$.
      
    Now we focus on the two parts of the statement that we need to show.
    
    \begin{itemize}
      \item We already saw that $\argtwo \in a_j$.  Next, we show that $\argone\in
      a_k$ for all $j < k \leq i$. Assume that $\argone\notin a_{k}$ for some $j<k\leq i$.
      If $\argtwo\in a_{k}$ then $\ASfirst$ takes the loop which resets $y$ during 
      transition $\overline{t}_{k}$, a contradiction with the maximality of $j$.
      Hence, $\argtwo\notin a_{k}$ and transition $\overline{t}_{k}$ takes $\AS$ to state 
      $\ell_0$. So we must have some $k<i'\leq i$ such that $\overline{t}_{i'}$ takes 
      $\AS$ from $\ell_0$ to $\ell_1$. As explained above, $\ASfirst$ must take the loop 
      which resets $y$ during transition $\overline{t}_{i'}$, again a contradiction with 
      the maximality of $j$. Therefore, $\argone\in a_k$ for all $j < k \leq i$.
            
      \item For the second part, suppose towards a contradiction that there was a position
      $0\leq j'<j$ such that $\argtwo\in a_{j'}$ and $\argone\in a_{k'}$ for all
      $j'<k'\leq i$.  Since $\argtwo\in a_{j'}$ and $\argone\in a_{k'}$ for all $j'<k'
      \leq i$, automaton $\AS$ must be in state $\ell_1$ from $C_{j'+1}$ to $C_{i+1}$.
      Since $j'<j\leq i$, automaton $\AS$ is in state $\ell_1$ at configurations $C_{j}$
      and $C'_{j}$.  Further, we know that $y$ was reset by $\ASfirst$ during transition
      $\overline{t}_{j}$.  Since $C'_{j}(\pre\AS\state)=\ell_{1}$, this is only possible 
      if $\argone\not\in a_j$, a contradiction.
      Therefore, for all $0\leq j'<j$ either $\argtwo\notin a_{j'}$ or $\argone\notin
      a_{k'}$ for some $j'<k'\leq i$.
    \end{itemize}

    \item The proof proceeds as in the proof of \Cref{lem:since-upper-bound}, as we have
    shown the untimed and timed aspects separately above.  For the sake of completeness,
    we give the arguments below:
    $\argone \S_I \argtwo$ holds at $(w,i)$ if and only if the following two conditions
    are satisfied
    \begin{itemize}
      \item the untimed aspect follows from the fact that automaton $\AS$ is in state
      $\ell_1$ at configuration $C_{i+1}$ (shown in the first part), and
      
      \item the timed constraint is satisfied, i.e., $\tau_i - \tau_j \in I$, where $j$
      is the earliest position where $\argtwo$ is seen such that this $\argtwo$ can act
      as a witness for $\argone \S \argtwo$: This holds because $I$ is a lower-bounded
      interval.  From (ii), we know that when $C_{i+1}(\AS.\state) = \ell_1$, we have
      $C_{i+1}(\ASfirst.y)=\tau_{i}-\tau_{j}$.
    \end{itemize}
    
    From the above, it is easy to see that the test
    $\outv_{\S_{I}}=(\post{\AS}\state=\ell_1 \wedge \post\ASfirst{y}\in I)$ is satisfied
    during transition $\overline{t}_{i}$ from $C'_{i}$ to $C_{i+1}$ exactly when $(w,i)
    \models \argone \S_I \argtwo$.  \qedhere
  \end{enumerate}
\end{proof}

\lemsincegeneral*

The proof of \Cref{lem:since-general} will be given after the following technical lemma which establishes all the necessary invariants of the automaton $\ASgenI$.

\begin{lemma}\label{lem:since-general-invariants}
  Let $I$ be a non-singleton interval with lower and upper bounds $b,c\in\mathbb{N}$ and 
  $0\notin I$.  Let ${\leqlt}={\leq}$ if $b\in I$ and ${\leqlt}={<}$ otherwise.
  Let $\Prop' = \Prop \uplus \{\argone, \argtwo\}$.  For every word $w =
  (a_0, \tau_0) (a_1, \tau_1) \cdots$ over $\Prop'$ there is a unique run $C_0
  \xra{\delta_0} C'_0 \xra{\overline{t}_0} C_1 \xra{\delta_1} C'_1 \xra{\overline{t}_1}
  C_2 \cdots$ of $(\AS,\ASgenI)$.
  Moreover, for all $i \ge 0$ we have:
  \begin{enumerate}
    \item $(w,i) \models \argone\S\argtwo$ iff $C_{i+1}(\AS.\state)=\ell_1$.
    
    \item When $C_{i+1}(\AS.\state)=\ell_0$ then configuration 
    $C_{i+1}$ satisfies $x_{1}=y_{1}=\cdots=x_k=y_k=+\infty$.
    
    \item When $C_{i+1}(\AS.\state)=\ell_1$ then for some $1\leq j\leq k$, configuration 
    $C_{i+1}$ satisfies the following:
    \begin{itemize}
      \item  $y_j\leq x_j< y_{j-1}\leq x_{j-1} < \cdots < y_{1}\leq x_{1} <+\infty$, 
      $x_{j+1}=y_{j+1}=\cdots=x_k=y_k=+\infty$,
    
      \item  $x_{j'}-y_{j'}<c-b$ for all $1\leq j'\leq j$, and
      $x_{j'}-x_{j'+1}\geq c-b$ for all $1\leq j'<j$, 
      
      \item there are indices $i_1\leq i'_1<i_2\leq i'_2<\cdots<i_j\leq i'_j\leq i$ such 
      that 
      \begin{itemize}
        \item $\argone\in a_{i'}$ for all $i'\in(i_{1},i]$,
      
        \item $\argtwo\in a_{i'}$ for all $i'\in\{i_1,i'_1,\ldots,i_j,i'_j\}$,
      
        \item $\argtwo\notin a_{i'}$ for all 
        $i'\in(i'_1,i_2)\cup\cdots\cup(i'_{j-1},i_j)\cup(i'_j,i]$,
        
        \item $x_{1}=\tau_i-\tau_{i_1}$, $y_{1}=\tau_i-\tau_{i'_1}$, \ldots,
        $x_{j}=\tau_i-\tau_{i_j}$, $y_{j}=\tau_i-\tau_{i'_j}$,
      \end{itemize}
    
      \item  $b\leqlt x_1$ or $(w,i_{1}) \not\models \argone \wedge \Y(\argone\S\argtwo)$.
      Moreover, if $j>1$ then $b\nleqlt x_2$.
    \end{itemize}
  \end{enumerate}
\end{lemma}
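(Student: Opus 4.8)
The plan is to prove the lemma by induction on $i$, establishing the stated invariants at the post-transition configuration $C_{i+1}$. Uniqueness of the run needs no separate argument, since $\ASgenI$ (like $\AS$) is deterministic and complete; and item~1 is merely the correctness of the untimed Since component $\AS$, already available from \Cref{lem:since-upper-bound}. Item~2 is immediate from the code: if $C_{i+1}(\AS.\state)=\ell_0$ then $\overline{t}_i$ satisfied $\post\AS\state=\ell_0$, so the branch on Line~1 fired and set every clock to $+\infty$. Thus the entire content is item~3, which I would carry as the induction hypothesis \emph{in full} and propagate across one delay $\delta_i$ (which advances every finite clock by $\tau_i-\tau_{i-1}$) followed by the transition $\overline{t}_i$.

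For the base case (the transition $C_0\to C_1$) and for the two easy branches of the inductive step, the invariant is re-established from scratch. If $\ell_1$ is reached via the branch on Line~3, i.e.\ $\post\AS\state=\ell_1$ together with $\neg\postM\argone\vee\pre\AS\state=\ell_0$, then $x_1=y_1=0$ and all other clocks are $+\infty$, so I take $j=1$ and $i_1=i'_1=i$. Here $\argtwo\in a_i$ must hold, since it is the only possible witness (if $\argone\notin a_i$ the chain cannot reach back, and if $\pre\AS\state=\ell_0$ then no strictly earlier witness can support Since at $i$); moreover $x_1-y_1=0<c-b$ as $c-b\ge 1$, and the lower-bound clause holds via its second disjunct because either $\argone\notin a_i$ or $(w,i-1)\not\models\argone\S\argtwo$, whence $(w,i)\not\models\argone\wedge\Y(\argone\S\argtwo)$. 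The base case is the special instance $i=0$.

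The real work is the branch on Line~5, where $\post\AS\state=\ell_1$, $\postM\argone$ and $\pre\AS\state=\ell_1$, so the witness set is inherited. I would argue in three steps. First, the delay $\delta_i$ increases every active $x_{j'},y_{j'}$ by the same amount, preserving the ordering and the block-span/block-gap invariants, but possibly pushing $x_2$ across the lower bound. Second, the clock shift on Lines~6--8 is a no-op when only $x_1$ satisfies $b\leqlt\cdot$, and otherwise discards the now-redundant oldest blocks $x_1,y_1,\dots,x_{n-1},y_{n-1}$; because $n$ is chosen maximal with $b\leqlt x_n$, the relabelled $x_2$ satisfies $b\nleqlt x_2$, re-establishing that clause, while the new $x_1$ (old $x_n$) satisfies $b\leqlt x_1$, re-establishing the first lower-bound clause (and when no shift occurs this clause persists unchanged, as $x_1$ only grows and $i_1$ is unchanged). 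Third, the witness update on Lines~9--13 either does nothing (when $\neg\postM\argtwo$, simply extending the $\argone$-run and the $\argtwo$-free interval to include $i$), resets $y_j:=0$ when $x_j<c-b$ (making $i$ the latest witness of the current block, with $x_j-y_j=x_j<c-b$), or opens a new block $x_{j+1},y_{j+1}:=0$ when $x_j\ge c-b$ (where the block-span invariant forces $y_j>x_j-(c-b)\ge 0$, so the strict ordering $0<y_j$ is preserved).

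I expect the \emph{boundedness} to be the main obstacle: I must show $j\le k$ always, and $j<k$ at the exact moment a new block is opened, so that clocks $x_{j+1},y_{j+1}$ exist. This is where the choice $k=2+\lfloor b/(c-b)\rfloor$ enters. Combining the block-gap invariant $x_{j'}-x_{j'+1}\ge c-b$ with $x_j\ge y_j\ge 0$ gives $x_2\ge (j-2)(c-b)$, while the re-established clause $b\nleqlt x_2$ gives $x_2\le b$ (in both readings of $\leqlt$); hence $j-2\le\lfloor b/(c-b)\rfloor$ and $j\le k$. When a new block is about to be opened we additionally have $x_j\ge c-b$, which sharpens this to $x_2\ge (j-1)(c-b)\le b$, yielding $j-1\le\lfloor b/(c-b)\rfloor$ and therefore $j\le k-1$ (with the degenerate case $j=1<k$ covered by $k\ge 2$); this is precisely the strict bound needed. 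The remaining bookkeeping of the witness indices $i_1\le i'_1<\cdots<i_j\le i'_j$ and of the clause $b\leqlt x_1\vee (w,i_1)\not\models\argone\wedge\Y(\argone\S\argtwo)$ through these three steps is routine but must be tracked carefully, since it is the shift that keeps the witness window short enough for the clock bound to hold.
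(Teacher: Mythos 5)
Your proposal is correct and follows essentially the same route as the paper's proof: induction on $i$ carrying item~3 in full, case analysis mirroring the branches of \Cref{algo:Since-I-general} (fresh witness, clock shift, block update), re-establishing $b\nleqlt x_2$ from the maximality of $n$ in the shift, and the same counting argument — telescoping the block gaps $x_{j'}-x_{j'+1}\geq c-b$ against $x_2\leq b$ — to bound the number of active clock pairs by $k=2+\lfloor b/(c-b)\rfloor$. The only cosmetic difference is that you derive $j\leq k-1$ directly at block-opening time whereas the paper argues by contradiction from $j=k$, which is the same inequality chain.
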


\begin{proof}
  1.\ was already stated in \Cref{lem:since-upper-bound}.
  2.\ follows directly from Lines (1--2) of \Cref{algo:Since-I-general}.
  
  3. The proof is by induction on $i$. The base case is $i=0$.
  The initial state of $\AS$ is $\ell_{0}$, i.e., $C_0(\pre\AS\state)=\ell_{0}$. 
  Now, $C_{1}(\pre\AS\state)=\ell_1$ iff $\argtwo\in a_0$.
  In this case, Line 4 of \Cref{algo:Since-I-general} is taken and we easily check that
  item (3) of \Cref{lem:since-general} holds with $j=1$ and $i_1=i'_1=0$.
    
  \medskip
  
  Consider now $i>0$ and assume that item (3) hold for $i-1$.
  If $C_i(\pre\AS\state)=\ell_{0}$ then we argue exactly as in the base case above.
  The interesting case is $C_i(\pre\AS\state)=\ell_{1}$. 
  By induction, there is $1\leq j\leq k$ and indices $i_1\leq i'_1<i_2\leq
  i'_2<\cdots<i_j\leq i'_j\leq i-1$ satisfying all the conditions of item (3) for $i-1$.

  If $\argone\notin a_i$ then Line 4 of \Cref{algo:Since-I-general} is taken.
  Notice that $\argtwo\in a_i$ since the selfloop on $\ell_1$ of $\AS$ is taken.
  We easily check that item (3) of \Cref{lem:since-general} holds with
  $j=1$ and $i_1=i'_1=i$.
    
  It remains to deal with the case $C_{i+1}(\pre\AS\state)=C_i(\pre\AS\state)=\ell_{1}$
  and $\argone\in a_i$.  

  \medskip
  
  Assume first that no clock shift occurs: we have $b\nleqlt
  C'_{i}(x_{j'})$ for all $1<j'\leq j$.
    
  \medskip
  
  If $\argtwo\in a_i$ then Lines (10--11) of \Cref{algo:Since-I-general} are executed.
    
  Recall that $C'_i$ satisfies $x_{j'}-x_{j'+1}\geq c-b$ for all $1\leq j'<j$.  
  Hence, if $C'_i$ satisfies $x_{j'}<c-b$ it must be with $j'=j$.
  In this case, Line 10 resets clock $y_{j}$ to $0$. We see easily that $C_{i+1}$ 
  satisfies Item 3 with the same $j$ and the sequence of indices
  $i_1\leq i'_1<i_2\leq i'_2<\cdots<i_j\leq i''_j=i$.
  Note that the last index in the sequence has been updated: $i''_j=i$.
  Since $x_j<c-b$ when $y_j$ is reset, $C_{i+1}$ satisfies $x_j-y_j<c-b$.
  
  Assume now that $C'_i$ satisfies $x_{j}\geq c-b$. 
  Assume towards a contradiction that $j=k$.
  Then, $C'_{i}$ satisfies $x_2-x_k\geq(k-2)(c-b)$.
  Since we had no clock shift, we also have $x_2\leq b$.
  Together with $c-b\leq x_{k}$ we get $(k-1)(c-b)\leq x_2\leq b$.
  This is a contradiction with $k=2+\Big\lfloor\dfrac{b}{c-b}\Big\rfloor$.
  Hence, $j<k$ and Line 11 will reset clocks $x_{j+1},y_{j+1}$ to $0$.
  We easily check that Item 3 is satisfied with $j+1$ and the sequence of indices
  $i_1\leq i'_1<i_2\leq i'_2<\cdots<i_j\leq i'_j\leq i_{j+1}=i'_{j+1}=i$.
  In particular, $C_{j+1}$ satisfies $x_j-x_{j+1}=x_j\geq c-b$.
  And also, if $j+1=2$, we have $b\nleqlt 0=x_{2}$ at $C_{i+1}$ since $0\notin I$.
  
  If $\argtwo\notin a_i$ then Line 13 of \Cref{algo:Since-I-general}
  is taken and we easily check that item (3) holds for $i$ with the same $j$ and 
  indices $i_1\leq i'_1<i_2\leq i'_2<\cdots<i_j\leq i'_j$. In particular, since clocks 
  are not updated, for all $1\leq j'\leq j$, we get
  $C_{i+1}(x_{j'})=C_i(x_{j'})+\tau_{i}-\tau_{i-1}=\tau_{i}-\tau_{i_{j'}}$ 
  and similarly for $y_{j'}$.

  \medskip
  
  Finally, assume that a clock shift occurs: the test of Line 6 evaluates to true with
  some $1<n\leq j$.  Let $j''=j-n+1$.
  Then, $1\leq j''<j$ and after the clock shift we have $x_{j''}\neq+\infty$ and
  $x_{j''+1}=+\infty$.  We can easily check that, \emph{after the clock shift}, Item 3
  holds with $j''$ and the sequence of indices
  $i_{n}\leq i'_{n}<i_{n+1}\leq i'_{n+1}<\cdots <i_{j}\leq i'_{j}\leq i$. 
  In particular, the value of $x_1$ after the clock shift is the value of $x_{n}$ at
  $C'_{i}$, and $b$ is smaller than $(\leqlt$) this value due to the condition of the
  clock shift. Similarly, if $j''>1$ then $b$ is smaller than $(\leqlt$) the value of $x_2$ 
  after the clock shift which is the value of $x_{n+1}$ at $C'_i$.
  Therefore, the last bullet of Item 3 holds after the clock shift, hence also at
  $C_{i+1}$.
    
  \medskip
  
  Now, we can prove as above, in the case of no clock shift, that after executing Lines 
  (9--14) of \Cref{algo:Since-I-general}, Item 3 holds at $C_{i+1}$.
  This concludes the inductive proof of Item (3).
\end{proof}

\begin{proof}[Proof of \Cref{lem:since-general}]
  Fix $i\geq0$. By \Cref{lem:since-general-invariants}, position $i$ satisfies the 
  untimed since $\argone\S\argtwo$ iff $C_{i+1}(\pre\AS\state)=\ell_1$ (corresponding to 
  $\post\AS\state=\ell_1$ in $\outv_{\S_I}$). We assume that we are in this case.
  The only question is whether there is a witnessing $\argtwo$ within the interval $I$.
  We claim that this can be determined by looking solely at the values of clocks $x_1$ 
  and $y_1$ at configuration $C_{i+1}$ (corresponding to the test
  $\post\ASgenI{x_1}\in I \vee \post\ASgenI{y_1}\in I$ in $\outv_{\S_I}$).
  
  We can view this as follows.  
  Suppose there is in the past a $\argtwo$ event witnessing $\argone\S_{I}\argtwo$:
  $\argtwo\in a_{i''}$ for some $i''\leq i$ and
  $\argone\in a_{i'}$ for all $i''<i'\leq i$ and $\tau_i-\tau_{i''}\in I$.
  Let $1\leq j\leq k$ and $i_1,i'_1,\ldots,i_j,i'_j$ be the indices given by
  \Cref{lem:since-general-invariants}~(3).
  
  If $i''<i_1$ then $(w,i_1)\models\argone\wedge\Y(\argone\S\argtwo)$.
  Therefore in this case we have $b\leqlt x_1$ at $C_{i+1}$. 
  Now, $b\leqlt \tau_i-\tau_{i_{1}}\leq\tau_i-\tau_{i''}\in I$.
  We deduce that $i_1$ is also a witness of $(w,i)\models\argone\S_{I}\argtwo$.
  
  If $i'_1<i''$ then we must have $j\geq 2$.  Moreover, $i_2\leq i''$ since there are no
  $\argtwo$ in $(i'_1,i_2)$.  Using $b\nleqlt x_2$ at $C_{i+1}$, we deduce that $b\nleqlt
  \tau_i-\tau_{i_2}$ and then $b\nleqlt \tau_i-\tau_{i''}$, a contradiction with
  $\tau_i-\tau_{i''}\in I$.
  
  Finally, if $i_1<i''<i'_1$ then using $x_1-y_1<c-b$ at $C_{i+1}$ we get
  $\tau_{i'_1}-\tau_{i_1}<c-b$.  Using $\tau_{i_1}\leq\tau_{i''}\leq\tau_{i'_1}$ and
  $\tau_i-\tau_{i''}\in I$, we deduce that either $i_1$ or $i'_1$ (or both) is also a
  witness of $(w,i)\models\argone\S_{I}\argtwo$.
\end{proof}

\subsection{Missing proofs from \Cref{sec:future}}

\lemuntilsidedbound*

\begin{proof}
  Both the automata $\AUfirst$ and $\AUlast$ contain a single state and two transitions.
  The tests in these two transitions are negations of each other and hence ensure a
  deterministic choice at position $i$.  However, in $\AU$, there is non-determinism in
  the choice of transitions.  Item $1$ above claims that $C_i(\pre\AU\state) =
  C'_i(\pre\AU\state) = \ell_1$ iff position $(w, i)$ satisfies $\argone \U \argtwo$.
  Hence, if the claim is true, it follows that there is an unambiguous choice even in
  this automaton, leading to a unique run of the network on the given word $w$.  Hence, we
  move on to proving the claim.

  We start with a property of clock $x$ that can be directly deduced from the transitions
  of $\AUfirst$.  If $C'_i(\pre\AU{x}) \neq -\infty$, then there is a point $j \ge i$ such
  that the following hold (i) $\argtwo \in a_j$, (ii) for all $i \leq k < j$, $\argtwo \notin
  a_k$ and (iii) $\Absolut{C'_i(\pre\AU{x})} = \tau_j - \tau_i$.  Else, $C'_i(\pre\AU{x}) =
  -\infty$, and there is no $j \ge i$ such that $\argtwo \in a_j$.

  \begin{enumerate}
    \item Suppose $C_i(\pre\AU\state) = C'_i(\pre\AU\state) = \ell_0$.  All outgoing
    transitions from $\ell_0$ satisfy $\neg \argtwo$.  Hence, $\argtwo \notin a_i$.  If
    for all $j \ge i$, $C_j(\pre\AU\state) = C'_j(\pre\AU\state) = \ell_0$, then clearly
    $\argtwo \notin a_j$ for all $j \ge i$, showing that $(w, i) \models \neg (\argone \U
    \argtwo)$.  Else, pick the smallest $j > i$ such that $C_j(\pre\AU\state) =
    C'_j(\pre\AU\state) = \ell_1$.  Since the only transition that leads out of $\ell_0$
    satisfies $\neg \argone \wedge \neg \argtwo$, we deduce that $\argone \notin a_{j-1}$.
    Furthermore, $\argtwo \notin a_k$ for all $i \le k \le j-1$.  This proves that $(w, i)
    \models \neg (\argone \U \argtwo)$.
    
    Now suppose $C_i(\pre\AU\state) = C'_i(\pre\AU\state) = \ell_1$.  The two transitions
    out of $\ell_1$ require that either $\argone$ or $\argtwo$ is present in $a_i$.  From
    the invariant $\pre\AUfirst{x} \neq -\infty$, we can infer that there is an earliest
    point $j \ge i$ such that $\argtwo \in a_j$.  Coupling the two arguments, we deduce
    that $\argone \in a_k$ for all $i \le k < j$.  Hence $(w, i) \models \argone \U
    \argtwo$.
    
    \item We have seen that $x$ maintains the time to the earliest $\argtwo$.  From the
    previous item, we know $(w, i) \models \argone \U \argtwo$ iff $C'_i(\pre\AU\state) =
    \ell_1$.  Hence, if $\outv_{\U_{I}}$ holds at $C'_i$, clearly $(w, i) \models \argone
    \U_I \argtwo$.  Conversely, if $(w, i) \models \argone \U_I \argtwo$, then as $I$ is
    an upper bounded interval, the earliest witness should lie in the interval $I$, and
    therefore $\outv_{\U_{I}}$ should hold at $C'_i$.
    
    \item Again, from Item 1, we know that $(w,i)\models\argone\U\argtwo$ if and only if
    $C'_i(\pre\AU\state)=\ell_1$, which is the first part of the test $\outv_{\U_{I}}$ at
    $C'_i$.  We assume below that this is the case.
    
    From the transitions of $\AUlast$, we see that clock $y$ tracks the time to the next
    position, if it exists, that satisfies $\argtwo \wedge \neg (\argone \wedge
    \post\AU\state = \ell_1)$. 
    
    If such a position $j\geq i$ exists, it is the latest witness of $\argone\U\argtwo$ at
    $i$ and we have $\Absolut{C'_i(\pre\AUlast{y})}=\tau_j-\tau_i$.  Since $I$ is a lower
    bounded interval, we deduce that in this case $(w,i)\models\argone\U_I\argtwo$ if and
    only if $\outv_{\U_{I}}$ holds at $C'_i$.
    
    If there is no such position, since we have assumed that
    $(w,i)\models\argone\U\argtwo$, it means that there are infinitely many witnesses:
    $(w,i)\models\G\argone\wedge\G\F\argtwo$.  In this case we have
    $\Absolut{C'_i(\pre\AUlast{y})}=+\infty$. Also, since $I$ is a lower bounded interval 
    and the word is non-zeno, there are witnesses $j\geq i$ with $\argtwo\in a_j$ and 
    $\tau_j-\tau_i\in I$. Again, this proves that $(w,i)\models\argone\U_I\argtwo$ if and
    only if $\outv_{\U_{I}}$ holds at $C'_i$.
    \qedhere
  \end{enumerate}
\end{proof}

\section{Appendix for liveness}
\label{sec:couvreur-algorithm}

For the purpose of the algorithm, we consider the finite graph $\finiteabs$ defined as the
zone graph of $\gtasystem$ quotiented by the mutual simulation relation.  The finiteness
and soundness of this abstraction for liveness was shown in~\cite{AGGS-CONCUR24}.

\begin{algorithm}[h]
  \caption{SCC decomposition based algorithm for GTA liveness with generalized B\"uchi acceptance condition.}
  \label{alg:liveness}
  \begin{algorithmic}[1]
    \State \textbf{Global variables}
    \State $\countint$: Integer initialized to $0$
    \State $\Roots$: Stack storing tuples $(\eta, \labels, \releasedclocks)$ where $\eta$ is a potential root node, 
    $\labels$ is the set of labels seen in SCC rooted at $\eta$, 
    and $\releasedclocks$ is the set of clocks released in SCC rooted at $\eta$ 
    \State $\CurrentReleasedClocks$: Stack storing the set of future clocks that have been released in the current SCC
    \State $\Active$: Stack storing the nodes which are active in the SCC exploration of the current root node
    \State $\Todo$: Stack storing pairs $(\eta, \suc)$ where $\eta$ is a node and $\suc$ is a list of successor node and transition. 
    \Function{GTASCC}{$\finiteabs, L$}
      \Comment Outputs if there is a live component in $\finiteabs$ which visits all labels from $L$ 
      \ForOne{Each initial node $\eta$ in $\finiteabs$}{$\Call{Dfs}{\eta, L}$}\EndForOne
    \EndFunction
    \Function{Dfs}{$\eta, L$}
      \Comment{Perform DFS starting from $\eta$}
      \State $\Call{Push}{\eta, \emptyset}$
      \While{$\Todo$ is not empty}
        \State $(\nu, \suc) \gets \Call{Todo.top}$ 
        \If{$\suc$ is empty}
          \If{$(\nu, \_, \_) = \Call{Roots.top}$} \Comment{SCC rooted at $u$ is explored}
            \State $\Call{Close}{s}$
          \EndIf
          \State $\Call{Todo.pop}$
        \Else
          \State $(t, \chi) \gets \Call{succ.FirstAndRemove}$  \Comment{Explore along transition $\nu \xra{t} \chi$}
          \If{$\chi.\dfsnum = 0$} \Comment{New node}
            \State $\Call{Push}{\chi, \released(t)}$ \Comment{$\released(t)$ is the set of future clocks released in $t$}
          \ElsIf{$\chi.\curr$} \Comment{Cycle detected}
            \State $\Call{CurrentReleasedClocks.push}{released(t)}$
            \State $\Call{MergeSCC}{\chi, L}$
          \EndIf
        \EndIf
      \EndWhile
    \EndFunction
    \algstore{bkbreak}
  \end{algorithmic}
  \end{algorithm}
  
  \begin{algorithm}[h]
  \begin{algorithmic}[1]
  \algrestore{bkbreak}
    \newpage
    \Function{Push}{$\eta, X_{\mathsf{rel}}$} \Comment{$\eta$ is a new node, $X_{\mathsf{rel}}$ is the set of clocks released on transition to $\eta$}
      \State $\countint \gets \countint + 1$; $\eta.\dfsnum \gets \countint$; $\eta.\curr \gets true$
      \State $\Call{Active.push}{\eta}$
      \State $\Call{Todo.push}{(\eta, \nxt(\eta))}$ \Comment{$\nxt(\eta)$ returns the successors of $\eta$}
      \State $\Call{\CurrentReleasedClocks.push}{X_{\mathsf{rel}}}$
      \State $\Call{Roots.push}{(\eta, \labels(\eta), \emptyset)}$
    \EndFunction
    \Function{Close}{$\eta$} \Comment{Closes the SCC of $\eta$}
    \State $\Call{Roots.Pop}$
    \State $\Call{CurrentReleasedClocks.Pop}$
    \State $\mu \gets null$ 
    \Repeat
      \State $\mu \gets \Call{Active.top}$
      \State $\mu.\curr = false$
      \State $\Call{Active.pop}$
    \Until{$\mu \neq \eta$}
    \EndFunction
    \Function{MergeSCC}{$v, L$}
      \State $A \gets \emptyset$, $R \gets \emptyset$
      \State $(\mu, \_, \_) \gets \Call{Roots.top}$
      \Repeat
        \State $c \gets \Call{CurrentReleasedClocks.Top}$
        \State $\Call{CurrentReleasedClocks.Pop}$
        \State $R \gets R \cup c$
        \State $(\mu, l, c) \gets \Call{Roots.Top}$
        \State $\Call{Roots.Pop}$
        \State $A \gets A \cup l$, $R \gets R \cup c$
      \Until{$\mu.\dfsnum > \eta.\dfsnum$}
      \If {$L \subseteq A$ and $X_{F}=R \cup \Call{InfClocks}{\mu} = X_F$} 
        \Comment{$\Call{InfClocks}{\mu}$ is the set of future clocks which can take
        value $-\infty$ in zone of $u$}
        \State \Output{Accepting SCC detected}
      \EndIf
      \State $\Call{Roots.Push}{(\mu, A, R)}$
    \EndFunction
  \end{algorithmic}
\end{algorithm}

\clearpage

\section{Appendix for Implementation and Experimental evaluation}
\label{sec:app:experiments}

\subsection{Implementation details.}

Our tool, $\tool$, takes as input a $\MTLfp$ formula $\psi$ and generates a synchronous
network of generalized timed automata with shared variables based on the translation
described in \Cref{sec:fastmtl-to-ta,sec:future}.  The execution of this network is simulated using an array
of shared variables.  Each generalized timed automaton writes its output to a dedicated
shared variable, which can then be accessed by its parent automaton.  Synchronization is
enforced by sequentially executing each automaton in bottom-to-top order.  A shared
control variable tracks the active automaton during execution, and a transition in an
automaton is executed when its ID matches the value of the control variable.  Upon taking
the transition, the control variable is updated to the ID of the next automaton.  A
controller is used to reset the execution after every automaton takes a transition and
alternates between time elapse and automaton execution.  A master automaton evaluates the
top-level boolean subformula, ensuring the formula is satisfied at the start and, for
finite timed words, that all automata have performed a transition in the last round.

To avoid constructing automata for boolean operators, we use symbolic labels to
represent temporal operators and atomic propositions, generating boolean expressions with
these labels.  For model checking, we construct the network of automata for $\neg \psi$,
where $\psi$ is the input specification.  The input model and the network of automata
generated by our translation are executed alternately, with transitions of the model
updating atomic propositions in the shared variable array, which the network of timed
automata reads.

\subsection{Comparison of translation times}

In~\Cref{table:translation-time}, we discuss the time for the translation for a few
instances.  We remark that the time for the translation is rather small in most cases.

\begin{table}[hbtp]
  \centering
  \resizebox{\textwidth}{!}{
    \begin{tabular}{|l|l|r|r|}
      \hline
      Sl. 
      & \multicolumn{1}{c|}{Formula}
      & \multicolumn{2}{|c|}{Time taken by}\\
      \cline{3-4}
      No. 
      & 
      & \MightyL\ (ms) 
      &  \tool\ (ms) \\
    \hline
    1 & $(((((p0 \U_{[1, 2]} q1) \U_{[1, 2]} q2) \U_{[1, 2]} q3) \U_{[1, 2]} q4) \U_{[1, 2]} q5)$ & 50 & 5\\
      \hline
      2 & $\Always_{[0, 10]} \Eventually_{[1, 2]} ( (a \wedge (\Next b) ) \vee  (\neg a \wedge !(\Next b) ))$ & 15 & 3\\
      \hline
      3 & $\Always_{[2, \infty)} ( (\neg request \vee  \Eventually_{[4, 5]} grant))$ & 42 & 4\\
      \hline
      4 & $\Eventually_{[1, 2]} p1 \wedge \Eventually_{[1, 2]} p2 \wedge \Eventually_{[1, 2]} p3 \wedge \Eventually_{[1, 2]} p4 \wedge \Eventually_{[1, 2]} p5$ & 49 & 3\\
      \hline
  \end{tabular}  
  } 
  \caption{Comparison of time taken by \tool, and \MightyL\ for translation of $\mitl$
  formulae to \GTAfull\ (timed automata respectively).}
\label{table:translation-time}
\end{table}

\subsection{\bf Engineering optimizations.}

We have implemented several optimizations to improve performance, of which we discuss one
such key optimization here.  First, in addition to the locations of each automaton, the
global state in our translation also stores the output of each automaton in the array of
shared variables.  Each shared variable doubles the global state space of the network.
However, at any point in the sequential execution, the subsequent behaviour of the network
depends only on the outputs of the previous automaton which are the direct dependencies of
the remaining automata in the round.  Therefore, once the last parent of an automaton
reads a shared variable, its value can be discarded and the variable reused to store the
output of a different automaton.  We use static liveness analysis to determine when shared
variables become dead and reuse them, reducing the number of auxiliary variables.  This
optimization minimizes the state size and the overall state space of the zone graph.

\subsection{Ablation Study}
\label{sec:app:ablation}

To better understand the contribution of individual improvements in \tool, we perform
ablation studies by selectively disabling them and measuring their impact on performance.
For the ablation study, we consider (a) one of our theoretical contributions, namely
\emph{sharing of predictions}, (b) \emph{specialized automata for the $\Eventually$
operator} and (c) finally, an engineering optimization, namely \emph{memory reuse}
discussed above.

For each feature, we compare the performance of \tool\ with the feature, \tool\ with the
feature turned off, and the state-of-the-art tool \MightyL. Our results demonstrate that
each of these improvements plays a critical role in enabling scalability, often yielding exponential improvements.

\paragraph*{Sharing Predictions.}
In the absence of prediction sharing, each until subformula in the network of automata
independently guesses the time of satisfaction.  As a result, multiple non-deterministic
predictions must be resolved during reachability analysis, significantly increasing the
complexity of the state space.  With prediction sharing enabled, a single dedicated
automaton generates the prediction, and the remaining components deterministically consume
this information.

This reduces the overall non-determinism in the system and leads to a much smaller product
automaton.  Empirical results confirm this effect: benchmark families involving multiple
until operators show exponential gains in verification time when prediction sharing is
enabled.

We depict the results in \Cref{fig:ablation-plots} (left) for $\phi(n) = 
\bigwedge_{i=1}^{n} p \U_{[0, i]} q$. 

\begin{figure}[htbp]
    \centering
    \begin{minipage}{0.3\textwidth}
        \centering
        \includegraphics[scale=0.16]{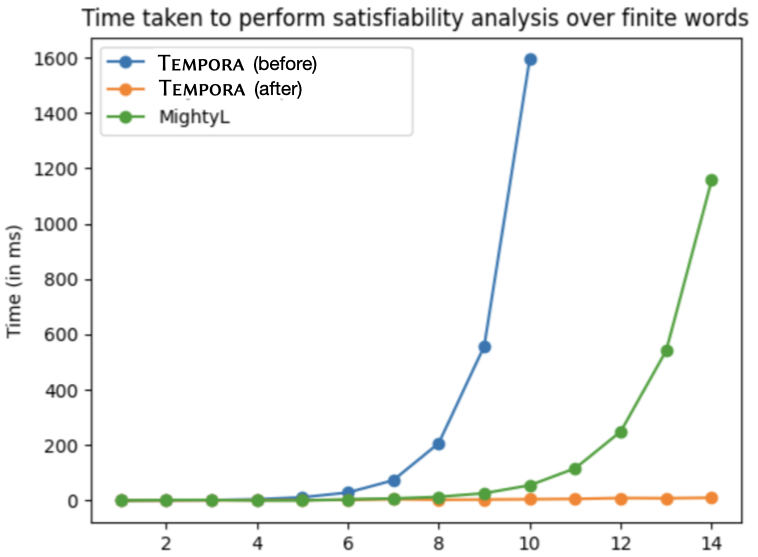}  
    \end{minipage}
    \hspace{0.05in}
    \begin{minipage}{0.3\textwidth}
      \centering
      \includegraphics[scale=0.16]{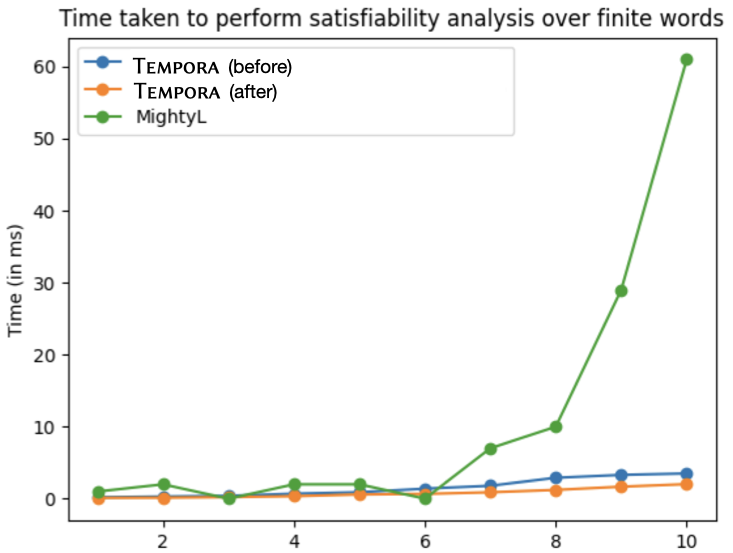}  
    \end{minipage}
    \hspace{0.05in}
    \begin{minipage}{0.3\textwidth}
      \centering
      \includegraphics[scale=0.16]{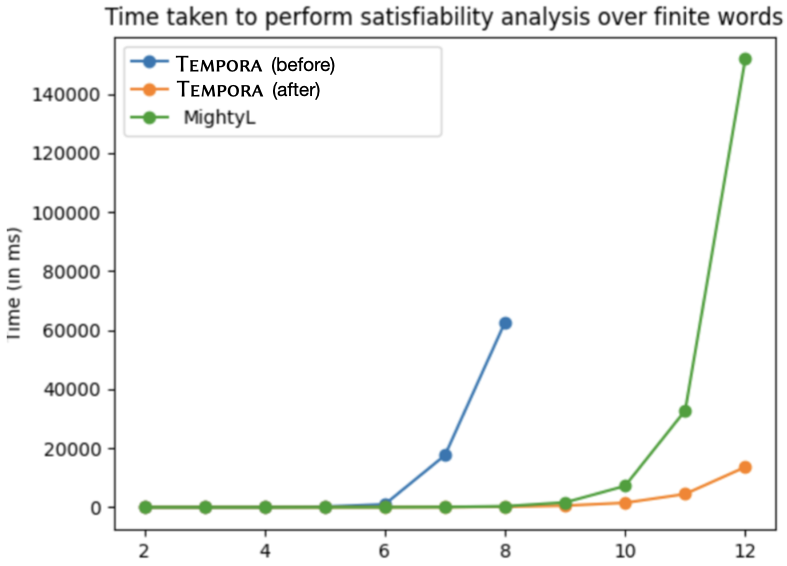}  
    \end{minipage}
  \caption{Comparison of performance of \tool\ with and without features. 
  The left figure shows the impact of sharing predictions. 
  The middle figure shows the impact of specialized automata for $\Eventually$ operator.
  The right figure shows the impact of memory reuse.
  The x-axis represents the parameter in the formula, while the y-axis shows the time taken in milliseconds.}
    \label{fig:ablation-plots}
  \end{figure}
  
\medskip\noindent {\bf Specialized Automaton for $\Eventually$}
While the eventually operator $\F_I p$ can be encoded as $\true \U_I p$, this transformation
can be an overkill for untimed or simple timing intervals like $[0,\infty)$.  \tool\
implements specialized automata for such cases that are smaller and structurally simpler
than general-purpose until-based encodings.

For instance, the automaton for $\F_{[0,a]} p$ simply waits up to $a$ time units for $p$ to
become true, without introducing intermediate obligations or control states.  These
specialized constructions avoid unnecessary non-determinism and reduce clock usage.  Our
benchmarks show that formulas involving eventualities with small intervals benefit
substantially from this, often reducing the number of locations and transitions in the
resulting automaton by an order of magnitude.  The results are shown in
\Cref{fig:ablation-plots} (middle) for the formula $\phi(n) = \bigwedge_{i=1}^N \F_{[0,
i]} p$.

\medskip\noindent {\bf Reusing Memory}
The memory optimization addresses inefficiencies in how truth values of atomic
propositions are stored during automaton execution.  Consider the formula $\phi(n) =
\bigwedge_{i=1}^n \X p_i$, where each $\X p_i$ depends only on the value of $p_i$ at the
previous timestamp.  In a naive implementation, each $p_i$ is assigned a unique index in
the state array, even though their \emph{lifetimes} do not overlap.

\tool\ incorporates a live-range analysis to reuse memory indices once a variable becomes
\emph{dead} — i.e., no longer accessed in the remaining evaluation of the round.  In the
above example, all $p_i$ can safely share a single index.  This drastically reduces the
size of the memory array and thus the number of reachable states.  Our experiments
demonstrate that this reuse yields exponential improvements in model-checking time for
formulas with long chains of independent next operators, as depicted in
\Cref{fig:ablation-plots} (right).

\subsection{\bf Benchmarks and Extended Experimental Results}

We next briefly describe the additional benchmarks we use to evaluate our performance.

\medskip\noindent{\bf Our benchmarks.}
We constructed several $\MTLfp$ formulae to show the benefits of our improvements.  To
show the benefit of sharing clocks, we use a disjunction of multiple $\Next$ operators:
$\G_{[0, 2]}(\bigvee_{i=1}^{n} \Next_{[2i, 2i + 1]} p_i)$, where each Next automaton uses a
single shared clock.  A conjunction of multiple $\U$ operators with the same left and
right children: $\Always(\bigwedge_{i=1}^{n} (p_i \rightarrow a \U_{[0, i] } b)) \wedge
\Always (\bigvee_{i=1}^{n} p_i)$ shows the benefit of using sharer automata, where each Until
automaton uses a common sharer automaton, as well as using a smaller 2 state automaton for
Until.  The power of using deterministic automata for top level operators can be seen for
a formula with conjunction of multiple $\U$ operators having general intervals at the top
level: $\bigwedge_{i=1}^{n} (p \U_{[2i, 2i + 1]} q)$.  We show the benefit of reusing shared
variables for formulae where atomic propositions have short live ranges:
$\bigvee_{i=1}^{n}\Always_{[2, \infty)} (p_i)$.  In all these formulae, we try to ensure that
they cannot be easily rewritten to a simpler formula, and are not trivially
satisfiable/unsatisfiable.

\medskip\noindent{\bf Acacia Benchmarks.}
Each LTL formula in the original Acacia benchmark is of the form $\psi \iff \Always
\Eventually acc$, where $\psi$ is an LTL specification.  We modify these examples by adding
intervals to temporal operators of $\psi$.  For several of these specifications, we add
intervals to the operators specifying occurrences of atomic propositions, which adds
stricter constraints for the atomic propositions to occur within an interval.  For some of
the benchmarks, we also add general intervals to nested $\U$ operators to add complexity
to the formula.

\medskip

The complete experimental results are presented in Table A (for finite words) and Table B
for infinite words.

\clearpage\noindent\hspace{-25mm}    
\smash{\raisebox{-240mm}{\includegraphics[width=1.3\textwidth]{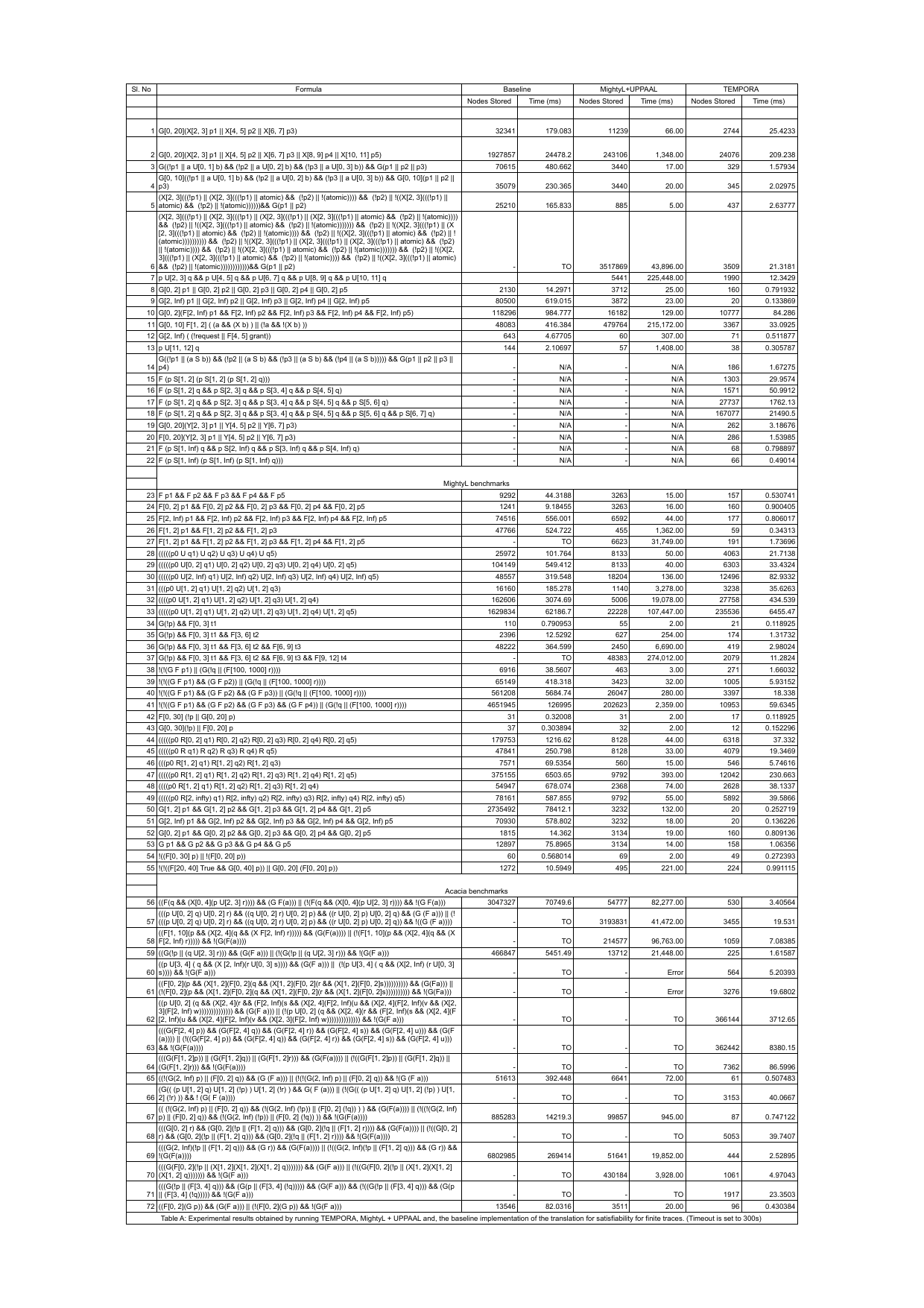}}}

\clearpage\noindent\hspace{-2cm}    
\smash{\raisebox{-240mm}{\includegraphics[width=1.3\textwidth]{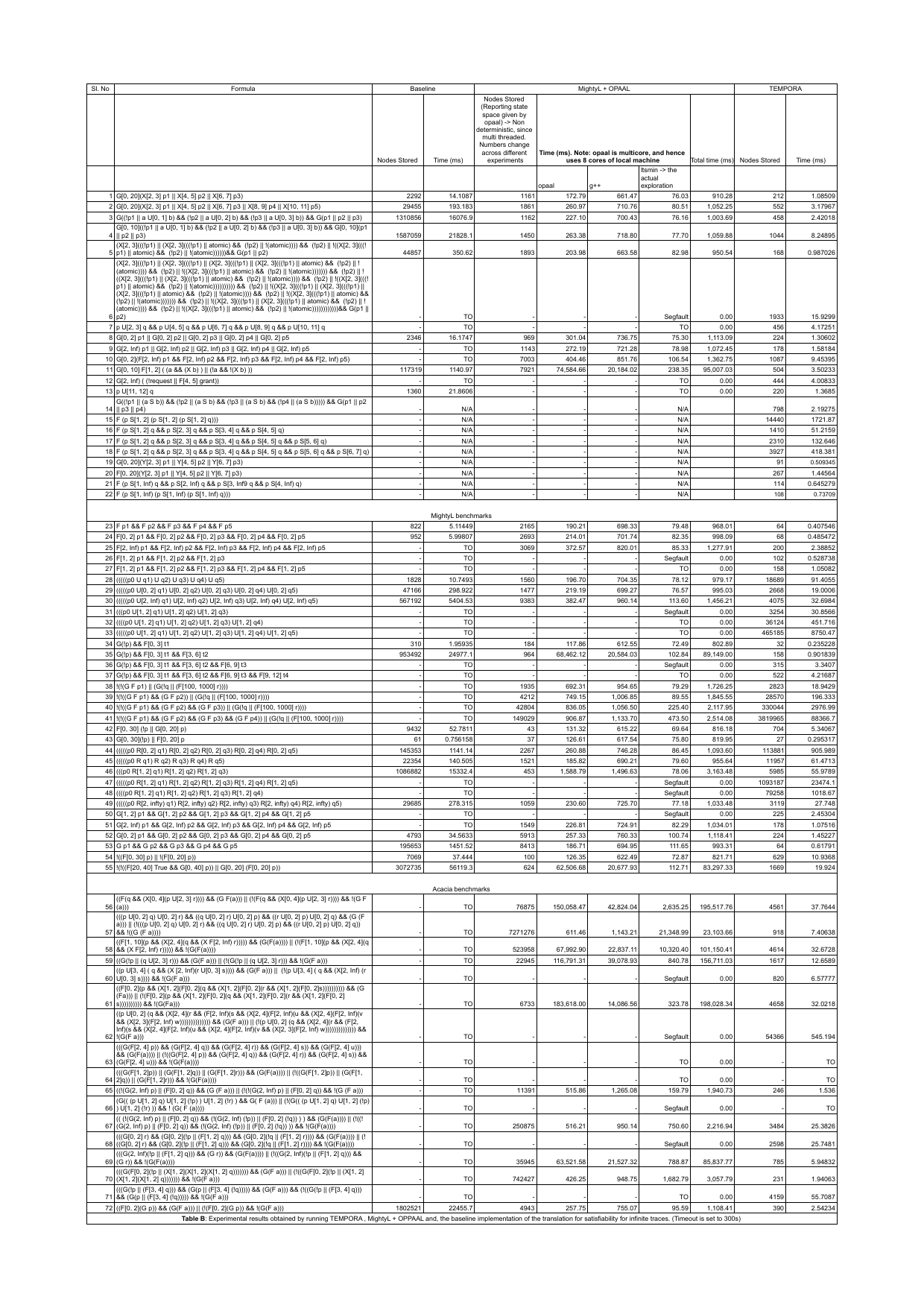}}}

\end{document}